\newcommand\doubleplus{+\kern-1.3ex+\kern0.8ex}
\newcommand{\vect}{\overrightarrow}
\let\Alpha=A
\let\Beta=B
\let\Epsilon=E
\let\Zeta=Z
\let\Eta=H
\let\Iota=I
\let\Kappa=K
\let\Mu=M
\let\Nu=N
\let\Omicron=O
\let\omicron=o
\let\Rho=P
\let\Tau=T
\let\Chi=X
\newcommand\xlrsquigarrow{
  \mathrel{%
    \vcenter{%
      \hbox{%
        \begin{tikzpicture}
          \path[
            draw,
            >={Implies[]},
            <->,
            double distance between line centers=1.5pt,
            decorate,
            decoration={
              zigzag,
              amplitude=0.7pt,
              segment length=3pt,
              pre length=4pt,
              post length=4pt,
            },
          ]   
            (0,0) -- (14pt,0);
        \end{tikzpicture}%
      }%
    }%
  }%
}
\NewDocumentCommand{\optionalParens}{s m m}{
    \IfBooleanTF{#2}{(#3)}{\IfBooleanTF{#1}{~#3}{#3}}
}
\NewDocumentCommand{\apply}{m O{} s m t!}{
    \IfBooleanTF{#5}
    {#1#2}
    {#1#2 \optionalParens*{#3}{#4}}
}
\newcommand{\name}[1]{\apply{\textsf{#1}}}
\NewDocumentCommand{\applytwo}{m O{} s m s m}{ 
   #1#2~\optionalParens{#3}{#4}~\optionalParens{#5}{#6}
}
\def\THEN{\kwfont{then}}
\def\ELSE{\kwfont{else}}
\NewDocumentCommand{\ifThenElse}{m m m}{
  \apply{\kwfont{if}}{#1}~\apply{\THEN}{#2}~\apply{\ELSE}{#3}
}
\NewDocumentCommand{\while}{m m}{
  \apply{\kwfont{while}}{#1}~\apply{\kwfont{do}}{#2}
}
\newcommand{\IN}{\apply{\kwfont{in}}}
\NewDocumentCommand{\letin}{m m m t!}{
  \kwfont{let}~#1 ≔ #2 \IN{#3}
}
\definecolor{ltblue}{rgb}{0,0.4,0.4}
\definecolor{dkblue}{rgb}{0,0.1,0.6}
\definecolor{dkgreen}{rgb}{0,0.35,0}
\definecolor{dkviolet}{rgb}{0.3,0,0.5}
\definecolor{dkred}{rgb}{0.5,0,0}
  \declaretheorem[name=Theorem,numberwithin=section
                 ,refname={theorem,theorems}
                 ,Refname={Theorem,Theorems}        ]
                 {theorem}
  \declaretheorem[sibling=theorem]{lemma}
  \declaretheoremstyle[headfont=\normalfont]{normalhead}
  \newtheorem{corollary}[theorem]{Corollary}
  \newtheorem{definition}[theorem]{Definition}
\newtheorem{proposition}[theorem]{Proposition}
\newtheorem*{theorem*}{Theorem}
\newtheorem*{proposition*}{Proposition}
\newtheorem*{definition*}{Definition}
\newtheorem*{lemma*}{Lemma}
\newtheorem*{corollary*}{Corollary}
\newtheorem*{axiom*}{Axiom}
\newtheorem*{thesis*}{Thesis Statement}
\crefname{procedure}{Procedure}{Procedures}
\crefname{question}{Question}{Questions}
\newcommand{\kwfont}[1]{\textup{\texttt{#1}}} 
\NewDocumentCommand{\vectwo}{smm}{
    \IfBooleanTF{#1}
    {\begin{pmatrix} #2 \\ #3 \end{pmatrix}}
    {\left( \begin{smallmatrix} #2 \\ #3 \end{smallmatrix} \right)}
}
\NewDocumentCommand{\rowtwo}{smm}{
    \IfBooleanTF{#1}
    {\begin{pmatrix} #2 & #3 \end{pmatrix}}
    {\left( \begin{smallmatrix} #2 & #3 \end{smallmatrix} \right)}
}
\def\sourceColor{RoyalBlue}
\def\targetColor{Orange}
\def\intermediateColor{Green}
\def\hardwareColor{Purple}
\newcommand{\source}[1]{\textsf{\textcolor{\sourceColor}{#1}}\xspace}
\newcommand{\target}[1]{\textbf{\textcolor{\targetColor}{#1}}\xspace}
\newcommand{\intermediate}[1]{\textit{\textcolor{\intermediateColor}{#1}}\xspace}
\newcommand{\hardware}[1]{\textit{\textcolor{\hardwareColor}{#1}}\xspace}
\newcommand{\mathsource}[1]{ {\color{\sourceColor}{#1}} }
\newcommand{\mathtarget}[1]{\mathbf{\color{\targetColor}{#1}}\xspace}
\newcommand{\mathintermediate}[1]{\mathit{\color{\intermediateColor}{#1}}\xspace}
\newcommand{\mathhardware}[1]{\mathit{\color{\hardwareColor}{#1}}\xspace}
\newcommand{\BEHAVIOR}{\ensuremath{B}}
\newcommand{\behav}{\name{\BEHAVIOR}*}
\newcommand{\sbehav}{\name{\textcolor{\sourceColor}{\BEHAVIOR}}*}
\newcommand{\tbehav}{\name{\textcolor{\targetColor}{\BEHAVIOR}}*}
\newcommand{\ibehav}{\name{\textcolor{\intermediateColor}{\BEHAVIOR}}*}
\newcommand{\sourcecolor}{\source{blue}, \source{sans-serif}}
\newcommand{\targetcolor}{\target{orange}, \target{bold}}
\newcommand{\neutralcolor}{\textit{black}, \textit{italic}}
\NewDocumentCommand{\compile}{o}{
  \IfValueTF{#1}{#1 \! \downarrow}
                { \downarrow }
}
\newcommand{\ccompile}[1]{\ll \!\! #1 \! \gg}
\NewDocumentCommand{\compileFromTo}{o m m}{ \compile[#1]^{#2}_{#3} }
\NewDocumentCommand{\brelates}{O{} m m}{
  #2 \mapsto#1 #3
}
\NewDocumentCommand{\nbrelates}{O{} m m}{
  #2 \not\mapsto#1 #3
}
\newcommand{\IFSM}{\source{\textsf{IFSM}}}
\newcommand{\CPU}{\target{\textsf{CPU}}}
\newcommand{\Attack}{𝒜}
\newcommand{\OUTPUT}{\apply{\textsf{OUTPUT}}}
\newcommand{\IMP}{\intermediate{\textsf{IMP}}}
\newcommand{\ToyC}{\mathsf{\mathsource{{Toy}^{C}}}}
\newcommand{\ToyA}{\mathtarget{\textsf{Toy}^{\textsf{A}}}}
\newcommand{\ToyH}{\mathhardware{\textsf{Toy}^{\textsf{H}}}}
\newcommand{\LOne}{\mathsource{1}}
\newcommand{\LTwo}{\mathintermediate{2}}
\newcommand{\LThree}{\mathtarget{3}}
\newcommand{\SKIP}{\textsf{SKIP}}
\newcommand{\RETURN}{\textsf{RETURN}}
\newcommand{\HALT}{\textsf{HALT}}
\newcommand{\jmpz}{\applytwo{\textsf{JMPZ}}}
\newcommand{\dom}{\apply{\textsf{dom}}}
\newcommand{\overstep}[1]{\overset{#1}{→}}
\newcommand{\oversteps}[1]{\overset{#1}{→^∗}}
\newcommand{\true}{\textsf{true}}
\newcommand{\false}{\textsf{false}}
\newcommand{\DO}{\textsf{do}}
\newcommand{\AS}{\textsf{as}}
\NewDocumentCommand{\doAs}{s m m m}{
  \IfBooleanTF{#1}{
      \apply{\DO}*{#2}~\AS~#3.#4
    }{
      \apply{\DO}{#2}~\AS~#3.#4
    }}
\newcommand{\java}{\mintinline{java}}
\newcommand{\C}{\mintinline{c}}
\newenvironment{ottdefnblock}[3][]{ \framebox{\mbox{#2}} \quad #3 \\[0pt]}{}
\newcommand{\ottnt}[1]{\mathit{#1}}
\newcommand{\ottmv}[1]{\mathit{#1}}
\newcommand{\ottkw}[1]{\mathbf{#1}}
\newcommand{\ottsym}[1]{#1}
\newenvironment{IMPdefnblock}[3][]{ \framebox{\mbox{#2}} \quad #3 \\[0pt]}{}
\newenvironment{Cdefnblock}[3][]{ \framebox{\mbox{#2}} \quad #3 \\[0pt]}{}
\begin{document}

\title{{ \normalsize
Distribution Statement ``A'' (Approved for Public Release, Distribution Unlimited)} \\ Weird Machines as Insecure Compilation\thanks{This material is based upon work supported by the United States Air Force and DARPA under Contract No. FA8750-15-C-0124.  Any opinions, findings and conclusions or recommendations expressed in this material are those of the author(s) and do not necessarily reflect the views of the United States Air Force and DARPA.}
}

\author{
  \IEEEauthorblockN{Jennifer Paykin
                    Eric Mertens,
                    Mark Tullsen,
                    Luke Maurer, \\
                    Beno\^{i}t Razet,
                    Alexander Bakst, and
                    Scott Moore
                   }

\IEEEauthorblockA{
  Galois, Inc. \\
  Portland, OR, USA 
}
\IEEEauthorblockA{
  first.last@galois.com
}
}

\maketitle

\begin{abstract}

Weird machines---the computational models accessible by exploiting security vulnerabilities---arise from the difference between the model a programmer has in her head of how her program should run and the implementation that actually executes. Previous attempts to reason about or identify weird machines have viewed these models through the lens of formal computational structures such as state machines and Turing machines. But because programmers rarely think about programs in this way, it is difficult to effectively apply insights about weird machines to improve security.

We present a new view of weird machines based on techniques from programming languages theory and secure compilation. Instead of an underspecified model drawn from a programmers' head, we start with a program written in a high-level source language that enforces security properties by design. Instead of state machines to describe computation, we use the well-defined semantics of this source language and a target language, into which the source program will be compiled. Weird machines are the sets of behaviors that can be achieved by a compiled source program in the target language that cannot be achieved in the source language directly. That is, exploits are witnesses to insecure compilation.

This paper develops a framework for characterizing weird machines as insecure compilation, and illustrates the framework with examples of common exploits. We study the classes of security properties that exploits violate, the compositionality of exploits in a compiler stack, and the weird machines and mitigations that arise.

\end{abstract}

\begin{IEEEkeywords}
Exploits, weird machines, secure compilation
\end{IEEEkeywords}

\section{Introduction}

Exploits serve an important role in security research: they witnesses the insecurity of a system by causing it to behave inappropriately.
However, an exploit by itself fails to answer many important questions about the system under consideration.
How severe is the vulnerability in question?
Does it represent a new class of attacks, or is it a member of a known class of attacks?
Is the vulnerability an implementation error, a design flaw, or an emergent property of the entire system?
Can it be repaired and if so, how effective is a proposed mitigation?
Without a systematic approach to understanding exploits, it is difficult to evaluate the importance of any particular vulnerability or to generalize lessons learned to improve security more broadly.

In the exploit community, practitioners describe hacking as an exercise in ``programming the \emph{weird machine}'' of a system. A weird machine is the latent computational machine exposed by a vulnerable program that can be repurposed by an attacker to achieve their goals~\citep{Bratus2009,Bratus2011,Bratus2017}. One of the most stark examples of weird machines in practice is return-oriented programming (ROP), where attackers exploit a program by overwriting the stack with a sequence of return addresses that invoke fragments of the original binary to achieve a desired effect~\citep{Shacham2007}.

Despite the intuitive appeal of weird machines, it has proven challenging to provide a formal definition that can be consistently applied to a variety of systems and vulnerabilities~\citep{Vanegue2014,Dullien2018}.
\citet{Dullien2018} defines weird behavior as the difference between two state machines---an \emph{intended finite state machine} (IFSM) corresponding to the model that the programmer has in her head when writing the program and an
implementation that attempts to realize the IFSM. A weird state in the implementation is one that does not correspond to a state in the IFSM. A weird machine is the collection of computations reachable from a weird state.

But state machines obscure the abstractions present in the source program, which might be enforced by language features or data structures.
Indeed, \citet{Bratus2017} argue that weird machines arise exactly when an attack causes a program to encounter an unexpected program state by violating the program's expected abstractions.  Furthermore, the state machine approach does not explain how the IFSM and the implementation are related, or how to know if the implementation is sound. 

In the study of programming languages, the relationship between a high-level representation of a program and a low-level implementation is given as a \emph{compiler}. The field of \emph{secure compilation}~\citep{Abadi1999,Patrignani2018,Abate2019} studies when and how security properties are preserved by a compiler. Historically, secure compilation referred to a compiler satisfying \emph{full abstraction}, the property that two programs are equivalent in the source language if and only if their compilations are equivalent in the target language. If a compiler is fully abstract, attacks on the compiled version of the program should only be as strong as attacks on the original high-level version, and so it is sufficient for a programmer to reason about security properties at the source code level, without worrying about insecurity introduced during compilation. More recently, researchers have proposed \emph{robust property preservation} as an alternative characterization of a compiler security that says a compiler is secure if it preserves the security of behaviors of source programs~\citep{Abate2019,Patrignani2017}.

Our contribution in this work is to provide a new formal definition of  weird machines based on secure compilation and a framework for reasoning about programs, vulnerabilities, and exploits. We define the exploits of a vulnerable source-language component $\source{V}$ to be the target-language attacker contexts $\target{A}$ that are counterexamples to secure compilation between $\source{V}$ and its compilation $ \compile[  \source{V}  ] $. Formally, the behavior of an exploit linked with this compilation cannot be simulated by any context in the source language. The weird machine of $\source{V}$ is the class of behaviors arising from exploits of $\source{V}$.\footnotemark

\footnotetext{
    Throughout the paper we use \sourcecolor{} text to refer to meta-variables and notations in the source language, \targetcolor{} text for meta-variables and notations in the target language, and \neutralcolor{} text for meta-variables and notations of either language. We suggest viewing or printing the paper in color for maximum readability.
}

With the framework that arises,
we model a number of weird machines and exploits including return-oriented and data-oriented programming~\citep{Shacham2007,Chen2005}, access control in Java~\citep{Abadi1999}, information flow control~\citep{Sabelfeld2003}, and timing side-channel attacks (\cref{sec:mainideas,sec:robust}).
We prove that exploits are exactly the contexts that violate robust properties of behaviors~\citep{Abate2019} and identify several other classes of exploits that violate sub-classes of security properties (\cref{sec:robust}). 

We prove that exploits propagate up and down a compiler stack under certain conditions, so exploits in a small part of a compiler stack can be understood as exploits of a larger system~(\cref{sec:compositionality}).
We show that our approach generalizes Dullien's state machine formalization, but focuses on the behavior of the attack rather than the mechanism of the attack~(\cref{sec:dullien}). For example, C programs using undefined behavior in a safe way are not considered exploits in our framework, but are in the prior work.
Finally, we discuss how our framework can highlight the expressiveness of certain exploits, their underlying causes, and their mitigations~(\cref{sec:discussion}).

\section{Main ideas}
\label{sec:mainideas}
In this section we develop a formal framework of exploits and weird machines. This framework considers two programming languages---a source language $ \source{Source} $ and a target language $ \target{Target} $---connected via a compiler $ \compile $. The pieces of this framework, described below, are summarized in \cref{fig:framework}.



\begin{figure}
\small
\begin{tabular}{lll}
  \toprule
  language          & $ \source{Source} $      & $ \target{Target} $  \\
  \midrule
  whole programs    & $\source{P}^\source{S}$         & $\target{P}^\target{T}$     \\
  behaviors         & $\source{b}^\source{S}  \in   \mathsource{ \mathcal{B} }^\source{S} $
                    & $\target{b}^\target{T}  \in   \mathtarget{ \mathcal{B} }^\target{T} $    \\
  whole program semantics
                    & $ \sbehav{ \source{P}^\source{S} }   \triangleq  \source{b}^\source{S}$
                    & $ \tbehav{ \target{P}^\target{T} }   \triangleq  \target{b}^\target{T}$    \\
  components        & $\source{U}^\source{S}$           & $\target{U}^\target{T}$       \\
  contexts          & $\source{C}^\source{S}$        & $\target{C}^\target{T}$    \\
  linking           & $\source{C}^\source{S}  \ottsym{[}  \source{U}^\source{S}  \ottsym{]} ≜ \source{P}^\source{S}$
                    & $\target{C}^\target{T}  \ottsym{[}  \target{U}^\target{T}  \ottsym{]} ≜ \target{P}^\target{T}$             \\
  attacker          &                   & $\target{A}  \in   \mathtarget{\Attack}  ⊆ \{\target{C}^\target{T}\}$ \\
  \bottomrule \\
  \toprule
  compiler  &   \multicolumn{2}{l}{$ \compileFromTo{  \source{Source}  }{  \target{Target}  } $ (or just $ \compile $)} \\
  \midrule
  behavior relation & \multicolumn{2}{l}{$ \brelates{ \source{b}^\source{S} }{ \target{b}^\target{T} } $} \\
  whole programs    & \multicolumn{2}{l}{$ \compile[  \source{P}^\source{S}  ]  ≜ \target{P}^\target{T}$} \\
  components        & \multicolumn{2}{l}{$ \compile[  \source{U}^\source{S}  ]  ≜ \target{U}^\target{T}$} \\
  contexts          & \multicolumn{2}{l}{$ \compile[  \source{C}^\source{S}  ]  ≜ \target{C}^\target{T}$} \\
  modularity        & \multicolumn{2}{l}{$ \tbehav{   \compile[  \source{C}^\source{S}  ]  [   \compile[  \source{U}^\source{S}  ]   ]  } $ 
                      $=  \tbehav{  \compile[   \source{C}^\source{S} [  \source{U}^\source{S}  ]   ]  } $} \\
  \bottomrule
\end{tabular}
\caption{Summary and notation of the framework.}
\label{fig:framework}
\end{figure}

\subsubsection*{Behaviors, contexts and components}

We assume each language is associated with a set of \emph{behaviors} $ \mathcal{B} $, which we denote with the meta-variables $\ottnt{b}  \in   \mathcal{B} $. The behavior of a \emph{whole program} $P$, written $ \behav{ P } $, describes its semantics. For example:
\begin{itemize}
    \item In C, the behavior of a program $P$ is a set of input-output traces corresponding to possible executions of the program.
    \item In functional languages, the behavior of an expression is its value  and/or termination behavior.
    \item In SQL (\cref{example:SQL}), the behavior of a query $P$ is a function from a relational database to the table corresponding to the result of the query.
\end{itemize}
Frequently, but not always, the set of behaviors has the form $ \mathcal{B} = \mathcal{P}(  \textsf{Trace}  ) $, where $ \textsf{Trace} $ is a set of traces.

In practice programs are often modular, making use of external libraries or relying on input from untrusted sources (e.g. a server accepting input or a program relying on a config file). Exploits arise when some piece of a program $U$, called a \emph{component}, is \emph{linked} with an adversarial surrounding environment $C$, called a \emph{context}. This linking operation forms a whole program $C  \ottsym{[}  U  \ottsym{]}$. 
\begin{itemize}
    \item In C, a component is a compilation unit that is linked with a context comprising other compilation units (perhaps including a \mintinline{c}{main} procedure).
    \item In Java, components and contexts are sets of classes that refer to each other.
  \item In functional languages, a component is an open expression and a context is an expression with a hole that can be filled with the component.
    \item In languages with I/O, the context may also include an environment that provides input to a standalone program component.
\end{itemize}
The result of linking may not always be defined, 
but for the sake of simplicity we assume that whenever $C  \ottsym{[}  U  \ottsym{]}$ occurs in this paper, it is well-defined.


\subsubsection*{Compilers}

A compiler maps programs in the source language $ \source{Source} $ to programs in the target langauge $ \target{Target} $. For a whole source program $\source{P}^\source{S}$ we write $ \compile[  \source{P}^\source{S}  ] $ for its compilation. We say that a compiler is \emph{modular} if compilation is also defined on source components $\source{U}^\source{S}$ and source contexts $\source{C}^\source{S}$ such that $ \tbehav{  \compile[   \source{C}^\source{S} [  \source{U}^\source{S}  ]   ]  } = \tbehav{   \compile[  \source{C}^\source{S}  ]  [   \compile[  \source{U}^\source{S}  ]   ]  } $. Unless otherwise specified, assume that all the compilers occurring in this paper are modular.

Understanding the effect of compilation requires relating behaviors in the source language ($\source{b}^\source{S}$) with behaviors in the target language ($\target{b}^\target{T}$); this relation is written $ \brelates{ \source{b}^\source{S} }{ \target{b}^\target{T} } $ and pronounced ``$\source{b}^\source{S}$ maps to $\target{b}^\target{T}$''. When the source and target behaviors are drawn from the same set ($ \mathsource{ \mathcal{B} }^\source{S}  =  \mathtarget{ \mathcal{B} }^\target{T} $), this relation may just be equality, but in general we will assume very little structure on the relation itself. To see why, consider the following:
\begin{itemize}

    \item $ \mapsto $ need not be functional: Suppose the source language has nondeterministic behaviors, written $\source{b}^\source{S}_{{\mathrm{1}}} ⊕ \source{b}^\source{S}_{{\mathrm{2}}}$, and the target language does not, so that $\brelates{\source{b}^\source{S}_{{\mathrm{1}}} ⊕ \source{b}^\source{S}_{{\mathrm{2}}}}{\target{b}^\target{T}_{{\mathrm{1}}}}$ and $\brelates{\source{b}^\source{S}_{{\mathrm{1}}} ⊕ \source{b}^\source{S}_{{\mathrm{2}}}}{\target{b}^\target{T}_{{\mathrm{2}}}}$.

    \item $ \mapsto $ need not be well-defined on all inputs: in C (as in \cref{sec:undefined}), the undefined behavior $ \source{undef} $ is not related to any target behaviors: $\forall  \target{b}^\target{T}  .\xspace   \nbrelates{  \source{undef}  }{ \target{b}^\target{T} } $.

    \item $ \mapsto $ need not be injective: If the source language has both boolean values and natural numbers and the target language has only integers (as in \cref{sec:undefined}), we will have both $\brelates{\mathsource{\true}}{\mathtarget{1}}$ and
$\brelates{\mathsource{1}}{\mathtarget{1}}$.

    \item $ \mapsto $ need not be surjective: As above, if the source language has booleans and natural numbers but no (negative) integers, then there is no $\source{b}^\source{S}$ such that $\brelates{\source{b}^\source{S}}{\mathtarget{-1}}$.

\end{itemize}
When the behaviors of the source and target language each defined as a set of traces ($ \mathsource{ \mathcal{B} }^\source{S}   \ottsym{=}   \mathcal{P}(  \source{ \textsf{Trace} }^\source{S}  ) $ and $ \mathtarget{ \mathcal{B} }^\target{T}   \ottsym{=}   \mathcal{P}(  \target{ \textsf{Trace} }^\target{T}  ) $), then the relation $ \brelates{ \source{b}^\source{S} }{ \target{b}^\target{T} } $ can be decomposed into a relation $ \brelates{ \source{t}^\source{S} }{ \target{t}^\target{T} } $ on traces (reusing notation):
\small
\begin{align*}
    & \brelates{ \source{b}^\source{S} }{ \target{b}^\target{T} }  ⇔ \\
        &\left(\forall  \source{t}^\source{S}  \in  \source{b}^\source{S}  .\xspace  \exists\xspace  \target{t}^\target{T}  \in  \target{b}^\target{T}  .\xspace   \brelates{ \source{t}^\source{S} }{ \target{t}^\target{T} } \right)
      ∧ \left(\forall  \target{t}^\target{T}  \in  \target{b}^\target{T}  .\xspace  \exists\xspace  \source{t}^\source{S}  \in  \source{b}^\source{S}  .\xspace   \brelates{ \source{t}^\source{S} }{ \target{t}^\target{T} } \right)
\end{align*}
\normalsize


\subsubsection*{Attackers}

A security exploit consists of two parts: a vulnerable source component $\source{V}$ and an attacking target context $\target{A}$ drawn from some set $ \mathtarget{\Attack} $. The set $ \mathtarget{\Attack} $ corresponds to the sorts of attacks being considered.
For example, in the case of data-oriented programming, $ \mathtarget{\Attack} $ may allow arbitrary writes of constants to memory  but cannot affect control flow. In the theoretical language with I/O, $ \mathtarget{\Attack} $ may be allowed to perform input/output with the source component but not actually execute any code.

\begin{figure}
\centering
\includegraphics[width=\columnwidth]{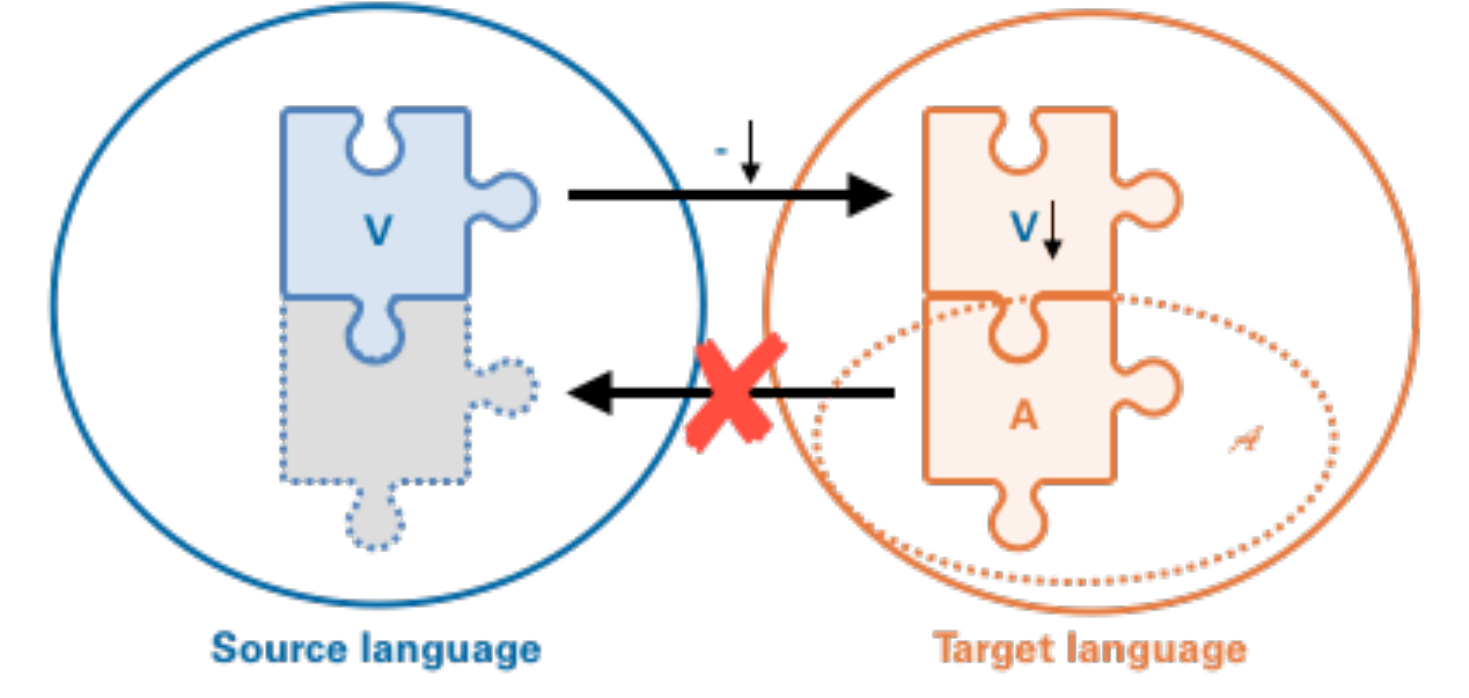}
\caption{An attacker context $\target{A}  \in   \mathtarget{\Attack} $ is in the weird machine of  $\source{V}$ if and only if there is no back-translated context $\source{C}^\source{S}$ such that $\source{C}^\source{S}  \ottsym{[}  \source{V}  \ottsym{]}$ behaves the same as $\target{A}  \ottsym{[}   \compile[  \source{V}  ]   \ottsym{]}$.}
\label{fig:SourceTargetLanguages}
\end{figure}

\subsection{Exploits and weird machines}

An attack $\target{A}  \in   \mathtarget{\Attack} $ invokes \emph{weird behavior} if it interacts with $ \compile[  \source{V}  ] $ in a way that cannot be achieved in the source language. In other words, a context is only an exploit if there is no source context $\source{C}^\source{S}$ such that the behavior of $\source{C}^\source{S}  \ottsym{[}  \source{V}  \ottsym{]}$ is related to the behavior of $\target{A}  \ottsym{[}   \compile[  \source{V}  ]   \ottsym{]}$. This relationship is illustrated in \cref{fig:SourceTargetLanguages}.

\begin{definition} \label{def:WM}
    An \emph{exploit} of a vulnerable source program $\source{V}$ is a target attacker context $\target{A}  \in   \mathtarget{\Attack} $  such that there is no source context $\source{C}^\source{S}$ with $ \brelates{  \sbehav{  \source{C}^\source{S} [  \source{V}  ]  }  }{  \tbehav{  \target{A} [   \compile[  \source{V}  ]   ]  }  } $.
    \[
          \textsf{Exploit}  ^{  \mathtarget{\Attack}  }   \ottsym{(}  \source{V}  \ottsym{)}  \triangleq   \left\{  \target{A}  \in   \mathtarget{\Attack}   \mid  \neg  \exists\xspace  \source{C}^\source{S}  .\xspace   \brelates{  \sbehav{  \source{C}^\source{S} [  \source{V}  ]  }  }{  \tbehav{  \target{A} [   \compile[  \source{V}  ]   ]  }  }   \right\} 
    \]

  The \emph{weird machine} of a vulnerable source program $\source{V}$ is the collection of behaviors arising from exploits of $\source{V}$.
    \[
          \textsf{WM}  ^{  \mathtarget{\Attack}  }   \ottsym{(}  \source{V}  \ottsym{)}  \triangleq   \left\{   \tbehav{  \target{A} [   \compile[  \source{V}  ]   ]  }   \mid  \target{A}  \in    \textsf{Exploit}  ^{  \mathtarget{\Attack}  }   \ottsym{(}  \source{V}  \ottsym{)}  \right\} 
    \]

    When $ \mathtarget{\Attack} $ can be inferred from the context, we will write $ \textsf{Exploit}   \ottsym{(}  \source{V}  \ottsym{)}$ and $ \textsf{WM}   \ottsym{(}  \source{V}  \ottsym{)}$ in place of $  \textsf{Exploit}  ^{  \mathtarget{\Attack}  }   \ottsym{(}  \source{V}  \ottsym{)}$ and $  \textsf{WM}  ^{  \mathtarget{\Attack}  }   \ottsym{(}  \source{V}  \ottsym{)}$ respectively.
\end{definition}


\subsection{Example: SQL injection attack}
\label{example:SQL}

Consider a source language made up of structured SQL queries. The following program allows the user to access information associated with a particular user id:
\begin{ccode}
uid <- getIntegerInput();
SELECT * FROM Users WHERE UserId = uid
\end{ccode}
This code is then compiled to a target language where the SQL query is implemented as a string, processed by an external call to a database.
\begin{ccode}
uid <- getStringInput();
queryDB("SELECT * FROM Users \
         WHERE UserId = " + uid);
\end{ccode}
A SQL injection attack interacts with the compiled version of the code by setting the input \C{uid} to the string \C{"0 OR true"}. This results in a behavior (outputs the entire \mintinline{c}{Users} table) that the source program (which only ever outputs a single user's information) cannot achieve.

\subsection{Example: Finite memory}
\label{example:finite-memory}

Consider a source language with unbounded heap size, and a target language with a maximum heap of size $n$. We define the behavior of programs in both the source and the target languages to be the maximum size of the heap used in execution of the program. The target language can also throw an error, \mintinline{c}{out_of_resources} if the program attempts to allocate more than $n$ references, and there is no source behavior such that $\brelates{n}{\mintinline{c}{out_of_resources}}$. 
Thus, if $\source{V}$ is a source component such that $\target{A}  \ottsym{[}   \compile[  \source{V}  ]   \ottsym{]}$ allocates more than $n$ memory references, then $\target{A}  \in   \textsf{Exploit}   \ottsym{(}  \source{V}  \ottsym{)}$.

\subsection{Example: Access control}
\label{example:access-control}

As recently as Java 8, Java had a security vulnerability whereby attackers could gain access to a class \java{C}'s private field if \java{C} also had an inner class that accessed the field. In Java, inner classes have access to all of their outer class's private data, but the JVM does not have any notion of inner classes. The result is that the private data accessed by the inner class is exposed publicly, as illustrated by \cref{lst:java-example} (adapted from~\citep{Abadi1999}). In this example, source contexts of \java{OuterClass} cannot change the value of \java{flag}, but target level contexts can modify it by invoking the synthesized accessor \java{access$002}.

The Java compiler ensures that attacks of the form $ \compile[  \source{A}  ] $ (where $\source{A}$ is a source Java program) will not be able to access this method, but there is nothing stopping an attacker from manually constructing a JVM class that uses it directly. This class can subvert any security properties of the source program that rely on the privacy of the \mintinline{java}{flag} field.

\begin{listing}
\begin{javacode}
public class OuterClass {
  private boolean flag = false;
  private class InnerClass {
    void setFlag(){
      flag = true;
    }
  }
}
\end{javacode}
\begin{javacode}
public class OuterClass
  extends java.lang.Object{
public OuterClass();
  Code: /* ... */
static boolean access$002(OuterClass, boolean);
  Code:
   0:	aload_0
   1:	iload_1
   2:	dup_x1
   3:	putfield  #1; //Field flag:Z
   6:	ireturn
}
public class OuterClass$InnerClass
  extends java.lang.Object{ /* ... */ }
\end{javacode}
\caption{A Java program (top) uses an inner class that when compiled (bottom) will expose private data.}
\label{lst:java-example}
\end{listing}

\subsection{Undefined behavior}
\label{sec:undefined}

Some of the most well-known exploits derive from the use of \emph{undefined behavior} in C. For example, accessing memory outside of the bounds of an array is undefined according to the C standard~\citep{C18}, which means that the language provides \emph{no} guarantees about the behavior of a compiled program during an execution that would have performed such an access. In practice, rather than emitting code with an arbitrary behavior, a C compiler will emit code with predictable (if surprising) behavior, for example compiling an out-of-bounds read to either
\begin{inparaenum}
\item \label{it:silent} silently read from memory outside of the bounds of the array;
\item throw an error when such an out-of-bounds access occurs; or
\item enter an infinite loop; etc.
\end{inparaenum}
Given a particular compiler, an exploit writer can observe the choice made by the compiler (in practice, almost always \ref{it:silent}) and use the actual behavior of the compiled code for her own purposes. By understanding the compilation strategies a compiler makes for undefined behaviors and the layout of the stack, attackers can craft exploits from syntactically valid but semantically undefined C code fragments. That is, exploits of C programs $\source{V}$ often have the form $ \compile[  \source{A}  ] $ where $\source{A}$ is a C program that gives rise to undefined behaviors when linked with $\source{V}$.

To formalize this, assume there is some C trace behavior $ \source{undef}   \in   \source{ \textsf{Trace} }^\source{S} $. The C compiler CompCert~\citep{leroy2009} in correct in the sense that, if $ \source{undef}   \notin   \sbehav{ \source{P}^\source{S} } $, then $\target{t}^\target{T}  \in   \tbehav{  \compile[  \source{P}^\source{S}  ]  } $ implies that there is som $\source{t}^\source{S}  \in   \sbehav{ \source{P}^\source{S} } $ such that $ \brelates{ \source{t}^\source{S} }{ \target{t}^\target{T} } $.
However, CompCert does not make any claims about the behavior of $ \compile[  \source{P}^\source{S}  ] $ when $\source{t}^\source{S}  \in   \sbehav{ \source{P}^\source{S} }  =  \source{undef} $.
To account for undefined behavior, we say that whenever $ \brelates{  \source{undef}  }{ \target{t}^\target{T} } $, it must either be the case that $\target{t}^\target{T}= \target{undef} $ or there is no $ \target{undef}   \in   \target{ \textsf{Trace} }^\target{T} $, in which case there is no such $\target{t}^\target{T}$.

\subsection{Example: Return-oriented programming}
\label{example:ROP}

\begin{listing}[t]
\begin{ccode}
void echo(void) {
  char[64] buf;
  gets(buf);
  printf("
}

int main() {
  while (1) { echo(); }
}
\end{ccode}
\caption{A program vulnerable to return-oriented programming.}
\label{lst:echo}
\end{listing}

Consider the simple `echo' C program in \cref{lst:echo}. If an attacker provides more than 64 characters to (the compilation of) \mintinline{c}{echo}, the data will overflow onto the rest of the stack, overwriting the return pointer. When the program returns from \mintinline{c}{echo}, instead of returning to the calling function, it will return to the new address in the return pointer, either causing a segfault or continuing with some other code. By carefully choosing which address to write to the return pointer and other locations on the stack, an attacker can launch a return-oriented programming attack on the program, executing a malicious payload.

It is easy to see that return-oriented programming can result in behaviors that were unachievable by a source context, such as causing the program to launch a shell or invoke other system calls.
If the context linked with a component can run arbitrary code, then this behavior may not suffice to show the attack is an exploit, since the source context could have achieved the same effect. Even in this scenario, however, some uses of return-oriented programming will be identified as exploits because they use altered control flow to violate other abstractions in the source program.

\subsection{Exploits violate language abstractions}
\label{sec:secrets}

In order to identify security violations, we use the source language semantics as an oracle for secure behavior.
That is, an attack is only an exploit if it violates some assumptions or \emph{abstractions} of the source language. 

\begin{listing}[t]
\begin{ccode}
/* auth.h */
bool auth();
char guess[8];
\end{ccode}



\begin{ccode}
/* auth.c */
#include "auth.h"

static char secret[8];

static void init_secret() {
  strcpy(secret, "weird");
}

void auth() {
  init_secret();
  if (0 != strcmp(secret, guess)) {
    printf("Wrong password: 
    exit(-1);
  } else {
    printf("Logged in");
  }
}
\end{ccode}
\caption{Authentication function with constant secret.}
\label{lst:auth}
\end{listing}

\begin{listing}[t]
\begin{ccode}
/* main.c */
#include "auth.h"

void before() { /* provided by attacker */ }
void after()  { /* provided by attacker */ }

int main() {
  before();
  auth();
  after();
}
\end{ccode}
\caption{A class of attacker contexts for the authentication function.}
\label{lst:authcontexts}
\end{listing}

\begin{listing}[t]
\begin{ccode}
void before() { strcpy(guess, "12345678"); }
void after()  { return; }
\end{ccode}
\caption{An attack that prints the password.}
\label{lst:attackpassword}
\end{listing}

For example, consider the C function \mintinline{c}{auth} in \cref{lst:auth} that implements an authentication mechanism using the hard-coded password \mintinline{c}{"weird"}. Suppose we wish to consider the security of this function against an attacker linking to it from a larger C program, modeled by the class of attacker contexts given in \cref{lst:authcontexts}.

The attack in \cref{lst:attackpassword} extracts the password using a carefully crafted input that writes its null terminator outside of the \mintinline{c}{guess} array (into the first element of the adjacent \mintinline{c}{secret}). But the null terminator for
\mintinline{c}{guess} is subsequently overwritten when \mintinline{c}{init_secret()} copies \mintinline{c}{"weird"} to \mintinline{c}{secret}. Thus the "string" in \mintinline{c}{guess} is suffixed with the contents of \mintinline{c}{secret}, which will be output on line~\ref{line:printsecret}, revealing the secret: \mintinline{shell}{Wrong password: 12345678weird}.

\begin{listing}[t]
\begin{ccode}
void before() { strcpy(guess, "1234567"); }
void after()  { printf("8weird"); }
\end{ccode}
\caption{A context with the same behavior as \cref{lst:attackpassword} in the source semantics.}
\label{lst:authattackequiv}
\end{listing}

Perhaps unintuitively, this context would \emph{not} be considered an exploit according to our definition, because the context in \cref{lst:authattackequiv} exhibits the same behavior according to the C semantics. The existence of the constant value \mintinline{c}{"weird"} is not protected by a language-level abstraction and so an attacker can easily simulate the behavior of an exploit that accesses the password.

Suppose instead that the password was hidden by a language-level abstraction, such as a different implementation of \mintinline{c}{init_secret} that loads the (possibly different on each execution) secret from some external source (e.g., an access-controlled configuration file). In that case, the context in \cref{lst:attackpassword} would be an exploit, since according to the source-language semantics, no context can access the static buffer \mintinline{c}{secret} or the static function \mintinline{c}{init_secret}.

The security properties of such a component $\source{V}$ can be broken into two parts:
\begin{inparaenum}
\item its security with respect to source language contexts; and 
\item its security with respect to target language contexts.
\end{inparaenum}
In general programmers should take perspective 1 into account when writing code and developing algorithms, since they are already in the mindspace of the source language. But programmers often don't know the details of a compiler or target language, so it is significantly harder to understand perspective 2.

In \cref{sec:relprop}, we discuss an extension of our framework that would identify the attack against \cref{lst:auth} as an exploit by also considering attacks that violate code confidentiality. This extended definition can account for additional exploits under an attacker model where the adversary is assumed not to have access to the program's source code, at the cost of increasing the difficulty of reasoning about program exploitability. However, even with this extension, there are still programs that are intuitively insecure but that our framework says are unexploitable, such as a program that always stores user's passwords in plaintext, violating well-known security best practices.

\section{Robust property violation}
\label{sec:robust}
One way to characterize classes of exploits is to consider what security properties of a source program they can violate. For example, return-oriented programming attacks can change the the control-flow of a program, while data-oriented programming attacks can only alter the data that a program computes on. Similarly, timing side channels may reveal confidential information, but cannot directly change the execution of a program.
In general, showing that an attacker context violates a property that holds of all source programs is sufficient to show that it has invoked a weird behavior.

For a property $\mathsource{ \mathbb{B} }  \subseteq   \mathsource{ \mathcal{B} }^\source{S} $ of source behaviors, we write $ \compile[  \mathsource{ \mathbb{B} }  ] $ for the set of target behaviors satisfying $\mathsource{ \mathbb{B} }$:
  \[
     \compile[  \mathsource{ \mathbb{B} }  ]  ≜  \left\{  \target{b}^\target{T}  \in   \mathtarget{ \mathcal{B} }^\target{T}   \mid  \exists\xspace  \source{b}^\source{S}  \in  \mathsource{ \mathbb{B} }  .\xspace   \brelates{ \source{b}^\source{S} }{ \target{b}^\target{T} }   \right\} .
  \]

\begin{lemma}
  Let $\mathsource{ \mathbb{B} }  \subseteq   \mathsource{ \mathcal{B} }^\source{S} $ be a property of source behaviors such that for all whole source programs $\source{P}^\source{S}$, $ \sbehav{ \source{P}^\source{S} }   \in  \mathsource{ \mathbb{B} }$. 
  If $\source{V}$ is a source component and $\target{A}  \in   \mathtarget{\Attack} $ is a target 
  context such that $ \tbehav{  \target{A} [   \compile[  \source{V}  ]   ]  }   \notin   \compile[  \mathsource{ \mathbb{B} }  ] $, then $\target{A}  \in    \textsf{Exploit}  ^{  \mathtarget{\Attack}  }   \ottsym{(}  \source{V}  \ottsym{)}$.
\end{lemma}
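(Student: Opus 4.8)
The plan is to prove the claim by contradiction, essentially by unfolding \cref{def:WM} and the definition of $ \compile[  \mathsource{ \mathbb{B} }  ] $ and chaining them. Recall that to conclude $\target{A}  \in    \textsf{Exploit}  ^{  \mathtarget{\Attack}  }   \ottsym{(}  \source{V}  \ottsym{)}$ it suffices to establish two things: that $\target{A}  \in   \mathtarget{\Attack} $, which holds by hypothesis, and that there is no source context $\source{C}^\source{S}$ with $ \brelates{  \sbehav{  \source{C}^\source{S} [  \source{V}  ]  }  }{  \tbehav{  \target{A} [   \compile[  \source{V}  ]   ]  }  } $. I would assume toward a contradiction that such a $\source{C}^\source{S}$ does exist and derive a contradiction with the hypothesis $ \tbehav{  \target{A} [   \compile[  \source{V}  ]   ]  }   \notin   \compile[  \mathsource{ \mathbb{B} }  ] $.

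First I would observe that $\source{C}^\source{S} [  \source{V}  ]$ is a whole source program (linking is well-defined by the standing assumption of the paper), so the lemma's hypothesis on $\mathsource{ \mathbb{B} }$ applies to it, giving $ \sbehav{  \source{C}^\source{S} [  \source{V}  ]  }   \in  \mathsource{ \mathbb{B} }$. Abbreviating $\source{b}^\source{S}  \triangleq   \sbehav{  \source{C}^\source{S} [  \source{V}  ]  } $ and $\target{b}^\target{T}  \triangleq   \tbehav{  \target{A} [   \compile[  \source{V}  ]   ]  } $, the assumed context therefore witnesses both $\source{b}^\source{S}  \in  \mathsource{ \mathbb{B} }$ and $ \brelates{ \source{b}^\source{S} }{ \target{b}^\target{T} } $ simultaneously.

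Next I would feed exactly this witness into the definition $ \compile[  \mathsource{ \mathbb{B} }  ]  =  \left\{  \target{b}^\target{T}  \in   \mathtarget{ \mathcal{B} }^\target{T}   \mid  \exists\xspace  \source{b}^\source{S}  \in  \mathsource{ \mathbb{B} }  .\xspace   \brelates{ \source{b}^\source{S} }{ \target{b}^\target{T} }   \right\} $. Since there exists a source behavior in $\mathsource{ \mathbb{B} }$ (namely $\source{b}^\source{S}$) related to $\target{b}^\target{T}$, the membership test is satisfied, so $\target{b}^\target{T}  \in   \compile[  \mathsource{ \mathbb{B} }  ] $, i.e.\ $ \tbehav{  \target{A} [   \compile[  \source{V}  ]   ]  }   \in   \compile[  \mathsource{ \mathbb{B} }  ] $. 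This directly contradicts the hypothesis $ \tbehav{  \target{A} [   \compile[  \source{V}  ]   ]  }   \notin   \compile[  \mathsource{ \mathbb{B} }  ] $. Hence no such $\source{C}^\source{S}$ can exist, and the conclusion follows.

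I do not anticipate a genuine obstacle: the argument is a short chain of definition unfoldings. The only points requiring care are bookkeeping at the seam between the two definitions --- the existential witness demanded by membership in $ \compile[  \mathsource{ \mathbb{B} }  ] $ must be instantiated to precisely the source behavior $ \sbehav{  \source{C}^\source{S} [  \source{V}  ]  } $ produced by the assumed context, and the two occurrences of $ \mapsto $ (one in \cref{def:WM}, one in the definition of $ \compile[  \mathsource{ \mathbb{B} }  ] $) must be recognized as the same relation. It is also worth noting that no structural properties of $ \mapsto $ (functionality, injectivity, surjectivity) are used anywhere, which is consistent with the deliberately weak assumptions the paper places on the behavior relation.
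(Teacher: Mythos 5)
Your proof is correct and is essentially the same argument as the paper's: both unfold the definition of $ \compile[  \mathsource{ \mathbb{B} }  ] $ and use the hypothesis that every whole-program behavior $ \sbehav{  \source{C}^\source{S} [  \source{V}  ]  } $ lies in $\mathsource{ \mathbb{B} }$; the paper merely phrases it directly (no related source behavior exists, hence no context exists) where you phrase it as a contradiction.
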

\begin{proof}
    Since $ \tbehav{  \target{A} [   \compile[  \source{V}  ]   ]  }   \notin   \compile[  \mathsource{ \mathbb{B} }  ] $, there is no source behavior $\source{b}^\source{S}  \in  \mathsource{ \mathbb{B} }$ such that $ \brelates{ \source{b}^\source{S} }{  \tbehav{  \target{A} [   \compile[  \source{V}  ]   ]  }  } $; so there is no context $\source{C}^\source{S}$ such that $ \brelates{  \sbehav{  \source{C}^\source{S} [  \source{V}  ]  }  }{  \tbehav{  \target{A} [   \compile[  \source{V}  ]   ]  }  } $.
\end{proof}

In fact, the violating property need not hold of all source programs, but only those of the form $\source{C}^\source{S}  \ottsym{[}  \source{V}  \ottsym{]}$.

\begin{lemma} \label{lem:WM-RHP1}
    Let $\source{V}$ be a source component, and let $\mathsource{ \mathbb{B} }  \subseteq   \mathsource{ \mathcal{B} }^\source{S} $ be a property of
    source behaviors such that for all source contexts $\source{C}^\source{S}$, it is the
    case that $ \sbehav{  \source{C}^\source{S} [  \source{V}  ]  }   \in  \mathsource{ \mathbb{B} }$.
    If $\target{A}  \in   \mathtarget{\Attack} $ is an attack such that 
    $ \tbehav{  \target{A} [   \compile[  \source{V}  ]   ]  }   \notin   \compile[  \mathsource{ \mathbb{B} }  ] $, then $\target{A}  \in   \textsf{Exploit}   \ottsym{(}  \source{V}  \ottsym{)}$.
\end{lemma}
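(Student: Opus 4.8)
The plan is to derive the result directly from \cref{def:WM}, adapting the proof of the preceding lemma and using the restricted hypothesis on $\mathsource{ \mathbb{B} }$ at precisely the point where that proof appealed to the stronger global assumption. First I would unfold the assumption $ \tbehav{  \target{A} [   \compile[  \source{V}  ]   ]  }   \notin   \compile[  \mathsource{ \mathbb{B} }  ] $ against the definition of $ \compile[  \mathsource{ \mathbb{B} }  ] $. Since $ \compile[  \mathsource{ \mathbb{B} }  ]  =  \left\{  \target{b}^\target{T}  \mid  \exists\xspace  \source{b}^\source{S}  \in  \mathsource{ \mathbb{B} }  .\xspace   \brelates{ \source{b}^\source{S} }{ \target{b}^\target{T} }   \right\} $, this assumption is equivalent to the statement that there is no source behavior $\source{b}^\source{S}  \in  \mathsource{ \mathbb{B} }$ with $ \brelates{ \source{b}^\source{S} }{  \tbehav{  \target{A} [   \compile[  \source{V}  ]   ]  }  } $.

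Next, to establish $\target{A}  \in   \textsf{Exploit}(  \source{V}  )$, I would unfold the definition of $ \textsf{Exploit} $ and show there is no source context $\source{C}^\source{S}$ with $ \brelates{  \sbehav{  \source{C}^\source{S} [  \source{V}  ]  }  }{  \tbehav{  \target{A} [   \compile[  \source{V}  ]   ]  }  } $. I would argue by contradiction: suppose such a $\source{C}^\source{S}$ existed. The key step is that the whole program it produces has exactly the shape $\source{C}^\source{S}  \ottsym{[}  \source{V}  \ottsym{]}$ to which the hypothesis on $\mathsource{ \mathbb{B} }$ applies; hence $ \sbehav{  \source{C}^\source{S} [  \source{V}  ]  }   \in  \mathsource{ \mathbb{B} }$. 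But then $ \sbehav{  \source{C}^\source{S} [  \source{V}  ]  } $ would be a source behavior lying in $\mathsource{ \mathbb{B} }$ and related via $ \mapsto $ to $ \tbehav{  \target{A} [   \compile[  \source{V}  ]   ]  } $, which by the reformulation above witnesses $ \tbehav{  \target{A} [   \compile[  \source{V}  ]   ]  }   \in   \compile[  \mathsource{ \mathbb{B} }  ] $, contradicting the assumption.

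The argument carries no real computational content, so I do not expect a genuine obstacle; the only thing requiring care is the scope of the hypothesis. Unlike the preceding lemma, where $\mathsource{ \mathbb{B} }$ was assumed to contain $ \sbehav{ \source{P}^\source{S} } $ for \emph{every} whole source program, here membership in $\mathsource{ \mathbb{B} }$ is guaranteed only for programs already linked with $\source{V}$. The main subtlety is therefore simply to observe that any candidate context $\source{C}^\source{S}$ refuting the exploit condition automatically yields a whole program of precisely the restricted form $\source{C}^\source{S}  \ottsym{[}  \source{V}  \ottsym{]}$, so the weaker hypothesis is exactly strong enough and no appeal to behaviors of unrelated programs is needed.
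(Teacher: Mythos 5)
Your proof is correct and takes essentially the same approach as the paper: the paper proves the preceding (stronger-hypothesis) lemma by exactly this argument and leaves the present lemma as the observation that the hypothesis is only ever invoked on programs of the form $\source{C}^\source{S}  \ottsym{[}  \source{V}  \ottsym{]}$, which is precisely the adaptation you spell out. Your contradiction phrasing versus the paper's direct contrapositive is an immaterial difference.
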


This result derives from the literature on robust property preserving compilers~\citep{Patrignani2017,Abate2019,Abate2019a}. A compiler satisfies \emph{robust (hyper-) property preservation} (RHP) if, whenever a hyperproperty is preserved by a source program, the compilation of the hyperproperty is preserved by the compilation of the source program.\footnote{For a much deeper discussion of robust property preservation with respect to the bhavior relation, see \citet{Abate2019a}; prior to that, it was assumed that source and target behaviors were always equal.}
\begin{align*}
    ∀ \mathsource{ \mathbb{B} }  \subseteq   \mathsource{ \mathcal{B} }^\source{S} .~ ∀ \source{U}^\source{S}.
        &\left(∀ \source{C}^\source{S}.~  \sbehav{  \source{C}^\source{S} [  \source{U}^\source{S}  ]  }   \in  \mathsource{ \mathbb{B} }\right) ⇒ \\
        &\left(∀ \target{C}^\target{T}.~  \tbehav{  \target{C}^\target{T} [   \compile[  \source{U}^\source{S}  ]   ]  }   \in   \compile[  \mathsource{ \mathbb{B} }  ] \right)
    \tag{RHP}
\end{align*}
\citet{Abate2019} also give a property-free characterization of robust hyperproprety preservation: 
\begin{theorem}[\citet{Abate2019}]
    A compiler satisfies RHP if and only if for all source components $\source{U}^\source{S}$ and target contexts $\target{C}^\target{T}$, there exists a back-translated source context $\source{C}^\source{S}$ such that $ \brelates{  \sbehav{  \source{C}^\source{S} [  \source{U}^\source{S}  ]  }  }{  \tbehav{  \target{C}^\target{T} [   \compile[  \source{U}^\source{S}  ]   ]  }  } $.
\end{theorem}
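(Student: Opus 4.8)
The plan is to prove the biconditional by establishing each implication separately, with the whole argument hinging on a single well-chosen instantiation of the property quantified in (RHP). Throughout, the only fact I need about the recipe $ \compile[  \mathsource{ \mathbb{B} }  ] $ is its defining equivalence: $\target{b}^\target{T}  \in   \compile[  \mathsource{ \mathbb{B} }  ] $ iff there is some $\source{b}^\source{S}  \in  \mathsource{ \mathbb{B} }$ with $ \brelates{ \source{b}^\source{S} }{ \target{b}^\target{T} } $.

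For the direction from the property-free characterization to (RHP), I would fix an arbitrary property $\mathsource{ \mathbb{B} }  \subseteq   \mathsource{ \mathcal{B} }^\source{S} $, a source component $\source{U}^\source{S}$ satisfying the antecedent $\forall  \source{C}^\source{S}.~  \sbehav{  \source{C}^\source{S} [  \source{U}^\source{S}  ]  }   \in  \mathsource{ \mathbb{B} }$, and an arbitrary target context $\target{C}^\target{T}$. Applying the property-free statement yields a back-translated context $\source{C}^\source{S}$ with $ \brelates{  \sbehav{  \source{C}^\source{S} [  \source{U}^\source{S}  ]  }  }{  \tbehav{  \target{C}^\target{T} [   \compile[  \source{U}^\source{S}  ]   ]  }  } $. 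Since $ \sbehav{  \source{C}^\source{S} [  \source{U}^\source{S}  ]  }   \in  \mathsource{ \mathbb{B} }$ by the antecedent and it maps to the target behavior, the defining equivalence of $ \compile[  \mathsource{ \mathbb{B} }  ] $ immediately gives $ \tbehav{  \target{C}^\target{T} [   \compile[  \source{U}^\source{S}  ]   ]  }   \in   \compile[  \mathsource{ \mathbb{B} }  ] $, which is exactly the consequent of (RHP). This direction is pure unfolding of definitions.

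For the converse, assuming (RHP) and deriving the property-free statement, the key move is to instantiate the universally quantified $\mathsource{ \mathbb{B} }$ with the set of behaviors reachable by the given component under \emph{all} source contexts. Concretely, fixing $\source{U}^\source{S}$ and $\target{C}^\target{T}$, I would set $\mathsource{ \mathbb{B} }  \triangleq   \left\{   \sbehav{  \source{C}^\source{S} [  \source{U}^\source{S}  ]  }   \mid  \source{C}^\source{S}  \right\} $. By construction the antecedent $\forall  \source{C}^\source{S}.~  \sbehav{  \source{C}^\source{S} [  \source{U}^\source{S}  ]  }   \in  \mathsource{ \mathbb{B} }$ holds trivially, so (RHP) applies and yields $ \tbehav{  \target{C}^\target{T} [   \compile[  \source{U}^\source{S}  ]   ]  }   \in   \compile[  \mathsource{ \mathbb{B} }  ] $. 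Unfolding the recipe produces some $\source{b}^\source{S}  \in  \mathsource{ \mathbb{B} }$ with $ \brelates{ \source{b}^\source{S} }{  \tbehav{  \target{C}^\target{T} [   \compile[  \source{U}^\source{S}  ]   ]  }  } $; and membership in $\mathsource{ \mathbb{B} }$ means $\source{b}^\source{S}  \ottsym{=}   \sbehav{  \source{C}^\source{S} [  \source{U}^\source{S}  ]  } $ for some source context $\source{C}^\source{S}$, which is precisely the witness the property-free statement demands.

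The main (and essentially only) obstacle is recognizing this instantiation: although (RHP) ranges over every property, it can be \emph{collapsed} onto a single component by taking the property to be that component's own reachability set, which makes the antecedent vacuously true and forces the consequent to say exactly that a back-translation exists. This is the standard pattern for property-free reformulations. Once the instantiation is in place, both directions reduce to mechanical unfolding of the defining equivalence of $ \compile[  \mathsource{ \mathbb{B} }  ] $, and in particular no structural assumptions on the behavior relation $ \mapsto $ (functionality, injectivity, surjectivity, or totality) are required.
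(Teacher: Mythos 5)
Your proof is correct, and its key move---collapsing the property quantifier in (RHP) onto the reachability set $\left\{  \sbehav{  \source{C}^\source{S} [  \source{U}^\source{S}  ]  }   \mid  \source{C}^\source{S}  \right\}$, after which both directions are mechanical unfoldings of the definition of $ \compile[  \mathsource{ \mathbb{B} }  ] $---is exactly the standard argument for this result. Note that the paper itself imports this theorem from \citet{Abate2019} without proof, but its own \cref{lem:WM-RHP1,lem:WM-RHP2} establish the equivalent exploit-based restatement by the same instantiation idea, with \cref{lem:WM-RHP2} using the dual arrangement of your argument: there the property additionally excludes behaviors related to the attack, so that non-membership of the target behavior holds by construction and the exploit assumption discharges the antecedent, whereas in your version the antecedent is trivial and membership in $ \compile[  \mathsource{ \mathbb{B} }  ] $ delivers the back-translation witness.
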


This property-free characterization is exactly the negation of \cref{def:WM}, so we can restate the result in terms of exploits:

\begin{theorem}
    A compiler satisfies RHP if and only if it has no exploits: for all
    source components $\source{U}^\source{S}$, $ \textsf{Exploit}   \ottsym{(}  \source{U}^\source{S}  \ottsym{)} = ∅$.
\end{theorem}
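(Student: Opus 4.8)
The plan is to show that the assertion ``$\textsf{Exploit}^{\mathtarget{\Attack}}(\source{U}^\source{S}) = \emptyset$ for every source component $\source{U}^\source{S}$'' is, after unfolding \cref{def:WM} and clearing the nested negations, literally the property-free characterization of RHP established in the \citet{Abate2019} theorem stated just above. No new mathematical content is required: the entire argument is quantifier bookkeeping plus two appeals to that cited theorem, one in each direction of the biconditional.

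First I would pin down the attacker class, which is the one place a reader could be misled. The equivalence with RHP holds precisely when $\mathtarget{\Attack}$ ranges over \emph{all} target contexts, i.e. $\mathtarget{\Attack} = \{\target{C}^\target{T}\}$, the most permissive model admitted by the framework, where in general $\mathtarget{\Attack} \subseteq \{\target{C}^\target{T}\}$. I would state explicitly that this is the reading of the unadorned $\textsf{Exploit}(\cdot)$ in the theorem. (For a strict sub-class of attackers, $\textsf{Exploit}(\source{U}^\source{S}) = \emptyset$ is strictly weaker than RHP, and the forward direction can fail; this is worth a one-sentence remark so the scope of the claim is not overstated.)

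Next I would unfold the definition. For a fixed $\source{U}^\source{S}$, emptiness of the exploit set says every admissible attacker fails to be an exploit, so by \cref{def:WM} and elimination of the outer negation,
\[
\textsf{Exploit}^{\mathtarget{\Attack}}(\source{U}^\source{S}) = \emptyset \iff \forall \target{A} \in \mathtarget{\Attack}.~ \exists \source{C}^\source{S}.~ \brelates{\sbehav{\source{C}^\source{S}[\source{U}^\source{S}]}}{\tbehav{\target{A}[\compile[\source{U}^\source{S}]]}}.
\]
Substituting $\mathtarget{\Attack} = \{\target{C}^\target{T}\}$ and then quantifying universally over $\source{U}^\source{S}$ produces exactly
\[
\forall \source{U}^\source{S}.~ \forall \target{C}^\target{T}.~ \exists \source{C}^\source{S}.~ \brelates{\sbehav{\source{C}^\source{S}[\source{U}^\source{S}]}}{\tbehav{\target{C}^\target{T}[\compile[\source{U}^\source{S}]]}},
\]
which is the hypothesis of the property-free characterization theorem. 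Applying that theorem in both directions then closes the equivalence with RHP.

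The only genuine subtlety, and the step I would be most careful with, is precisely the combination flagged above: matching the attacker-class quantifier to the ``for all $\target{C}^\target{T}$'' of the cited theorem, and collapsing the double negation that arises from writing out ``$\target{A} \notin \textsf{Exploit}$''. To keep the proof constructively clean and to mirror the phrasing of the source theorem, I would read ``$\textsf{Exploit} = \emptyset$'' directly via the elimination form for set emptiness (``every $\target{A} \in \mathtarget{\Attack}$ admits a back-translating $\source{C}^\source{S}$'') rather than introducing and then discharging a $\neg\neg$. With that reading everything else is routine definitional rewriting, and the biconditional follows immediately from the \citet{Abate2019} characterization.
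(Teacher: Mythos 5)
Your proof is correct, and it is essentially the argument the paper itself gestures at when introducing the theorem: the paper's lead-in observes that the property-free characterization of \citet{Abate2019} ``is exactly the negation of'' \cref{def:WM}, which is precisely your quantifier unfolding. Your explicit handling of the attacker class---fixing $\mathtarget{\Attack} = \{\target{C}^\target{T}\}$ and noting that for a proper subclass exploit-emptiness is strictly weaker than RHP---is a correct and worthwhile clarification that the paper leaves implicit in its ``$\mathtarget{\Attack}$ inferred from context'' convention (one quibble: the direction that fails for a proper subclass is no-exploits $\Rightarrow$ RHP, since RHP $\Rightarrow$ exploit-emptiness survives for any $\mathtarget{\Attack} \subseteq \{\target{C}^\target{T}\}$). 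Where you diverge from the paper's formal proof: rather than invoking the cited characterization as a black box, the paper discharges the two directions via \cref{lem:WM-RHP1} and \cref{lem:WM-RHP2}, which argue directly against the property-quantified statement of RHP; in particular \cref{lem:WM-RHP2} constructs an explicit witness hyperproperty (all behaviors of source programs of the form $\source{C}^\source{S}  \ottsym{[}  \source{V}  \ottsym{]}$, minus those related to the attack's behavior), showing that every exploit violates some robustly satisfied source hyperproperty. The paper's route is self-contained---it essentially replays the proof of the cited characterization---and it buys the exploit-as-robust-property-violation correspondence that the surrounding section reuses; your route is shorter and isolates the content as pure logical bookkeeping, at the cost of outsourcing the mathematical substance to the cited theorem.
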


\cref{lem:WM-RHP1} establishes one direction of this claim, and we present the other for completeness:

\begin{lemma} \label{lem:WM-RHP2}
    If $\target{A}  \in   \textsf{Exploit}   \ottsym{(}  \source{V}  \ottsym{)}$, then there exists some property $\mathsource{ \mathbb{B} }  \subseteq   \mathsource{ \mathcal{B} }^\source{S} $ such that
    $ \tbehav{  \target{A} [   \compile[  \source{V}  ]   ]  }   \notin   \compile[  \mathsource{ \mathbb{B} }  ] $ but for all source contexts $\source{C}^\source{S}$ it is the case that
    $ \sbehav{  \source{C}^\source{S} [  \source{V}  ]  }   \in  \mathsource{ \mathbb{B} }$.
\end{lemma}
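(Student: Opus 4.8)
The plan is to exhibit an explicit witness property and then check the two required conditions directly against the definitions, so that the entire argument is a short unfolding. The natural — and essentially the only — candidate is the \emph{image} of $\source{V}$ under all source contexts,
\[
    \mathsource{ \mathbb{B} }  \triangleq   \left\{  \source{b}^\source{S}  \in   \mathsource{ \mathcal{B} }^\source{S}   \mid  \exists\xspace  \source{C}^\source{S}  .\xspace  \source{b}^\source{S}  \ottsym{=}   \sbehav{  \source{C}^\source{S} [  \source{V}  ]  }   \right\} .
\]
This collects exactly the source behaviors that $\source{V}$ can exhibit when plugged into an arbitrary source context. The second obligation — that $ \sbehav{  \source{C}^\source{S} [  \source{V}  ]  }   \in  \mathsource{ \mathbb{B} }$ for every source context $\source{C}^\source{S}$ — then holds by construction and requires no work.

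The remaining obligation is to show $ \tbehav{  \target{A} [   \compile[  \source{V}  ]   ]  }   \notin   \compile[  \mathsource{ \mathbb{B} }  ] $. First I would unfold the definition of $ \compile[  \mathsource{ \mathbb{B} }  ] $: membership of $ \tbehav{  \target{A} [   \compile[  \source{V}  ]   ]  } $ in this set would demand some $\source{b}^\source{S}  \in  \mathsource{ \mathbb{B} }$ with $ \brelates{ \source{b}^\source{S} }{  \tbehav{  \target{A} [   \compile[  \source{V}  ]   ]  }  } $. By the definition of $\mathsource{ \mathbb{B} }$, any such $\source{b}^\source{S}$ is of the form $ \sbehav{  \source{C}^\source{S} [  \source{V}  ]  } $ for some source context $\source{C}^\source{S}$, so its existence would yield $ \brelates{  \sbehav{  \source{C}^\source{S} [  \source{V}  ]  }  }{  \tbehav{  \target{A} [   \compile[  \source{V}  ]   ]  }  } $. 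This is precisely the relationship whose nonexistence is the defining condition of $\target{A}  \in   \textsf{Exploit}   \ottsym{(}  \source{V}  \ottsym{)}$ in \cref{def:WM}. Hence no witnessing $\source{b}^\source{S}$ can exist, and the target behavior falls outside $ \compile[  \mathsource{ \mathbb{B} }  ] $, completing the argument.

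There is no genuine obstacle here beyond recognizing the correct witness: all the content is in choosing $\mathsource{ \mathbb{B} }$ to be the context-image of $\source{V}$, after which both conditions are immediate consequences of the definitions of $ \compile[  \cdot  ] $ and of $ \textsf{Exploit} $. The only point that deserves a sentence of care is that the definition of $ \compile[  \mathsource{ \mathbb{B} }  ] $ quantifies existentially over $\source{b}^\source{S}  \in  \mathsource{ \mathbb{B} }$, so ruling out membership amounts to ruling out \emph{every} potential related source behavior simultaneously; since each element of $\mathsource{ \mathbb{B} }$ is tagged with a generating context, this reduces exactly to the universally-quantified ``no source context'' clause of the exploit hypothesis. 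This lemma thus supplies the converse direction to \cref{lem:WM-RHP1}, together establishing that having no exploits is equivalent to \textrm{RHP}.
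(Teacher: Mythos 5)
Your proof is correct and follows essentially the same route as the paper: both exhibit the context-image of $\source{V}$ as the witness property. The paper's witness additionally excludes behaviors related to $ \tbehav{  \target{A} [   \compile[  \source{V}  ]   ]  } $ --- an exclusion that is vacuous precisely because $\target{A}  \in   \textsf{Exploit}   \ottsym{(}  \source{V}  \ottsym{)}$ --- so the two witness sets in fact coincide; the only difference is that you invoke the exploit hypothesis to rule out membership in $ \compile[  \mathsource{ \mathbb{B} }  ] $, whereas the paper invokes it to verify that $ \sbehav{  \source{C}^\source{S} [  \source{V}  ]  } $ lies in $\mathsource{ \mathbb{B} }$ for every source context $\source{C}^\source{S}$.
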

\begin{proof}
      Let $\mathsource{ \mathbb{B} }$ be the property consisting of all behaviors of the form
      $ \sbehav{  \source{C}^\source{S} [  \source{V}  ]  } $ for any source context $\source{C}^\source{S}$, but \emph{excluding} behaviors
      of the form $ \tbehav{  \target{A} [   \compile[  \source{V}  ]   ]  } $:
  \[
    \mathsource{ \mathbb{B} }  \ottsym{=}   \left\{  \source{b}^\source{S}  \mid   \nbrelates{ \source{b}^\source{S} }{  \tbehav{  \target{A} [   \compile[  \source{V}  ]   ]  }  }   \wedge  \exists\xspace  \source{C}^\source{S}  .\xspace  \source{b}^\source{S}  \ottsym{=}   \sbehav{  \source{C}^\source{S} [  \source{V}  ]  }   \right\} .
  \]
    Clearly $ \tbehav{  \target{A} [   \compile[  \source{V}  ]   ]  }   \notin   \compile[  \mathsource{ \mathbb{B} }  ] $. It then suffices to show that for all source
    contexts $\source{C}^\source{S}$, $ \nbrelates{  \sbehav{  \source{C}^\source{S} [  \source{V}  ]  }  }{  \tbehav{  \target{A} [   \compile[  \source{V}  ]   ]  }  } $, which follows directly from
    the fact that $\target{A}  \in   \textsf{Exploit}   \ottsym{(}  \source{V}  \ottsym{)}$.
\end{proof}

\subsection{Example: Information flow control}
\label{example:IFC}

Consider a source language with information flow control (IFC) labels $\textsf{H}$ and $\textsf{L}$, where low-confidentiality data $\textsf{L}$ is allowed to flow to high-confidentiality data $\textsf{H}$, but not vice versa. Suppose that the (big-step) operational semantics of a source program $c$ is written 
$c ⊢ γ →^∗ γ'$, where $γ$ is a store of variables marked with an IFC label. The behavior of a program $c$ is a (partial) function on configurations such that $c ⊢ γ →^∗ \behav{c}(γ)$.

If the source language enforces non-interference~\citep{Sabelfeld2003} for $c$, it satisfies the following property: if $c ⊢ γ_1 →^∗ γ_2$ and $c ⊢ γ_1' →^∗ γ_2'$ such that $γ_1|_{\textsf{L}} = γ_1'|_{\textsf{L}}$, then  $γ_2|_{\textsf{L}} = γ_2'|_{\textsf{L}}$. That is, if the public inputs to $c$ are equal, then its public outputs must also be equal.

If the IFC source language is compiled to a target language with IFC, then any attack that violates non-interference in the target language is an exploit.

\subsection{Trace properties}


As we saw in \cref{sec:mainideas}, many languages characterize behaviors as sets of input/output traces ($ \mathcal{B}   \ottsym{=}   \mathcal{P}(  \textsf{Trace}  ) $), and derive the behavior relation $ \brelates{ \source{b}^\source{S} }{ \target{b}^\target{T} } $ from an underlying relation on traces $ \brelates{ \source{t}^\source{S} }{ \target{t}^\target{T} } $.
When this is the case, we can distinguish hyperproperties $\mathbb{H}  \subseteq   \mathcal{P}(  \textsf{Trace}  ) $ from \emph{trace properties} $\pi  \subseteq   \textsf{Trace} $. 


A whole program $P$ satisfies $\pi  \subseteq   \textsf{Trace} $ when $\ottnt{t}  \in   \behav{ P } $ implies $\ottnt{t}  \in  \pi$. For a source trace property $ \mathsource{\pi}   \in   \mathsource{ \mathcal{B} }^\source{S} $, we write $ \compile[   \mathsource{\pi}   ]   \in   \mathtarget{ \mathcal{B} }^\target{T} $ for
\[
     \compile[   \mathsource{\pi}   ]  ≜  \left\{  \target{t}^\target{T}  \in   \mathtarget{ \mathcal{B} }^\target{T}   \mid  \exists\xspace  \source{t}^\source{S}  \in   \mathsource{\pi}   .\xspace   \brelates{ \source{t}^\source{S} }{ \target{t}^\target{T} }   \right\} .
\]

A compiler satisfies \emph{robust trace property preservation} (RTP) when it
preserves all source trace properties:
\begin{align*}
    ∀  \mathsource{\pi}   \subseteq   \source{ \textsf{Trace} }^\source{S} .~ ∀ \source{U}^\source{S}.
        &\left(∀ \source{C}^\source{S}.~  \sbehav{  \source{C}^\source{S} [  \source{U}^\source{S}  ]  }   \subseteq   \mathsource{\pi} \right) ⇒ \\
        &\left(∀ \target{C}^\target{T}.~  \tbehav{  \target{C}^\target{T} [   \compile[  \source{U}^\source{S}  ]   ]  }   \subseteq   \compile[   \mathsource{\pi}   ] \right)
    \tag{RTP}
\end{align*}    

\citet{Abate2019} similarly give an equivalent property-free characterization of RTP:
\begin{theorem}[\citet{Abate2019}] \label{thm:RTP}
    A compiler satisfies RTP if and only if for all source components $\source{U}^\source{S}$ and
    target contexts $\target{C}^\target{T}$, if $\target{t}^\target{T}  \in   \tbehav{  \target{C}^\target{T} [   \compile[  \source{U}^\source{S}  ]   ]  } $ then there exists some
    source context $\source{C}^\source{S}$ and  $\source{t}^\source{S}  \in   \sbehav{  \source{C}^\source{S} [  \source{U}^\source{S}  ]  } $ such that $ \brelates{ \source{t}^\source{S} }{ \target{t}^\target{T} } $.
\end{theorem}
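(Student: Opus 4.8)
The plan is to prove both directions of the biconditional, using the definition of $\compile[\mathsource{\pi}]$ as the bridge between the trace-level statement in the characterization and the property-preservation statement in RTP.

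For the easy direction (characterization $\Rightarrow$ RTP), I would fix a source trace property $\mathsource{\pi} \subseteq \source{ \textsf{Trace} }^\source{S}$ and a source component $\source{U}^\source{S}$, assume $\sbehav{ \source{C}^\source{S} [ \source{U}^\source{S} ] } \subseteq \mathsource{\pi}$ for every source context $\source{C}^\source{S}$, and take an arbitrary target context $\target{C}^\target{T}$ together with a trace $\target{t}^\target{T} \in \tbehav{ \target{C}^\target{T} [ \compile[ \source{U}^\source{S} ] ] }$. The characterization supplies a source context $\source{C}^\source{S}$ and a trace $\source{t}^\source{S} \in \sbehav{ \source{C}^\source{S} [ \source{U}^\source{S} ] }$ with $\brelates{ \source{t}^\source{S} }{ \target{t}^\target{T} }$. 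Since $\sbehav{ \source{C}^\source{S} [ \source{U}^\source{S} ] } \subseteq \mathsource{\pi}$, this gives $\source{t}^\source{S} \in \mathsource{\pi}$, and the defining condition of $\compile[\mathsource{\pi}]$ then yields $\target{t}^\target{T} \in \compile[\mathsource{\pi}]$. As $\target{t}^\target{T}$ was arbitrary, $\tbehav{ \target{C}^\target{T} [ \compile[ \source{U}^\source{S} ] ] } \subseteq \compile[\mathsource{\pi}]$, which is exactly RTP.

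The reverse direction is where the one nontrivial idea lives, so I would concentrate effort there. Given RTP, fix $\source{U}^\source{S}$, a target context $\target{C}^\target{T}$, and a trace $\target{t}^\target{T} \in \tbehav{ \target{C}^\target{T} [ \compile[ \source{U}^\source{S} ] ] }$. The key step is to instantiate RTP at the \emph{tightest} trace property for $\source{U}^\source{S}$, namely the collection of all source traces realizable by linking $\source{U}^\source{S}$ with some source context:
\[
  \mathsource{\pi} \triangleq \left\{ \source{t}^\source{S} \mid \exists \source{C}^\source{S}.~ \source{t}^\source{S} \in \sbehav{ \source{C}^\source{S} [ \source{U}^\source{S} ] } \right\}.
\]
By construction $\sbehav{ \source{C}^\source{S} [ \source{U}^\source{S} ] } \subseteq \mathsource{\pi}$ holds for every $\source{C}^\source{S}$, so the antecedent of RTP is discharged, and RTP delivers $\tbehav{ \target{C}^\target{T} [ \compile[ \source{U}^\source{S} ] ] } \subseteq \compile[\mathsource{\pi}]$. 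In particular $\target{t}^\target{T} \in \compile[\mathsource{\pi}]$, so there is a source trace $\source{t}^\source{S} \in \mathsource{\pi}$ with $\brelates{ \source{t}^\source{S} }{ \target{t}^\target{T} }$; unfolding the definition of $\mathsource{\pi}$ produces the witnessing source context $\source{C}^\source{S}$ with $\source{t}^\source{S} \in \sbehav{ \source{C}^\source{S} [ \source{U}^\source{S} ] }$, which is precisely the characterization.

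The main obstacle to flag is the choice of $\mathsource{\pi}$: it must be tight enough that membership $\target{t}^\target{T} \in \compile[\mathsource{\pi}]$ pins the trace down to one actually achievable by linking $\source{U}^\source{S}$ in the source, yet large enough that every $\sbehav{ \source{C}^\source{S} [ \source{U}^\source{S} ] }$ is contained in it so that RTP applies. Taking $\mathsource{\pi}$ to be exactly the union of all linked source behaviors is the unique choice meeting both constraints. This mirrors the ``collect all achievable behaviors'' construction used in the proof of \cref{lem:WM-RHP2} (and dually in \cref{lem:WM-RHP1}), the only extra bookkeeping being that preservation here is phrased trace-by-trace, so I would check carefully that the existential over source traces in the definition of $\compile[\mathsource{\pi}]$ aligns with the existential over source contexts defining $\mathsource{\pi}$.
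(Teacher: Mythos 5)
Your proof is correct. Note that the paper does not actually prove this theorem---it is imported as a citation from \citet{Abate2019}---so there is no in-paper proof to compare against; your two-direction argument, with the reverse direction instantiating RTP at the union $\mathsource{\pi} = \bigcup_{\source{C}^\source{S}} \sbehav{ \source{C}^\source{S} [ \source{U}^\source{S} ] }$ of all robustly reachable source traces, is exactly the canonical argument from that cited work, and as you observe it is the same ``collect all achievable behaviors'' device the paper itself uses in \cref{lem:WM-RHP2}.
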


Therefore, we can define a variant of $ \textsf{Exploit}   \ottsym{(}  \source{V}  \ottsym{)}$ corresponding to the violation of trace properties.

\begin{definition}
  A \emph{trace exploit} of a vulnerable source program $\source{V}$ is a context $\target{A}  \in   \mathtarget{\Attack} $ that violates RTP:
  \begin{align*}
     \textsf{T}  \textsf{Exploit}    \ottsym{(}  \source{V}  \ottsym{)} ≜ \{ \target{A}  \in   \mathtarget{\Attack}  ∣ 
        &∃ \target{t}^\target{T}  \in   \tbehav{  \target{A} [   \compile[  \source{V}  ]   ]  } .~
        ∀ \source{C}^\source{S}. \\
        &∀ \source{t}^\source{S}  \in   \sbehav{  \source{C}^\source{S} [  \source{V}  ]  } .
         \nbrelates{ \source{t}^\source{S} }{ \target{t}^\target{T} }  \}.
  \end{align*}
\end{definition}

\begin{theorem} \label{thm:TraceExploit}
    $\target{A}  \in   \textsf{T}  \textsf{Exploit}    \ottsym{(}  \source{V}  \ottsym{)}$ if and only if there exists some trace property $ \mathsource{\pi} $ such that $ \tbehav{  \target{A} [   \compile[  \source{V}  ]   ]  }  ⊊  \compile[   \mathsource{\pi}   ] $ but for all source contexts $\source{C}^\source{S}$, we have $ \sbehav{  \source{C}^\source{S} [  \source{V}  ]  }  ⊆  \mathsource{\pi} $.
\end{theorem}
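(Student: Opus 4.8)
The plan is to prove the biconditional directly from the definitions, specializing the hyperproperty arguments of \cref{lem:WM-RHP1,lem:WM-RHP2} to trace properties. Indeed, the definition of $\textsf{T}\textsf{Exploit}(\source{V})$ is exactly the per-attack negation of the property-free characterization of RTP in \cref{thm:RTP}, so the two directions amount to translating between the property-based and property-free forms of an RTP violation. The only fact I need about $\compile[\mathsource{\pi}]$ is its defining equivalence: $\target{t}^\target{T} \in \compile[\mathsource{\pi}]$ holds if and only if there is some $\source{t}^\source{S} \in \mathsource{\pi}$ with $\brelates{\source{t}^\source{S}}{\target{t}^\target{T}}$.

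For the ``if'' direction I would assume a trace property $\mathsource{\pi}$ with $\sbehav{\source{C}^\source{S}[\source{V}]} \subseteq \mathsource{\pi}$ for every source context $\source{C}^\source{S}$, and with $\tbehav{\target{A}[\compile[\source{V}]]}$ failing to be contained in $\compile[\mathsource{\pi}]$. From the failure of containment I extract a witness $\target{t}^\target{T} \in \tbehav{\target{A}[\compile[\source{V}]]}$ with $\target{t}^\target{T} \notin \compile[\mathsource{\pi}]$. Unfolding the defining equivalence, no $\source{t}^\source{S} \in \mathsource{\pi}$ is related to $\target{t}^\target{T}$. Since each $\sbehav{\source{C}^\source{S}[\source{V}]}$ lies inside $\mathsource{\pi}$, every source trace $\source{t}^\source{S} \in \sbehav{\source{C}^\source{S}[\source{V}]}$ is in $\mathsource{\pi}$ and hence satisfies $\nbrelates{\source{t}^\source{S}}{\target{t}^\target{T}}$. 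This $\target{t}^\target{T}$ is precisely the existential witness demanded by the definition of $\textsf{T}\textsf{Exploit}(\source{V})$, so $\target{A} \in \textsf{T}\textsf{Exploit}(\source{V})$.

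For the ``only if'' direction I would take the canonical property $\mathsource{\pi} = \left\{ \source{t}^\source{S} \mid \exists \source{C}^\source{S}.~ \source{t}^\source{S} \in \sbehav{\source{C}^\source{S}[\source{V}]} \right\}$ of all source traces realizable by linking $\source{V}$ with some source context, the trace-level analogue of the behavior set built in \cref{lem:WM-RHP2}. The condition $\sbehav{\source{C}^\source{S}[\source{V}]} \subseteq \mathsource{\pi}$ then holds for every $\source{C}^\source{S}$ by construction, so this choice can never be the bottleneck. Letting $\target{t}^\target{T}$ be the witness trace supplied by $\target{A} \in \textsf{T}\textsf{Exploit}(\source{V})$, I argue by contradiction: were $\target{t}^\target{T} \in \compile[\mathsource{\pi}]$, the defining equivalence would give some $\source{t}^\source{S} \in \mathsource{\pi}$ with $\brelates{\source{t}^\source{S}}{\target{t}^\target{T}}$; but membership in $\mathsource{\pi}$ means $\source{t}^\source{S} \in \sbehav{\source{C}^\source{S}[\source{V}]}$ for some context, contradicting the defining property of the exploit witness. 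Hence $\target{t}^\target{T} \notin \compile[\mathsource{\pi}]$ while $\target{t}^\target{T} \in \tbehav{\target{A}[\compile[\source{V}]]}$, establishing the required non-containment.

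The subtle point to get right, and the place I expect to spend the most care, is the meaning of the containment between $\tbehav{\target{A}[\compile[\source{V}]]}$ and $\compile[\mathsource{\pi}]$. The witness produced by an exploit is a \emph{single} target trace that escapes $\compile[\mathsource{\pi}]$, so what the proof actually delivers is that the attack behavior fails to be a subset of $\compile[\mathsource{\pi}]$, i.e.\ the trace-level reading of ``$\compile[\source{V}]$ does not robustly satisfy $\compile[\mathsource{\pi}]$''; the remaining ``good'' traces of the attack still lie in $\compile[\mathsource{\pi}]$, so in general neither set contains the other. I would therefore make the quantifier alternation in the definition of $\textsf{T}\textsf{Exploit}(\source{V})$ line up exactly with this non-containment, and confirm that the canonical $\mathsource{\pi}$ is the largest robustly-satisfied trace property, so that the source-side inclusion is automatic and only the single escaping trace carries the content of the equivalence.
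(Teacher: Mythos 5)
Your proof is correct, and in substance it is the argument the paper leaves implicit: the paper's entire proof of \cref{thm:TraceExploit} is the citation ``Follows from \cref{thm:RTP},'' while you re-derive, directly from the definitions, the per-attack, per-component content of that property-free characterization. The difference is packaging rather than mathematics, but your version is genuinely more self-contained: invoking \cref{thm:RTP} as a black box is slightly loose, since that theorem is stated as a global equivalence over all components and contexts, and one must observe that its underlying argument works instance-by-instance; your two directions make that instantiation explicit. Your canonical property $\mathsource{\pi} = \{\, \source{t}^\source{S} \mid \exists \source{C}^\source{S}.\ \source{t}^\source{S} \in  \sbehav{  \source{C}^\source{S} [  \source{V}  ]  }  \,\}$ is the trace-level analogue of the set built in \cref{lem:WM-RHP2}, and is in fact simpler: the exclusion clause used there is unnecessary, because its effect is recovered by your contradiction argument against the exploit witness. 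Two further observations. First, you are right---and it matters---that the relation written $\subsetneq$ in the statement must be read as non-containment $\not\subseteq$: an exploit yields a single target trace escaping $ \compile[   \mathsource{\pi}   ] $, so $ \tbehav{  \target{A} [   \compile[  \source{V}  ]   ]  } $ can never literally be a \emph{proper subset} of $ \compile[   \mathsource{\pi}   ] $, and under that literal reading the theorem would be false; what your proof establishes (failure of inclusion, with the remaining traces of the attack possibly inside or outside $ \compile[   \mathsource{\pi}   ] $) is the correct and intended statement. Second, a small terminological slip in your closing paragraph: your canonical $\mathsource{\pi}$ is the \emph{smallest} robustly satisfied trace property, not the largest, since every robustly satisfied property must contain all realizable source traces; minimality is exactly what makes $ \compile[   \mathsource{\pi}   ] $ as small as possible and the witness's escape easiest, and nothing in your argument uses maximality, so the proof itself is unaffected.
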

\begin{proof}
Follows from  \cref{thm:RTP}. \qedhere

\end{proof}

Notice that $ \textsf{T}  \textsf{Exploit}    \ottsym{(}  \source{V}  \ottsym{)}$ is a subset of $ \textsf{Exploit}   \ottsym{(}  \source{V}  \ottsym{)}$.

\begin{lemma}
    If $\target{A}  \in   \textsf{T}  \textsf{Exploit}    \ottsym{(}  \source{V}  \ottsym{)}$ then $\target{A}  \in   \textsf{Exploit}   \ottsym{(}  \source{V}  \ottsym{)}$.
\end{lemma}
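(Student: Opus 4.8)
The plan is to unfold both definitions and observe that the single target trace witnessing membership in $\textsf{T}\textsf{Exploit}(\source{V})$ already suffices to block the behavior relation for \emph{every} candidate back-translated source context. The only ingredient I need is the trace decomposition of the behavior relation from \cref{sec:mainideas}: in the setting where $\mathsource{\mathcal{B}}^\source{S} = \mathcal{P}(\source{\textsf{Trace}}^\source{S})$ and $\mathtarget{\mathcal{B}}^\target{T} = \mathcal{P}(\target{\textsf{Trace}}^\target{T})$ (which is exactly the setting in which $\textsf{T}\textsf{Exploit}$ is defined), $\brelates{\source{b}^\source{S}}{\target{b}^\target{T}}$ entails in particular the ``backward'' clause that every $\target{t}^\target{T} \in \target{b}^\target{T}$ be matched by some $\source{t}^\source{S} \in \source{b}^\source{S}$ with $\brelates{\source{t}^\source{S}}{\target{t}^\target{T}}$.

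First I would extract, from the hypothesis $\target{A} \in \textsf{T}\textsf{Exploit}(\source{V})$, a witness trace $\target{t}^\target{T} \in \tbehav{\target{A}[\compile[\source{V}]]}$ such that for every source context $\source{C}^\source{S}$ and every $\source{t}^\source{S} \in \sbehav{\source{C}^\source{S}[\source{V}]}$ we have $\nbrelates{\source{t}^\source{S}}{\target{t}^\target{T}}$. Then I would fix an arbitrary source context $\source{C}^\source{S}$ and show $\nbrelates{\sbehav{\source{C}^\source{S}[\source{V}]}}{\tbehav{\target{A}[\compile[\source{V}]]}}$: the backward clause of the decomposition would require a source trace in $\sbehav{\source{C}^\source{S}[\source{V}]}$ related to the witness $\target{t}^\target{T}$, but by construction no such source trace exists, so the conjunction defining the relation fails. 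Since $\source{C}^\source{S}$ was arbitrary, no back-translated context relates, which is precisely the statement that $\target{A} \in \textsf{Exploit}(\source{V})$.

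I expect essentially no obstacle here; the result is a direct consequence of the quantifier structure, and the proof is little more than contraposition against the backward matching clause. The one point warranting a sentence of care is that the argument lives entirely in the trace-property regime, so I would flag at the outset that we are assuming the trace-set form of behaviors and the associated decomposition of $\mapsto$. One could instead route through \cref{thm:TraceExploit} to produce a violated trace property and then appeal to the property-based lemmas, but the direct quantifier argument is shorter and self-contained.
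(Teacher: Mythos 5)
Your proof is correct and is essentially the same as the paper's: the paper proves the contrapositive (assuming some back-translated context $\source{C}^\source{S}$ with $ \brelates{  \sbehav{  \source{C}^\source{S} [  \source{V}  ]  }  }{  \tbehav{  \target{A} [   \compile[  \source{V}  ]   ]  }  } $ and using the backward clause of the trace decomposition to match every target trace, contradicting membership in $ \textsf{T}  \textsf{Exploit}    \ottsym{(}  \source{V}  \ottsym{)}$), whereas you run the identical appeal to that clause in the direct direction, letting the unmatched witness trace defeat the relation for every candidate context. The two arguments differ only by contraposition, not in substance.
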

\begin{proof}
    It suffices to show the contrapositive. Suppose that $\target{A}  \notin   \textsf{Exploit}   \ottsym{(}  \source{V}  \ottsym{)}$,
    meaning that there is some $\source{C}^\source{S}$ such that
    $ \brelates{  \sbehav{  \source{C}^\source{S} [  \source{V}  ]  }  }{  \tbehav{  \target{A} [   \compile[  \source{V}  ]   ]  }  } $. Then $\target{t}^\target{T}  \in   \tbehav{  \target{A} [   \compile[  \source{U}^\source{S}  ]   ]  } $ implies there is some $\source{t}^\source{S}  \in   \sbehav{  \source{C}^\source{S} [  \source{V}  ]  } $ such that $ \brelates{ \source{t}^\source{S} }{ \target{t}^\target{T} } $. Therefore $\target{A}  \notin   \textsf{T}  \textsf{Exploit}    \ottsym{(}  \source{V}  \ottsym{)}$.
\end{proof}

However, \citeauthor{Abate2019} show that these two classes are not equivalent; there are exploits in $ \textsf{Exploit}   \ottsym{(}  \source{V}  \ottsym{)}$ that do not violate trace properties. A trace exploit must produce a single trace that no source context could produce, whereas a hyperproperty exploit may produce some \emph{set} $\ottnt{b}  \subseteq   \textsf{Trace} $ of traces such that no source context produces precisely $\ottnt{b}$.



\subsection{Example: Side channel attacks}
\label{example:side-channel}

As an example of a hyperproperty exploit that is not a trace exploit,
consider the naive password checker in \cref{lst:checker} that iterates over an input string, checking at every index whether the password matches the character at index $i$.
In the source language, a context is a program that interacts with this password checker and observes whether a particular password has been accepted. In the target language, attackers can also observe how long each invocation of \mintinline{c}{check_pass} takes,
and can use that information to craft a timing attack to discover the password in a linear number of calls to \mintinline{c}{check_pass}. 

\begin{listing}
\begin{ccode}
bool check_pass(char* guess, size_t guess_len)
{
  char *password = get_password();
  if (strlen(password) != guess_len)
    return false;

  for (size_t i = 0; < guess_len; i++) {
    if (guess[i] != password[i])
      return false;
  }
  
  return true;
}
\end{ccode}
\caption{A password checker with a side-channel.}
\label{lst:checker}
\end{listing}

The attack proceeds as follows:
First, the attacker measures the response time of different length guesses to determine the length of the password. Then, given a valid prefix of the correct password (starting with the empty string), the attacker determines the next character with $d$ calls to \mintinline{c}{check_pass}, where $d$ is the number of characters available. In particular, the attacker finds the character $c$ such that \mintinline{c}{check_pass(prefix + c + c)} takes the longest; this is the character that does not exit the loop early when the first character $c$ fails to match.



We say that a behavior $\sbehav{\source{C}^\source{S}[\C{check_pass}]}$ \emph{correctly guesses  passwords} if, under an initial configuration $γ_{\C{pass}}$ with the password \C{pass}, outputs \C{false} some number of times, followed by \C{true}, followed by the value of \C{pass}. We say that this behavior \emph{correctly guesses passwords in linear time} if it correctly guesses passwords and also the number of occurrences of $\C{false}$ in the traces in $\source{b}^\source{S}$ is linear in the size of \C{pass}.

We argue that all source programs of the form $\source{C}^\source{S}[\C{check_pass}]$ satisfy the hyperproperty that $\sbehav{\source{C}^\source{S}[\C{check_pass}]}$ does not correctly guess passwords in linear time.

\subsection{Full abstraction}
\label{sec:relprop}

\citet{Abate2019} propose an entire hierarchy of other preservation conditions---safety properties, subset-closed hyperproperties, and relational hyperproperties among them. Each of these correspond to a class of exploits that violate these properties. We focus on just one of them here---exploits that violate full abstraction.


A compiler satisfies \emph{full abstraction}~\citep{Plotkin1977} if contextual equivalences in the source language $ \source{U}^\source{S}_{{\mathrm{1}}}  \cong  \source{U}^\source{S}_{{\mathrm{2}}} $ imply $ \compile[  \source{U}^\source{S}_{{\mathrm{1}}}  ]   \ottsym{≅}   \compile[  \source{U}^\source{S}_{{\mathrm{2}}}  ] $ and vice versa, where contextual equivalence $U_{{\mathrm{1}}}  \ottsym{≅}  U_{{\mathrm{2}}}$ means $∀ C.~  \behav{ C  \ottsym{[}  U_{{\mathrm{1}}}  \ottsym{]} } = \behav{ C  \ottsym{[}  U_{{\mathrm{2}}}  \ottsym{]} } $. 
For years, full abstraction has been the goal of secure compilers~\citep{Milner1977,Plotkin1977,Abadi1999,Abramsky2000,Abadi2012,Patrignani2018}, so it seems natural to consider exploits that violate full abstraction.


\begin{definition}
    The \emph{full abstraction exploits} of a source program $\source{V}$ consist of those attacker contexts $\target{A}$ that violate full abstraction:
  \begin{align*}
      \textsf{FA}  \textsf{Exploit}   ^{  \mathtarget{\Attack}  }   \ottsym{(}  \source{V}  \ottsym{)} ≜ 
   &\{ \target{A}  \in   \mathtarget{\Attack}  ∣ ∃ \source{U}.~  \source{U}  \cong  \source{V}  \\
     &∧~  \tbehav{  \target{A} [   \compile[  \source{V}  ]   ]  }   \neq   \tbehav{  \target{A} [   \compile[  \source{U}  ]   ]  }  \}
  \end{align*}
\end{definition}

Recall the authentication function in \cref{lst:auth} and the attack given in \cref{lst:attackpassword} that forces \C{auth} to print its hard-coded secret value. The attack relies on the relative addresses of the \mintinline{c}{guess} and \mintinline{c}{secret} global variables. As discussed in \cref{sec:secrets}, this attack is not a counterexample to RHP, since the attacker can achieve the same result if they happen to know the secret already (\cref{lst:authattackequiv}).

However, we can show that this attack is in the set  \textsf{FA}  \textsf{Exploit}    by constructing a component contextually equivalent to \C{auth} that has different behavior under the attack. In particular, consider a C program similar to \C{auth} but where the order of the declarations (and subsequent addresses) of \mintinline{c}{guess} and \mintinline{c}{secret} have been swapped. Clearly, we have $ \source{V}  \cong  \source{V}' $ for these two programs, since the address of \mintinline{c}{secret} cannot be observed by a source-level context and so the two layouts cannot be distinguished. However, the behavior of the attack on the compilations of the two programs is different.

\begin{listing}
\begin{ccode}
void main(void) {
  int32_t x = 0xdeadc0de;
  int32_t y = 0xabadf00d;
  int64_t result = context(&x, &y);
  printf("
}

int64_t context(int32_t *x, int32_t *y) {
  return *((int64_t *)x);
}
\end{ccode}
\caption{A program with benign undefined bevahior.}
\label{lst:undef}
\end{listing}

While $ \textsf{FA}  \textsf{Exploit}  $ is useful in that case, we argue that not every attack that violates full abstraction corresponds to an exploit in the informal sense of the word.
Consider the example program fragment \mintinline{c}{main} and context \mintinline{c}{context} in \cref{lst:undef}. Assuming a particular compilation strategy, this program has the same compiled behavior as the program that replaces the body of \mintinline{c}{context} with \mintinline{c}{return ((uint32_t)*x << 32) + *y}. This modified program does not violate any hyperproperties of C code and so is not a hyperproperty exploit. However, it is a counter-example to full abstraction, since it would behave differently for a contextually-equivalent definition of \mintinline{c}{main} that switches the order of the local variable declarations.

After some debate, the authors decided that this and similar examples of benign undefined behavior do not match our informal intuition of exploits, but some readers may disagree. In that case, they may wish to take $ \textsf{FA}  \textsf{Exploit}  $ (or some other variant) as their definition of exploit, and many of the techniques developed in this paper will still benefit them.



\section{Compositionality}
\label{sec:compositionality}
\label{sec:modular}

In real languages, compilers are made up of several transformations, passing from a source language through sometimes dozens of intermediate languages before reaching a final target language. In other cases, compilers can be reused by various front-ends; compilers for C, C++, and Java can target LLVM~\citep{Lattner2004} and reuse its compiler for optimizations and to target a variety of instruction set architectures

If an exploit can be identified in one part of a compiler stack, we would like to extend that result and claim it is an exploit of the entire compiler. Since it is easier to reason about a simple compiler transformation as opposed to the entire compiler stack, this makes reasoning about exploits more compositional.

Suppose we have three languages---a source language $ \mathsource{L_1} $, an intermediate language $ \mathintermediate{L_2} $, and a target language $ \mathtarget{L_3} $. We write $ \compileFromTo{\LOne}{\LTwo} $ for the compiler from $ \mathsource{L_1} $ to $ \mathintermediate{L_2} $, $ \compileFromTo{\LTwo}{\LThree} $ for the compiler from $ \mathintermediate{L_2} $ to $ \mathtarget{L_3} $, and $ \compileFromTo{\LOne}{\LThree} $ for the composition $ \compileFromTo{\LOne}{\LTwo}{\compileFromTo{\LTwo}{\LThree} } $. We write $  \textsf{Exploit}  ^{\LOne}_{\LTwo} $ for the exploits that arise from the compiler $ \compileFromTo{\LOne}{\LTwo} $, and similarly for $  \textsf{Exploit}  ^{\LTwo}_{\LThree} $ and $  \textsf{Exploit}  ^{\LOne}_{\LThree} $. Finally, we write $ \brelates[^1_2]{ \, }{ \, } $ and $ \brelates[^2_3]{ \, }{ \, } $ for the behavior relations, and $ \brelates[^1_3]{ \, }{ \, } $ for the relation
\[
     \brelates[^1_3]{  \source{b}^\source{1}  }{  \target{b}^\target{3}  }  ≜ \exists\xspace   \intermediate{b}^\intermediate{2}   .\xspace   \brelates[^2_3]{  \intermediate{b}^\intermediate{2}  }{  \target{b}^\target{3}  }   \wedge   \brelates[^1_2]{  \source{b}^\source{1}  }{  \intermediate{b}^\intermediate{2}  } .
\]

Unfortunately, the most general compositionality results are not true. 
\begin{proposition}[False]
~
\begin{enumerate}
  \item If $ \intermediate{A}^\intermediate{2}   \in    \textsf{Exploit}  ^{\LOne}_{\LTwo}   \ottsym{(}   \source{V}^\source{1}   \ottsym{)}$ then $ \compileFromTo[   \intermediate{A}^\intermediate{2}   ]{\LTwo}{\LThree}   \in    \textsf{Exploit}  ^{\LOne}_{\LThree}   \ottsym{(}   \source{V}^\source{1}   \ottsym{)}$.

  \item If $ \target{A}^\target{3}   \in    \textsf{Exploit}  ^{\LTwo}_{\LThree}   \ottsym{(}   \compileFromTo[   \source{U}^\source{1}   ]{\LOne}{\LTwo}   \ottsym{)}$ then $ \target{A}^\target{3}   \in    \textsf{Exploit}  ^{\LOne}_{\LThree}   \ottsym{(}   \source{U}^\source{1}   \ottsym{)}$.
\end{enumerate}
\end{proposition}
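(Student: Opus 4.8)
The plan is to refute both implications, since the proposition is labelled \emph{False}; the cleanest route is to exhibit two abstract instances of the framework, each exploiting the fact (established in \cref{sec:mainideas}) that the behavior relation need be neither functional nor injective. I would work in \emph{degenerate} languages in which a component alone determines the observable behavior and contexts are inert, so that every well-formed linking $\source{C}^\source{1} [  \source{V}^\source{1}  ]$ yields one fixed source behavior; modularity of the compilers then holds trivially and the quantifier over contexts in \cref{def:WM} collapses to a single behavior comparison. With that simplification, the entire argument reduces to choosing the two behavior relations $ \brelates[^1_2]{ \, }{ \, } $ and $ \brelates[^2_3]{ \, }{ \, } $ so that they interact badly with the composite relation $ \brelates[^1_3]{ \, }{ \, } $.

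For part (1) the idea is that a downstream pass can \emph{erase} a weird behavior. I would take the source to have a single behavior $\mathsource{s}$, the intermediate two behaviors $\mathintermediate{p}, \mathintermediate{q}$, and the target a single behavior $\mathtarget{t}$, declaring $ \brelates[^1_2]{ \mathsource{s} }{ \mathintermediate{p} } $ as the only $1$--$2$ link and making $ \compileFromTo{\LTwo}{\LThree} $ non-injective on behaviors via both $ \brelates[^2_3]{ \mathintermediate{p} }{ \mathtarget{t} } $ and $ \brelates[^2_3]{ \mathintermediate{q} }{ \mathtarget{t} } $. Choosing $\intermediate{A}^\intermediate{2}$ so that $ \ibehav{  \intermediate{A}^\intermediate{2} [   \compileFromTo[  \source{V}^\source{1}  ]{\LOne}{\LTwo}   ]  } = \mathintermediate{q} $ makes it an exploit of $ \compileFromTo{\LOne}{\LTwo} $, because $ \nbrelates[^1_2]{ \mathsource{s} }{ \mathintermediate{q} } $. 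After the second pass, however, modularity of $ \compileFromTo{\LTwo}{\LThree} $ rewrites $( \compileFromTo[  \intermediate{A}^\intermediate{2}  ]{\LTwo}{\LThree} )[ \compileFromTo[  \source{V}^\source{1}  ]{\LOne}{\LThree} ]$ into the $\LTwo$--$\LThree$ compilation of $\intermediate{A}^\intermediate{2} [   \compileFromTo[  \source{V}^\source{1}  ]{\LOne}{\LTwo}   ]$, whose target behavior is $\mathtarget{t}$; and $ \brelates[^1_3]{ \mathsource{s} }{ \mathtarget{t} } $ holds, witnessed by $\mathintermediate{p}$, so the compiled attack is \emph{not} an exploit of $ \compileFromTo{\LOne}{\LThree} $.

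For part (2) the obstruction is a \emph{phantom} intermediate witness. Here I keep a single source behavior $\mathsource{s}$ and target behavior $\mathtarget{t}$, but make $ \brelates[^1_2]{ \, }{ \, } $ non-functional with $ \brelates[^1_2]{ \mathsource{s} }{ \mathintermediate{p} } $ and $ \brelates[^1_2]{ \mathsource{s} }{ \mathintermediate{q} } $, while on the $2$--$3$ side only $ \brelates[^2_3]{ \mathintermediate{q} }{ \mathtarget{t} } $ holds and $ \nbrelates[^2_3]{ \mathintermediate{p} }{ \mathtarget{t} } $. Arranging $ \compileFromTo[  \source{U}^\source{1}  ]{\LOne}{\LTwo} $ so that every $\LTwo$-context linked with it realizes only $\mathintermediate{p}$, and picking $\target{A}^\target{3}$ that produces $\mathtarget{t}$, forces $\target{A}^\target{3}  \in    \textsf{Exploit}  ^{\LTwo}_{\LThree}   \ottsym{(}   \compileFromTo[  \source{U}^\source{1}  ]{\LOne}{\LTwo}   \ottsym{)}$ (no realizable $\LTwo$-behavior relates to $\mathtarget{t}$). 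Yet $ \brelates[^1_3]{ \mathsource{s} }{ \mathtarget{t} } $ holds through the \emph{unrealized} behavior $\mathintermediate{q}$, so $\target{A}^\target{3}  \notin    \textsf{Exploit}  ^{\LOne}_{\LThree}   \ottsym{(}   \source{U}^\source{1}   \ottsym{)}$, contradicting the implication.

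The relational bookkeeping above is routine; the main obstacle is convincingly discharging the framework hypotheses---namely that each toy instance really is a \emph{modular} instantiation with languages, components, contexts and linking, rather than an ad hoc table of relations. I would address this by spelling out the inert-context languages explicitly so that $C [  U  ] $ has $U$'s behavior and modularity is immediate, and by confirming that the ``phantom'' behavior $\mathintermediate{q}$ legitimately lives in the intermediate behavior set even though no program linked with $ \compileFromTo[  \source{U}^\source{1}  ]{\LOne}{\LTwo} $ exhibits it. I would close by extracting the conceptual lesson from each failure---part (1) breaks because a later pass can collapse a distinguishing behavior, and part (2) breaks because $ \brelates[^1_3]{ \, }{ \, } $ can relate a source to a target behavior through an intermediate behavior that is never actually produced---since these are precisely the phenomena that the corrected, conditional compositionality statements must rule out.
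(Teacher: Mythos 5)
Your refutation is correct: both constructions are legitimate (if degenerate) instances of the framework, the relational bookkeeping checks out, and each implication fails exactly as you claim. It is worth comparing your route to the paper's own refutation, which is a one-sentence informal sketch for each part. For part (1) you and the paper use the same mechanism: the paper takes $ \compileFromTo{\LTwo}{\LThree} $ to be a constant map, while you make the target behavior set a singleton with $ \brelates[^2_3]{ \, }{ \, } $ non-injective; either way the downstream pass collapses precisely the distinction the attack exploited, so some source context's behavior relates to the composite target behavior (in your notation $ \brelates[^1_3]{ \mathsource{s} }{ \mathtarget{t} } $, witnessed by $\mathintermediate{p}$), and exploithood is erased. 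For part (2), however, you genuinely diverge from the paper. The paper blames a first-pass miscompilation that the second pass "rectifies," i.e., its counterexample violates correctness of $ \compileFromTo{\LOne}{\LTwo} $ with respect to $ \source{U}^\source{1} $. In your construction that compiler is in fact \emph{correct} with respect to $ \source{U}^\source{1} $ (every compiled linking realizes $\mathintermediate{p}$, and $ \brelates[^1_2]{ \mathsource{s} }{ \mathintermediate{p} } $ holds); what fails is invertibility, because $ \brelates[^1_3]{ \mathsource{s} }{ \mathtarget{t} } $ holds only through the phantom witness $\mathintermediate{q}$, which no program of the form $\intermediate{C}^\intermediate{2}[ \compileFromTo[  \source{U}^\source{1}  ]{\LOne}{\LTwo} ]$ ever realizes. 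This buys you something the paper's sketch does not: it shows that in \cref{prop:modular-up} the invertibility hypothesis is essential even in the presence of correctness, whereas the paper's example only shows that correctness cannot be dropped. The corresponding price, which you correctly flag, is the obligation to exhibit the degenerate languages and verify modularity and well-formed linking explicitly; your inert-context arrangement discharges that routinely.
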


This proposition does not hold in general because it does not assume anything about the non-exploited part of the compiler. In the first case, if $ \compileFromTo{\LTwo}{\LThree} $ is a constant map (sends every component to the same program $ \target{P}^\target{3} $), then as long as there is \emph{some} source context with $ \brelates{  \sbehav{   \source{C}^\source{1}  [   \source{V}^\source{1}   ]  }  }{  \tbehav{  \target{P}^\target{3}  }  } $, then we have $ \compileFromTo[   \intermediate{A}^\intermediate{2}   ]{\LTwo}{\LThree}   \notin    \textsf{Exploit}  ^{\LOne}_{\LThree}   \ottsym{(}   \source{V}^\source{1}   \ottsym{)}$. In the second case if $ \compileFromTo{\LOne}{\LTwo} $ miscompiles $ \source{U}^\source{1} $, but then $ \compileFromTo{\LTwo}{\LThree} $ somehow rectifies the mistake, then $ \target{A}^\target{3}   \notin    \textsf{Exploit}  ^{\LOne}_{\LThree}   \ottsym{(}   \source{U}^\source{1}   \ottsym{)}$.

However, if we make some assumptions about how the secondary stages of the compilers behave, then we can restore compositionality.

\subsection{Correct compilers}

A compiler is called \emph{correct for whole programs} if the behaviors of whole programs exactly match the behaviors of their compiled versions~\citep{Abate2019a}:
\[
    ∀ \source{P}^\source{S}.~  \brelates{  \sbehav{ \source{P}^\source{S} }  }{  \tbehav{  \compile[  \source{P}^\source{S}  ]  }  } 
        \tag{compiler correct for whole programs}
\]
More generally, we say that a compiler is \emph{correct with respect to a component} $\source{U}^\source{S}$ if the compiler preserves the behavior of whole programs of the form $ \source{C}^\source{S} [  \source{U}^\source{S}  ] $.
\[
    ∀ \source{C}^\source{S}.~  \brelates{  \sbehav{  \source{C}^\source{S} [  \source{U}^\source{S}  ]  }  }{  \tbehav{   \compile[  \source{C}^\source{S}  ]  [   \compile[  \source{U}^\source{S}  ]   ]  }  } .
        \tag{compiler correct for $\source{U}^\source{S}$}
\]


\subsection{Compositionality of hyperproperty exploits}

We say behaviors are \emph{invertible} if $ \brelates[^1_3]{  \source{b}^\source{1}  }{  \target{b}^\target{3}  } $ implies $ \brelates[^2_3]{  \intermediate{b}^\intermediate{2}  }{  \target{b}^\target{3}  } $ holds exactly when $ \brelates[^1_2]{  \source{b}^\source{1}  }{  \intermediate{b}^\intermediate{2}  } $ does.
\[
     \brelates[^1_3]{  \source{b}^\source{1}  }{  \target{b}^\target{3}  }  ⇒ \left( ∀  \intermediate{b}^\intermediate{2} .~  \brelates[^2_3]{  \intermediate{b}^\intermediate{2}  }{  \target{b}^\target{3}  }  ⇔  \brelates[^1_2]{  \source{b}^\source{1}  }{  \intermediate{b}^\intermediate{2}  }  \right)
    \tag{invertibility}
\]

\begin{proposition}
  If $ \compileFromTo{\LTwo}{\LThree} $ is correct with respect to $ \compileFromTo[   \source{V}^\source{1}   ]{\LOne}{\LThree} $ and behaviors are invertible, then $ \intermediate{A}^\intermediate{2}   \in    \textsf{Exploit}  ^{\LOne}_{\LTwo}   \ottsym{(}   \source{V}^\source{1}   \ottsym{)}$ implies $ \compileFromTo[   \intermediate{A}^\intermediate{2}   ]{\LTwo}{\LThree}   \in    \textsf{Exploit}  ^{\LOne}_{\LThree}   \ottsym{(}   \source{V}^\source{1}   \ottsym{)}$.
\end{proposition}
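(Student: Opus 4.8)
The plan is to prove the contrapositive. Assuming $\compileFromTo[\intermediate{A}^\intermediate{2}]{\LTwo}{\LThree} \notin \textsf{Exploit}^{\LOne}_{\LThree}(\source{V}^\source{1})$, I would exhibit a back-translation witness showing $\intermediate{A}^\intermediate{2} \notin \textsf{Exploit}^{\LOne}_{\LTwo}(\source{V}^\source{1})$. Unfolding \cref{def:WM} at the $\LOne$/$\LThree$ level, the hypothesis hands me a source context $\source{C}^\source{1}$ with $\brelates[^1_3]{ \sbehav{ \source{C}^\source{1}[\source{V}^\source{1}] } }{ \tbehav{ \compileFromTo[\intermediate{A}^\intermediate{2}]{\LTwo}{\LThree}[ \compileFromTo[\source{V}^\source{1}]{\LOne}{\LThree} ] } }$. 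Abbreviating $\source{b}^\source{1} \triangleq \sbehav{ \source{C}^\source{1}[\source{V}^\source{1}] }$, $\intermediate{b}^\intermediate{2} \triangleq \ibehav{ \intermediate{A}^\intermediate{2}[ \compileFromTo[\source{V}^\source{1}]{\LOne}{\LTwo} ] }$, and $\target{b}^\target{3} \triangleq \tbehav{ \compileFromTo[\intermediate{A}^\intermediate{2}]{\LTwo}{\LThree}[ \compileFromTo[\source{V}^\source{1}]{\LOne}{\LThree} ] }$, the hypothesis reads $\brelates[^1_3]{ \source{b}^\source{1} }{ \target{b}^\target{3} }$, and the goal reduces to establishing $\brelates[^1_2]{ \source{b}^\source{1} }{ \intermediate{b}^\intermediate{2} }$; that very relation exhibits the same $\source{C}^\source{1}$ as the context required to defeat $\intermediate{A}^\intermediate{2} \in \textsf{Exploit}^{\LOne}_{\LTwo}(\source{V}^\source{1})$.

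The middle step is the only place the correctness hypothesis is used. I read ``$\compileFromTo{\LTwo}{\LThree}$ is correct with respect to $\compileFromTo[\source{V}^\source{1}]{\LOne}{\LThree}$'' as correctness of the second stage with respect to the $\LTwo$-component $\compileFromTo[\source{V}^\source{1}]{\LOne}{\LTwo}$, i.e. $\brelates[^2_3]{ \ibehav{ \intermediate{C}^\intermediate{2}[ \compileFromTo[\source{V}^\source{1}]{\LOne}{\LTwo} ] } }{ \tbehav{ \compileFromTo[\intermediate{C}^\intermediate{2}]{\LTwo}{\LThree}[ \compileFromTo[\source{V}^\source{1}]{\LOne}{\LThree} ] } }$ for every $\intermediate{C}^\intermediate{2}$, where I have used the definition of the composite compiler together with modularity to identify $\compileFromTo[\compileFromTo[\source{V}^\source{1}]{\LOne}{\LTwo}]{\LTwo}{\LThree}$ with $\compileFromTo[\source{V}^\source{1}]{\LOne}{\LThree}$. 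Instantiating this at the single context $\intermediate{C}^\intermediate{2} = \intermediate{A}^\intermediate{2}$ yields exactly $\brelates[^2_3]{ \intermediate{b}^\intermediate{2} }{ \target{b}^\target{3} }$.

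Finally I would invoke invertibility. Since $\brelates[^1_3]{ \source{b}^\source{1} }{ \target{b}^\target{3} }$ holds, invertibility supplies the biconditional $\brelates[^2_3]{ \intermediate{b}^\intermediate{2} }{ \target{b}^\target{3} } \iff \brelates[^1_2]{ \source{b}^\source{1} }{ \intermediate{b}^\intermediate{2} }$ for all $\intermediate{b}^\intermediate{2}$; feeding the relation from the correctness step into its left-to-right direction gives $\brelates[^1_2]{ \source{b}^\source{1} }{ \intermediate{b}^\intermediate{2} }$, which closes the contrapositive. I expect no genuine computational difficulty: the argument is three lines once the notation is unfolded. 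The care — and the main place an error could creep in — is bookkeeping the three behavior relations, applying invertibility in the correct direction (it is what transports the exploit's intermediate behavior $\intermediate{b}^\intermediate{2}$ from being $2$-$3$-related to being $1$-$2$-related), and pinning down the reading of the correctness hypothesis as $\LTwo \to \LThree$ correctness on the intermediate component rather than any statement directly about $\LOne \to \LThree$.
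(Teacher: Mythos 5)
Your proposal is correct and is essentially the paper's own proof: the paper argues by contradiction where you argue by contraposition, but the substance is identical---instantiate correctness of $ \compileFromTo{\LTwo}{\LThree} $ at the single context $ \intermediate{A}^\intermediate{2} $ to obtain $ \brelates[^2_3]{  \ibehav{  \intermediate{A}^\intermediate{2}   \ottsym{[}   \compileFromTo[   \source{V}^\source{1}   ]{\LOne}{\LTwo}   \ottsym{]} }  }{  \tbehav{   \compileFromTo[   \intermediate{A}^\intermediate{2}   ]{\LTwo}{\LThree}  [   \compileFromTo[   \source{V}^\source{1}   ]{\LOne}{\LThree}   ]  }  } $, then apply invertibility (in the same direction you do) to land on the $ \brelates[^1_2]{ \, }{ \, } $ relation that is incompatible with $ \intermediate{A}^\intermediate{2}   \in    \textsf{Exploit}  ^{\LOne}_{\LTwo}   \ottsym{(}   \source{V}^\source{1}   \ottsym{)}$. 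Your reading of the correctness hypothesis as second-stage correctness with respect to the intermediate component $ \compileFromTo[   \source{V}^\source{1}   ]{\LOne}{\LTwo} $ (so that composing the compilers identifies the compiled component with $ \compileFromTo[   \source{V}^\source{1}   ]{\LOne}{\LThree} $) is exactly how the paper's proof uses it.
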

\begin{proof}
    For the sake of contradiction, assume there is some $ \mathsource{L_1} $ context $ \source{C}^\source{1} $ such that $ \brelates[^1_3]{  \sbehav{   \source{C}^\source{1}  [   \source{V}^\source{1}   ]  }  }{  \tbehav{   \compileFromTo[   \intermediate{A}^\intermediate{2}   ]{\LTwo}{\LThree}  [   \compileFromTo[   \source{V}^\source{1}   ]{\LOne}{\LThree}   ]  }  } $.
By the correctness of $ \compileFromTo{\LTwo}{\LThree} $, it is the case that $ \brelates[^2_3]{  \ibehav{  \intermediate{A}^\intermediate{2}   \ottsym{[}   \compileFromTo[   \source{V}^\source{1}   ]{\LOne}{\LTwo}   \ottsym{]} }  }{  \tbehav{   \compileFromTo[   \intermediate{A}^\intermediate{2}   ]{\LTwo}{\LThree}  [   \compileFromTo[   \source{V}^\source{1}   ]{\LOne}{\LThree}   ]  }  } $.  Then  $ \brelates[^1_2]{  \sbehav{   \source{C}^\source{1}  [   \source{V}^\source{1}   ]  }  }{  \ibehav{  \intermediate{A}^\intermediate{2}   \ottsym{[}   \compileFromTo[   \source{V}^\source{1}   ]{\LOne}{\LTwo}   \ottsym{]} }  } $ by invertibility; but this contradicts $ \intermediate{A}^\intermediate{2}   \in    \textsf{Exploit}  ^{\LOne}_{\LTwo}   \ottsym{(}   \source{V}^\source{1}   \ottsym{)}$.
\end{proof}

\begin{proposition}  \label{prop:modular-up}
  If $ \compileFromTo{\LOne}{\LTwo} $ is correct with respect to $ \source{U}^\source{1} $ and behaviors are invertible,
  then $ \target{A}^\target{3}   \in    \textsf{Exploit}  ^{\LTwo}_{\LThree}   \ottsym{(}   \compileFromTo[   \source{U}^\source{1}   ]{\LOne}{\LTwo}   \ottsym{)}$ implies $ \target{A}^\target{3}   \in    \textsf{Exploit}  ^{\LOne}_{\LThree}   \ottsym{(}   \source{U}^\source{1}   \ottsym{)}$.
\end{proposition}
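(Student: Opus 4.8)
The plan is to prove the contrapositive, mirroring the proof of the preceding proposition but invoking correctness of the \emph{first} stage $\compileFromTo{\LOne}{\LTwo}$ rather than the second. First I would assume $\target{A}^\target{3} \notin \textsf{Exploit}^{\LOne}_{\LThree}(\source{U}^\source{1})$, so that by \cref{def:WM} there is some $\mathsource{L_1}$ context $\source{C}^\source{1}$ with $\brelates[^1_3]{\sbehav{\source{C}^\source{1}[\source{U}^\source{1}]}}{\tbehav{\target{A}^\target{3}[\compileFromTo[\source{U}^\source{1}]{\LOne}{\LThree}]}}$. Because $\compileFromTo{\LOne}{\LThree}$ is by definition the composition of $\compileFromTo{\LOne}{\LTwo}$ with $\compileFromTo{\LTwo}{\LThree}$, the compiled component $\compileFromTo[\source{U}^\source{1}]{\LOne}{\LThree}$ is literally $\compileFromTo[{\compileFromTo[\source{U}^\source{1}]{\LOne}{\LTwo}}]{\LTwo}{\LThree}$, so the two target whole programs coincide; write $\target{b}^\target{3}$ for their common behavior.

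Next I would apply the hypothesis that $\compileFromTo{\LOne}{\LTwo}$ is correct with respect to $\source{U}^\source{1}$ to this particular context $\source{C}^\source{1}$, obtaining $\brelates[^1_2]{\sbehav{\source{C}^\source{1}[\source{U}^\source{1}]}}{\ibehav{\compileFromTo[\source{C}^\source{1}]{\LOne}{\LTwo}[\compileFromTo[\source{U}^\source{1}]{\LOne}{\LTwo}]}}$. Here the modularity assumption is important: $\compileFromTo[\source{C}^\source{1}]{\LOne}{\LTwo}$ is a bona fide $\mathintermediate{L_2}$ context (compiled source contexts are admitted by the grammar of intermediate contexts), and linking it with the component $\compileFromTo[\source{U}^\source{1}]{\LOne}{\LTwo}$ yields a well-defined $\mathintermediate{L_2}$ whole program whose behavior is related to the source behavior $\sbehav{\source{C}^\source{1}[\source{U}^\source{1}]}$.

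The key step is to convert this $\brelates[^1_2]{\,}{\,}$ fact into the $\brelates[^2_3]{\,}{\,}$ fact needed to contradict $\target{A}^\target{3} \in \textsf{Exploit}^{\LTwo}_{\LThree}(\compileFromTo[\source{U}^\source{1}]{\LOne}{\LTwo})$. This is exactly what invertibility supplies: since $\brelates[^1_3]{\sbehav{\source{C}^\source{1}[\source{U}^\source{1}]}}{\target{b}^\target{3}}$ holds, invertibility gives $\brelates[^2_3]{\intermediate{b}^\intermediate{2}}{\target{b}^\target{3}} \Leftrightarrow \brelates[^1_2]{\sbehav{\source{C}^\source{1}[\source{U}^\source{1}]}}{\intermediate{b}^\intermediate{2}}$ for every $\intermediate{b}^\intermediate{2}$. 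Instantiating $\intermediate{b}^\intermediate{2}$ with $\ibehav{\compileFromTo[\source{C}^\source{1}]{\LOne}{\LTwo}[\compileFromTo[\source{U}^\source{1}]{\LOne}{\LTwo}]}$ and using the relation from the previous step, I obtain $\brelates[^2_3]{\ibehav{\compileFromTo[\source{C}^\source{1}]{\LOne}{\LTwo}[\compileFromTo[\source{U}^\source{1}]{\LOne}{\LTwo}]}}{\target{b}^\target{3}}$. Since $\target{b}^\target{3} = \tbehav{\target{A}^\target{3}[\compileFromTo[{\compileFromTo[\source{U}^\source{1}]{\LOne}{\LTwo}}]{\LTwo}{\LThree}]}$, the $\mathintermediate{L_2}$ context $\compileFromTo[\source{C}^\source{1}]{\LOne}{\LTwo}$ witnesses $\target{A}^\target{3} \notin \textsf{Exploit}^{\LTwo}_{\LThree}(\compileFromTo[\source{U}^\source{1}]{\LOne}{\LTwo})$, the required contradiction.

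The main subtlety, rather than any heavy calculation, is using invertibility in the correct direction and verifying that its hypothesis is available. Whereas the preceding proposition reads the biconditional left-to-right---turning a $\brelates[^2_3]{\,}{\,}$ fact obtained from correctness of $\compileFromTo{\LTwo}{\LThree}$ into a $\brelates[^1_2]{\,}{\,}$ fact---this proof reads it right-to-left, turning the $\brelates[^1_2]{\,}{\,}$ fact obtained from correctness of $\compileFromTo{\LOne}{\LTwo}$ into a $\brelates[^2_3]{\,}{\,}$ fact. The hypothesis $\brelates[^1_3]{\cdot}{\target{b}^\target{3}}$ that licenses invertibility is precisely the relation produced by the contrapositive assumption, so the two ingredients line up. I would also take care that the identity $\compileFromTo[\source{U}^\source{1}]{\LOne}{\LThree} = \compileFromTo[{\compileFromTo[\source{U}^\source{1}]{\LOne}{\LTwo}}]{\LTwo}{\LThree}$ is applied consistently, so that the single behavior $\target{b}^\target{3}$ is genuinely shared by both exploit statements.
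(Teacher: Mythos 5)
Your proposal is correct and follows essentially the same argument as the paper: assume a source context witnesses the $\brelates[^1_3]{\,}{\,}$ relation, apply correctness of $\compileFromTo{\LOne}{\LTwo}$ with respect to $\source{U}^\source{1}$ to obtain the $\brelates[^1_2]{\,}{\,}$ fact, then use invertibility (right-to-left) to produce the $\brelates[^2_3]{\,}{\,}$ fact, so that the compiled context $\compileFromTo[\source{C}^\source{1}]{\LOne}{\LTwo}$ contradicts $\target{A}^\target{3} \in \textsf{Exploit}^{\LTwo}_{\LThree}(\compileFromTo[\source{U}^\source{1}]{\LOne}{\LTwo})$. Your framing as a contrapositive rather than a proof by contradiction, and your explicit remarks about compiler composition and modularity, are only presentational differences from the paper's terser proof.
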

\begin{proof}
  Assume for contradiction there is some $ \mathsource{L_1} $ context $ \source{C}^\source{1} $ such that $ \brelates[^1_3]{  \sbehav{   \source{C}^\source{1}  [   \source{U}^\source{1}   ]  }  }{  \tbehav{   \target{A}^\target{3}  [   \compileFromTo[   \source{U}^\source{1}   ]{\LOne}{\LThree}   ]  }  } $. Correctness of $ \compileFromTo{\LOne}{\LTwo} $ means $ \brelates[^1_2]{  \sbehav{   \source{C}^\source{1}  [   \source{U}^\source{1}   ]  }  }{  \ibehav{  \compileFromTo[   \source{C}^\source{1}   ]{\LOne}{\LTwo}   \ottsym{[}   \compileFromTo[   \source{U}^\source{1}   ]{\LOne}{\LTwo}   \ottsym{]} }  } $. By invertibility, $ \brelates[^2_3]{  \ibehav{  \compileFromTo[   \source{C}^\source{1}   ]{\LOne}{\LTwo}   \ottsym{[}   \compileFromTo[   \source{U}^\source{1}   ]{\LOne}{\LTwo}   \ottsym{]} }  }{  \tbehav{   \target{A}^\target{3}  [   \compileFromTo[   \source{U}^\source{1}   ]{\LOne}{\LThree}   ]  }  } $, but this contradicts $ \target{A}^\target{3}   \in    \textsf{Exploit}  ^{\LTwo}_{\LThree}   \ottsym{(}   \compileFromTo[   \source{U}^\source{1}   ]{\LOne}{\LTwo}   \ottsym{)}$.
\end{proof}

\subsection{Compositionality of trace exploits}
\label{sec:compositional-trace}

Trace exploits are more compositional than hyperproperty exploits. Even if a compiler is not correct for whole programs, it may still preserve traces, which means that the behavior of a whole program $\source{P}^\source{S}$ is a subset of the behavior of $ \compile[  \source{P}^\source{S}  ] $.

\begin{definition}
  A compiler \emph{preserves traces of whole programs} if, for all whole programs $\source{P}^\source{S}$,
  \[
    \forall  \source{t}^\source{S}  \in   \sbehav{ \source{P}^\source{S} }   .\xspace  \exists\xspace  \target{t}^\target{T}  \in   \tbehav{  \compile[  \source{P}^\source{S}  ]  }   .\xspace   \brelates{ \source{t}^\source{S} }{ \target{t}^\target{T} } 
    \tag{preservation for whole programs}
  \]
  A compiler \emph{preserves traces of a component} $\source{U}^\source{S}$ if it preserves the traces of  programs of the form $ \source{C}^\source{S} [  \source{U}^\source{S}  ] $:
  \[
    \forall  \source{C}^\source{S}  .\xspace  \forall  \source{t}^\source{S}  \in   \sbehav{  \source{C}^\source{S} [  \source{U}^\source{S}  ]  }   .\xspace  \exists\xspace  \target{t}^\target{T}  \in   \tbehav{   \compile[  \source{C}^\source{S}  ]  [   \compile[  \source{U}^\source{S}  ]   ]  }   .\xspace   \brelates{ \source{t}^\source{S} }{ \target{t}^\target{T} } 
        \tag{preservation with respect to $\source{U}^\source{S}$}
  \]
\end{definition}

As with behaviors, we require that the trace relations between $ \mathsource{L_1} $, $ \mathintermediate{L_2} $, and $ \mathtarget{L_3} $ be invertible.

\begin{definition} We say that the relation $↦$ on traces is invertible  if, whenever $ \brelates[^1_3]{  \source{t}^\source{1}  }{  \target{t}^\target{3}  } $, we have $ \brelates[^1_2]{  \source{t}^\source{1}  }{  \intermediate{t}^\intermediate{2}  } $ if and only if $ \brelates[^2_3]{  \intermediate{t}^\intermediate{2}  }{  \target{t}^\target{3}  } $.
\end{definition}

\begin{proposition} \label{prop:trace-modular-down}
If $ \source{V}^\source{1} $ is a component such that $ \compileFromTo{\LOne}{\LTwo} $ preserves traces of $ \source{V}^\source{1} $ and the trace relation is invertible, then $ \target{A}^\target{3}   \in    \textsf{T}  \textsf{Exploit}   ^{\LTwo}_{\LThree}   \ottsym{(}   \compileFromTo[   \source{V}^\source{1}   ]{\LOne}{\LTwo}   \ottsym{)}$ implies $ \target{A}^\target{3}   \in    \textsf{T}  \textsf{Exploit}   ^{\LOne}_{\LThree}   \ottsym{(}   \source{V}^\source{1}   \ottsym{)}$.
\end{proposition}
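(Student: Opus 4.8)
The plan is to mirror the proof of \cref{prop:modular-up}, but carried out at the level of individual traces rather than whole behaviors, substituting trace preservation for compiler correctness and trace invertibility for behavior invertibility. First I would unfold the hypothesis $\target{A}^\target{3} \in \textsf{T}\textsf{Exploit}^{\LTwo}_{\LThree}(\compileFromTo[\source{V}^\source{1}]{\LOne}{\LTwo})$. By definition this yields a witness target trace $\target{t}^\target{3} \in \tbehav{\target{A}^\target{3}[\compileFromTo[\compileFromTo[\source{V}^\source{1}]{\LOne}{\LTwo}]{\LTwo}{\LThree}]}$ with the property that for every $\mathintermediate{L_2}$ context $\intermediate{C}^\intermediate{2}$ and every $\intermediate{t}^\intermediate{2} \in \ibehav{\intermediate{C}^\intermediate{2}[\compileFromTo[\source{V}^\source{1}]{\LOne}{\LTwo}]}$ we have $\nbrelates[^2_3]{\intermediate{t}^\intermediate{2}}{\target{t}^\target{3}}$.

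Since $\compileFromTo{\LOne}{\LThree}$ is by definition the composition of $\compileFromTo{\LOne}{\LTwo}$ and $\compileFromTo{\LTwo}{\LThree}$, the target program $\target{A}^\target{3}[\compileFromTo[\compileFromTo[\source{V}^\source{1}]{\LOne}{\LTwo}]{\LTwo}{\LThree}]$ is literally $\target{A}^\target{3}[\compileFromTo[\source{V}^\source{1}]{\LOne}{\LThree}]$, which is exactly the target program appearing in the goal. I would therefore propose the \emph{same} trace $\target{t}^\target{3}$ as the witness for membership in $\textsf{T}\textsf{Exploit}^{\LOne}_{\LThree}(\source{V}^\source{1})$, so that the remaining obligation is the universally-quantified clause: for every $\mathsource{L_1}$ context $\source{C}^\source{1}$ and every $\source{t}^\source{1} \in \sbehav{\source{C}^\source{1}[\source{V}^\source{1}]}$, we must show $\nbrelates[^1_3]{\source{t}^\source{1}}{\target{t}^\target{3}}$.

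I would establish this clause by contradiction. Suppose some $\source{C}^\source{1}$ and $\source{t}^\source{1} \in \sbehav{\source{C}^\source{1}[\source{V}^\source{1}]}$ satisfy $\brelates[^1_3]{\source{t}^\source{1}}{\target{t}^\target{3}}$. Applying the hypothesis that $\compileFromTo{\LOne}{\LTwo}$ preserves traces of $\source{V}^\source{1}$ to this very context and trace produces an intermediate trace $\intermediate{t}^\intermediate{2} \in \ibehav{\compileFromTo[\source{C}^\source{1}]{\LOne}{\LTwo}[\compileFromTo[\source{V}^\source{1}]{\LOne}{\LTwo}]}$ with $\brelates[^1_2]{\source{t}^\source{1}}{\intermediate{t}^\intermediate{2}}$. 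Invertibility of the trace relation now applies: because $\brelates[^1_3]{\source{t}^\source{1}}{\target{t}^\target{3}}$ holds, the statement $\brelates[^1_2]{\source{t}^\source{1}}{\intermediate{t}^\intermediate{2}}$ is equivalent to $\brelates[^2_3]{\intermediate{t}^\intermediate{2}}{\target{t}^\target{3}}$, and since the former holds we obtain $\brelates[^2_3]{\intermediate{t}^\intermediate{2}}{\target{t}^\target{3}}$.

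Finally I would instantiate the $\mathintermediate{L_2}$ context from the unfolded hypothesis with the back-translated context $\intermediate{C}^\intermediate{2} \triangleq \compileFromTo[\source{C}^\source{1}]{\LOne}{\LTwo}$: the trace $\intermediate{t}^\intermediate{2}$ then lies in $\ibehav{\intermediate{C}^\intermediate{2}[\compileFromTo[\source{V}^\source{1}]{\LOne}{\LTwo}]}$ and is related to $\target{t}^\target{3}$ by $\brelates[^2_3]{}{}$, directly contradicting the defining property of $\target{t}^\target{3}$. This discharges the assumption and proves the goal. The steps are all routine once set up; I expect the only genuinely delicate points to be bookkeeping rather than mathematics — namely, recognizing that the preserved trace $\intermediate{t}^\intermediate{2}$ lives in a behavior of the form $\intermediate{C}^\intermediate{2}[\compileFromTo[\source{V}^\source{1}]{\LOne}{\LTwo}]$ (so that the $\mathintermediate{L_2}$-exploit hypothesis is applicable with this specific compiled context and not an arbitrary one), and that the composed compiler $\compileFromTo{\LOne}{\LThree}$ makes the two target behaviors coincide so the same witness trace transfers.
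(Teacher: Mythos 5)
Your proposal is correct and follows essentially the same route as the paper's proof: fix the witness trace $\target{t}^{\target{3}}$, fix an $\mathsource{L_1}$ context and trace, use trace preservation of $\compileFromTo{\LOne}{\LTwo}$ to obtain an intermediate trace in the behavior of the back-translated (compiled) context, and then use invertibility against the $\mathintermediate{L_2}$-level exploit hypothesis. The only difference is presentational: you make explicit the proof-by-contradiction and the instantiation $\intermediate{C}^\intermediate{2} \triangleq \compileFromTo[\source{C}^\source{1}]{\LOne}{\LTwo}$, which the paper compresses into its final sentence.
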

\begin{proof}
  Since $ \target{A}^\target{3}   \in    \textsf{T}  \textsf{Exploit}   ^{\LTwo}_{\LThree}   \ottsym{(}   \compileFromTo[   \source{V}^\source{1}   ]{\LOne}{\LTwo}   \ottsym{)}$, we know there is some $ \target{t}^\target{3}   \in   \tbehav{   \target{A}^\target{3}  [   \compileFromTo[   \source{V}^\source{1}   ]{\LOne}{\LThree}   ]  } $ such that for all $ \mathintermediate{L_2} $ contexts $ \intermediate{C}^\intermediate{2} $ and traces $ \intermediate{t}^\intermediate{2}   \in   \ibehav{  \intermediate{C}^\intermediate{2}   \ottsym{[}   \compileFromTo[   \source{V}^\source{1}   ]{\LOne}{\LTwo}   \ottsym{]} } $, we have $ \nbrelates[^2_3]{  \intermediate{t}^\intermediate{2}  }{  \target{t}^\target{3}  } $.
To show $ \target{A}^\target{3}   \in    \textsf{T}  \textsf{Exploit}   ^{\LOne}_{\LThree}   \ottsym{(}   \source{V}^\source{1}   \ottsym{)}$, it suffices to fix an $ \mathsource{L_1} $ context $ \source{C}^\source{1} $ and trace $ \source{t}^\source{1}   \in   \sbehav{   \source{C}^\source{1}  [   \source{V}^\source{1}   ]  } $, and show that $ \nbrelates[^1_3]{  \source{t}^\source{1}  }{  \target{t}^\target{3}  } $.
By preservation of $ \compileFromTo{\LOne}{\LTwo} $, we know that there exists some $ \intermediate{t}^\intermediate{2}   \in   \ibehav{  \compileFromTo[   \source{C}^\source{1}   ]{\LOne}{\LTwo}   \ottsym{[}   \compileFromTo[   \source{V}^\source{1}   ]{\LOne}{\LTwo}   \ottsym{]} } $ such that $ \brelates[^1_2]{  \source{t}^\source{1}  }{  \intermediate{t}^\intermediate{2}  } $.
By invertibility of the trace relation, we can conclude that $ \nbrelates[^1_3]{  \source{t}^\source{1}  }{  \target{t}^\target{3}  } $.
\end{proof}

\begin{proposition} \label{prop:trace-modular-up}
Let $ \source{V}^\source{1} $ be a component such that $ \compileFromTo{\LTwo}{\LThree} $ preserves traces of $ \compileFromTo[   \source{V}^\source{1}   ]{\LOne}{\LTwo} $ and the trace relation is invertible. Then $ \intermediate{A}^\intermediate{2}   \in    \textsf{T}  \textsf{Exploit}   ^{\LOne}_{\LTwo}   \ottsym{(}   \source{V}^\source{1}   \ottsym{)}$ implies $ \compileFromTo[   \intermediate{A}^\intermediate{2}   ]{\LTwo}{\LThree}   \in    \textsf{T}  \textsf{Exploit}   ^{\LOne}_{\LThree}   \ottsym{(}   \source{V}^\source{1}   \ottsym{)}$.
\end{proposition}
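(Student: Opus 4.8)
The plan is to mirror the proof of \cref{prop:trace-modular-down}, dualized to the ``up'' direction: where that argument pushed a source trace \emph{up} through $\compileFromTo{\LOne}{\LTwo}$ to reach a contradiction at the intermediate level, here I would push the witnessing intermediate trace \emph{up} through $\compileFromTo{\LTwo}{\LThree}$ to manufacture a target witness. First I would unfold the hypothesis $\intermediate{A}^\intermediate{2} \in \textsf{T}\textsf{Exploit}^{\LOne}_{\LTwo}(\source{V}^\source{1})$: this supplies an intermediate trace $\intermediate{t}^\intermediate{2} \in \ibehav{\intermediate{A}^\intermediate{2}[\compileFromTo[\source{V}^\source{1}]{\LOne}{\LTwo}]}$ with the property that $\nbrelates[^1_2]{\source{t}^\source{1}}{\intermediate{t}^\intermediate{2}}$ for every $\mathsource{L_1}$ context $\source{C}^\source{1}$ and every trace $\source{t}^\source{1} \in \sbehav{\source{C}^\source{1}[\source{V}^\source{1}]}$.

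Next I would apply the trace-preservation hypothesis for $\compileFromTo{\LTwo}{\LThree}$, instantiated at the $\mathintermediate{L_2}$ context $\intermediate{A}^\intermediate{2}$ and the trace $\intermediate{t}^\intermediate{2}$. Using modularity of $\compileFromTo{\LTwo}{\LThree}$ together with the compiler-composition identity $\compileFromTo[\compileFromTo[\source{V}^\source{1}]{\LOne}{\LTwo}]{\LTwo}{\LThree} = \compileFromTo[\source{V}^\source{1}]{\LOne}{\LThree}$, preservation yields a target trace $\target{t}^\target{3} \in \tbehav{\compileFromTo[\intermediate{A}^\intermediate{2}]{\LTwo}{\LThree}[\compileFromTo[\source{V}^\source{1}]{\LOne}{\LThree}]}$ with $\brelates[^2_3]{\intermediate{t}^\intermediate{2}}{\target{t}^\target{3}}$. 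This $\target{t}^\target{3}$ is the candidate witness for membership $\compileFromTo[\intermediate{A}^\intermediate{2}]{\LTwo}{\LThree} \in \textsf{T}\textsf{Exploit}^{\LOne}_{\LThree}(\source{V}^\source{1})$.

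It then remains to discharge the universal part of the trace-exploit definition. I would fix an arbitrary $\mathsource{L_1}$ context $\source{C}^\source{1}$ and trace $\source{t}^\source{1} \in \sbehav{\source{C}^\source{1}[\source{V}^\source{1}]}$ and show $\nbrelates[^1_3]{\source{t}^\source{1}}{\target{t}^\target{3}}$ by contradiction. Assuming $\brelates[^1_3]{\source{t}^\source{1}}{\target{t}^\target{3}}$, invertibility of the trace relation (instantiated at $\intermediate{t}^\intermediate{2}$) converts the already-established $\brelates[^2_3]{\intermediate{t}^\intermediate{2}}{\target{t}^\target{3}}$ into $\brelates[^1_2]{\source{t}^\source{1}}{\intermediate{t}^\intermediate{2}}$, which directly contradicts the exploit property extracted in the first step.

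The main obstacle I anticipate is bookkeeping rather than conceptual difficulty: I must ensure that the single existential witness from the $\LOne$--$\LTwo$ exploit is exactly the trace fed into preservation, so that the \emph{same} $\intermediate{t}^\intermediate{2}$ appears both in $\brelates[^2_3]{\intermediate{t}^\intermediate{2}}{\target{t}^\target{3}}$ and in the invertibility instance; and that I invoke preservation as stated for the \emph{compiled} component $\compileFromTo[\source{V}^\source{1}]{\LOne}{\LTwo}$ rather than for $\source{V}^\source{1}$ itself, so that modularity plus the composition identity let me rewrite the target program as $\compileFromTo[\intermediate{A}^\intermediate{2}]{\LTwo}{\LThree}[\compileFromTo[\source{V}^\source{1}]{\LOne}{\LThree}]$. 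Once these indices are aligned, the argument is a direct dual of \cref{prop:trace-modular-down}.
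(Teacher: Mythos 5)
Your proposal is correct and follows essentially the same route as the paper's proof: extract the witnessing intermediate trace $\intermediate{t}^\intermediate{2}$ from the $\LOne$--$\LTwo$ exploit, lift it via trace preservation of $\compileFromTo{\LTwo}{\LThree}$ to a target witness $\target{t}^\target{3}$ with $\brelates[^2_3]{\intermediate{t}^\intermediate{2}}{\target{t}^\target{3}}$, and discharge the universal clause by invertibility. Your explicit contradiction step and the note about the composition identity $\compileFromTo[\compileFromTo[\source{V}^\source{1}]{\LOne}{\LTwo}]{\LTwo}{\LThree} = \compileFromTo[\source{V}^\source{1}]{\LOne}{\LThree}$ merely spell out details the paper leaves implicit.
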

\begin{proof}
  Since $ \intermediate{A}^\intermediate{2}   \in    \textsf{T}  \textsf{Exploit}   ^{\LOne}_{\LTwo}   \ottsym{(}   \source{V}^\source{1}   \ottsym{)}$, there exists some $ \intermediate{t}^\intermediate{2}   \in   \ibehav{  \intermediate{A}^\intermediate{2}   \ottsym{[}   \compileFromTo[   \source{V}^\source{1}   ]{\LOne}{\LTwo}   \ottsym{]} } $ such that, for all $ \source{C}^\source{1} $ and $ \source{t}^\source{1}   \in   \sbehav{   \source{C}^\source{1}  [   \source{V}^\source{1}   ]  } $, we have $ \nbrelates[^1_2]{  \source{t}^\source{1}  }{  \intermediate{t}^\intermediate{2}  } $. 
By preservation of $ \compileFromTo{\LTwo}{\LThree} $, there exists some $ \target{t}^\target{3}   \in   \tbehav{   \compileFromTo[   \intermediate{A}^\intermediate{2}   ]{\LTwo}{\LThree}  [   \compileFromTo[   \source{V}^\source{1}   ]{\LOne}{\LThree}   ]  } $ such that $ \brelates[^2_3]{  \intermediate{t}^\intermediate{2}  }{  \target{t}^\target{3}  } $.
To show $ \compileFromTo[   \intermediate{A}^\intermediate{2}   ]{\LTwo}{\LThree}   \in    \textsf{T}  \textsf{Exploit}   ^{\LOne}_{\LThree}   \ottsym{(}   \source{V}^\source{1}   \ottsym{)}$, it suffices to fix
$ \source{C}^\source{1} $ and $ \source{t}^\source{1}   \in   \sbehav{   \source{C}^\source{1}  [   \source{V}^\source{1}   ]  } $ and show that $ \nbrelates[^1_3]{  \source{t}^\source{1}  }{  \target{t}^\target{3}  } $. This follows from the fact that the trace relation is invertible.
\end{proof}

\subsection{Example}

Consider a compiler stack made up of four languages:
\begin{inparaenum}
\item a simple high-level imperative language $ \IMP $ with natural number and boolean values;
\item a simple C-like language $ \ToyC $ with procedure calls and pointer arithmetic;
\item an assembly-like language $ \ToyA $ with an explicitly managed stack; and
\item a model of hardware $ \ToyH $ that enables the user to observe side-channel information about a program.
\end{inparaenum}
Each stage of the compiler exposes different sorts of exploits:
\begin{inparaenum}
    \item The compiler from $ \IMP $ to $ \ToyC $ enables data-oriented programming exploits because it maps both booleans and natural numbers in $ \IMP $ to integers in $ \ToyC $.
    \item The compiler from $ \ToyC $ to $ \ToyA $ enables return-oriented programming exploits because attackers can read from and write to return pointers.
    \item The compiler from $ \ToyA $ to $ \ToyH $ enables side-channel attacks because attackers can observe timing information about program executions.
\end{inparaenum}

The details of each of these languages, and the exploits between them, are given in \cref{app:examples}.

\begin{lemma}~
  \begin{enumerate}
    \item The compiler from $ \IMP $ to $ \ToyC $ preserves behaviors of whole programs.
    \item The compiler from $ \ToyC $ to $ \ToyA $ preserves behaviors of whole programs.
    \item The compiler from $ \ToyA $ to $ \ToyH $ is correct for whole programs.
  \end{enumerate}
\end{lemma}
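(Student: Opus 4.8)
The plan is to prove all three parts by exhibiting a simulation between the operational semantics of the two languages at each stage, compatible with the corresponding behavior relation $\brelates{}{}$ on traces. For each compiler I would fix a relation between source and target runtime configurations, show it holds at the initial configuration of any whole program and its compilation, and show it is preserved step-by-step while emitting related trace fragments. Since the behavior of a whole program is read off from its (deterministic) execution, induction on the length of the source execution then lifts the per-step diagram to the whole-program statement. Parts (1) and (2) only require the \emph{forward} direction of such a simulation---every source trace is matched by a related target trace---so a standard forward simulation suffices, and this is also exactly the form of preservation consumed by the trace-compositionality results; part (3) additionally requires the converse, which I obtain from determinism of $\ToyA$.

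For part (1), $\IMP \to \ToyC$, the invariant relates an $\IMP$ store $\sigma$ to a $\ToyC$ frame-and-memory pair via the value relation that sends $\mathintermediate{\true}$, $\mathintermediate{\false}$, and naturals to the corresponding integers; each $\IMP$ reduction (assignment, output, conditional, loop unfolding, sequencing) is matched by the compiled command, after checking that $\sigma \vdash e \Downarrow v$ agrees with $\ToyC$ evaluation up to this relation. Since $\IMP$ has no pointers or procedures, the induction is routine. Part (2), $\ToyC \to \ToyA$, is the substantial case: the simulation invariant is precisely the compilation relation on configurations $(\mathsource{\theta} \mid \target{SP}) \leadsto (\mathtarget{M} \mid \target{SP}')$ together with the program-layout relation $(\mathsource{G} \mid \target{PC}) \leadsto (\mathtarget{M} \mid \target{PC}')$ generated by the compiler. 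The auxiliary facts are that $\ToyC$ expression and l-value evaluation are mirrored by $\ToyA$ evaluation once abstract locations $i^{\ell}(i)$ are translated to their $\target{SP}$-relative integer addresses, and that the $\ToyA$ instructions emitted for each $\ToyC$ statement realize the same trace and the same memory update. The delicate cases are \textsc{ToyC\_proc} and \textsc{ToyC\_return}, where I must show the flat $\ToyA$ memory stays synchronized with the abstract $\ToyC$ stack of frames across a call: that \textsf{allocate} places the new frame at the correct $\target{SP}$ offsets, that the saved $\target{PC}$ and $\target{SP}$ slots let \textsc{ToyA\_RETURN} restore the caller's configuration, and that popping the abstract frame corresponds to restoring $\target{SP}$.

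Part (3), $\ToyA \to \ToyH$, is the most direct. The rule \textsc{ToyH\_ToyA} runs a whole $\ToyA$ program $\mathtarget{G} \vdash \mathtarget{\theta}$ to a halting configuration in $n$ steps while emitting its trace $\target{t}^\target{T}$, recording the pair $(\target{t}^\target{T}, n)$. Because $\ToyA$ is deterministic, this embedded execution---and hence the recorded behavior---is unique, so the $\ToyH$ behavior is in exact two-way correspondence with the $\ToyA$ behavior under the behavior relation that carries the step count. This yields both directions of $\brelates{}{}$, giving full correctness rather than mere preservation.

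The main obstacle is the stack-layout bookkeeping in part (2): formulating an invariant strong enough to survive procedure calls and returns. The mismatch between $\ToyC$'s fresh-location allocation (each variable receiving an abstract identifier) and $\ToyA$'s contiguous, $\target{SP}$-relative layout means the simulation relation cannot be a simple pointwise correspondence; it must quantify over the entire call stack and track how offsets accumulate, which is exactly what the $\leadsto$ judgments on frames, stacks, and configurations are designed to encode. Everything else---expression evaluation, the $\target{PC}$ arithmetic for straight-line code, and the determinism argument for (3)---is comparatively mechanical once this invariant is in place.
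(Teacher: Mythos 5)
You take the right overall approach, and in every part your plan is strictly more rigorous than what the paper actually does, so the comparison is worth spelling out. For part (1) the paper's entire proof is a change of interface: the behavior of an $\IMP$ whole program quantifies over initial stores, compilation turns the free variables into the parameters of \texttt{main}, so a store witnessing a trace is converted into integer inputs to the compiled \texttt{main}, ``which will produce the same trace in $\ToyC$''---and that final clause is asserted, not proved. Your forward simulation with a boolean/natural-to-integer value relation is exactly the missing lemma, so your plan subsumes the paper's argument rather than conflicting with it. For part (2) the paper gives no proof at all: the statement appears in the appendix as an unproven theorem, backed only by the accompanying PLT Redex implementation, so your stack-layout invariant and your identification of call/return synchronization as the hard case fill a hole the authors left open. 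For part (3) you and the paper make the same one-line observation that the compiled program just runs the $\ToyA$ code under \texttt{get\_info}; your additional appeal to determinism is unnecessary---the rule \textsc{ToyH\_ToyA} fires once per halting $\ToyA$ execution and emits exactly that execution's trace, so the two behavior sets agree execution-by-execution whether or not $\ToyA$ is deterministic---but it is harmless.

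The one place your plan needs an idea it does not currently contain is the treatment of source executions that get \emph{stuck}: ill-typed $\IMP$ programs in part (1), and out-of-bounds accesses in $\ToyC$ in part (2). These are precisely the programs the paper uses to show these two compilers are \emph{not} correct: the source execution halts (on an error) while the compiled program keeps running and emits a strictly longer trace. Your induction ``on the length of the source execution'' terminates at such a stuck configuration with the target still mid-execution, and at that point the per-step diagram does not hand you a trace in $ \tbehav{  \compile[  \source{P}^\source{S}  ]  } $ related to the source trace. Whether the lifting from per-step simulation to whole-program preservation survives depends on a reading the paper leaves implicit---namely that only properly completed executions (termination in $\SKIP$, return from \texttt{main}, or $\HALT$) contribute traces to a behavior, so that erroring source executions contribute nothing and the stuck case is vacuous. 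Since this boundary is exactly what separates ``preserves behaviors'' in parts (1) and (2) from ``correct'' in part (3), your write-up should state that reading and handle the stuck case explicitly rather than folding it into the routine induction.
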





Because of these correctness results and the fact that the behavior relations  are invertible, we have the following compositionality results:

\begin{corollary}
  Let $\source{A}$ be a data-oriented programming attack in $ \ToyC $ and let $\intermediate{U}^\intermediate{I}$ be an $ \IMP $ component. If $\source{A}  \in    \textsf{T}  \textsf{Exploit}   ^{  \IMP  }_{  \ToyC  }   \ottsym{(}  \intermediate{U}^\intermediate{I}  \ottsym{)}$ then $ \compileFromTo[  \source{A}  ]{  \ToyC  }{  \ToyA  }   \in    \textsf{T}  \textsf{Exploit}   ^{  \IMP  }_{  \ToyA  }   \ottsym{(}  \intermediate{U}^\intermediate{I}  \ottsym{)}$.
\end{corollary}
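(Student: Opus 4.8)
The plan is to obtain this corollary as a direct instance of \cref{prop:trace-modular-up}, instantiating the three-language stack with $\mathsource{L_1} = \IMP$, $\mathintermediate{L_2} = \ToyC$, and $\mathtarget{L_3} = \ToyA$, the component $\source{V}^\source{1} = \intermediate{U}^\intermediate{I}$, and the intermediate attacker context $\intermediate{A}^\intermediate{2} = \source{A}$. Under this dictionary the premise $\source{A} \in \textsf{T}\textsf{Exploit}^{\IMP}_{\ToyC}(\intermediate{U}^\intermediate{I})$ is literally $\intermediate{A}^\intermediate{2} \in \textsf{T}\textsf{Exploit}^{\LOne}_{\LTwo}(\source{V}^\source{1})$, and the goal $\compileFromTo[\source{A}]{\ToyC}{\ToyA} \in \textsf{T}\textsf{Exploit}^{\IMP}_{\ToyA}(\intermediate{U}^\intermediate{I})$ is literally $\compileFromTo[\intermediate{A}^\intermediate{2}]{\LTwo}{\LThree} \in \textsf{T}\textsf{Exploit}^{\LOne}_{\LThree}(\source{V}^\source{1})$. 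Hence the whole task reduces to discharging the two hypotheses of the proposition for this concrete stack.

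The first hypothesis, invertibility of the trace relation between $\IMP$, $\ToyC$, and $\ToyA$, is the standing assumption recorded in the sentence introducing the corollary, so I would simply cite it. The second hypothesis asks that $\compileFromTo{\ToyC}{\ToyA}$ preserve the traces of the component $\compileFromTo[\intermediate{U}^\intermediate{I}]{\IMP}{\ToyC}$. Here the preceding lemma gives me only that $\compileFromTo{\ToyC}{\ToyA}$ preserves behaviors of \emph{whole} programs, i.e.\ that for every whole $\ToyC$ program each of its traces is related to some trace of its compilation (the forward, existential direction of the behavior relation). To upgrade this whole-program statement to the component-level statement the proposition needs, I would fix an arbitrary $\ToyC$ context and a trace of that context linked with $\compileFromTo[\intermediate{U}^\intermediate{I}]{\IMP}{\ToyC}$, regard the linked program as a single whole $\ToyC$ program, apply whole-program preservation to produce a related $\ToyA$ trace of its compilation, and finally invoke modularity of $\compileFromTo{\ToyC}{\ToyA}$ to re-split that compiled program into the compiled context applied to the compiled component, exactly as the definition of trace preservation with respect to $\compileFromTo[\intermediate{U}^\intermediate{I}]{\IMP}{\ToyC}$ requires.

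The main obstacle is precisely this bridge from whole-program preservation to component preservation: the lemma and the proposition speak about subtly different objects, and the argument rests on the modularity equation $\tbehav{\compile[\source{C}^\source{S}[\source{U}^\source{S}]]} = \tbehav{\compile[\source{C}^\source{S}][\compile[\source{U}^\source{S}]]}$ assumed of all compilers in the paper, which is what lets me rewrite the monolithic compiled program as a linked context and component. A secondary point I would watch is to use the lemma only in its forward direction, since trace preservation demands nothing about target traces that lack a source preimage; once these two points are in place, \cref{prop:trace-modular-up} closes the argument with no further work.
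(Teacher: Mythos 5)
Your proposal is correct and takes essentially the same route as the paper, which states this corollary without proof as a direct instance of \cref{prop:trace-modular-up} combined with the preservation lemma and the stated invertibility of the behavior relations. The one step you spell out explicitly---upgrading whole-program trace preservation of $ \compileFromTo{  \ToyC  }{  \ToyA  } $ to preservation with respect to the component $ \compileFromTo[  \intermediate{U}^\intermediate{I}  ]{  \IMP  }{  \ToyC  } $ by linking, applying whole-program preservation, and rewriting via the modularity equation---is precisely the implicit bridge the paper relies on, and your caution about only needing the forward (existential) direction of preservation is also right.
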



\begin{corollary}
   Let $\target{A}$ be a return-oriented programming attack in $ \ToyA $ and let $\intermediate{U}^\intermediate{I}$ be an $ \IMP $ component. If $\target{A}  \in    \textsf{T}  \textsf{Exploit}   ^{  \ToyC  }_{  \ToyA  }   \ottsym{(}   \compileFromTo[  \intermediate{U}^\intermediate{I}  ]{  \IMP  }{  \ToyC  }   \ottsym{)}$ then $\target{A}  \in    \textsf{T}  \textsf{Exploit}   ^{  \IMP  }_{  \ToyC  }   \ottsym{(}  \intermediate{U}^\intermediate{I}  \ottsym{)}$.
\end{corollary}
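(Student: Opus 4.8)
The plan is to recognize this corollary as the instance of \cref{prop:trace-modular-down} obtained by setting $ \mathsource{L_1}  =  \IMP $, $ \mathintermediate{L_2}  =  \ToyC $, and $ \mathtarget{L_3}  =  \ToyA $, with the $ \IMP $ component $\intermediate{U}^\intermediate{I}$ in the role of the vulnerable component $ \source{V}^\source{1} $ and the $ \ToyA $ attack $\target{A}$ in the role of $ \target{A}^\target{3} $. Under this instantiation the hypothesis $\target{A}  \in    \textsf{T}  \textsf{Exploit}   ^{  \ToyC  }_{  \ToyA  }   \ottsym{(}   \compileFromTo[  \intermediate{U}^\intermediate{I}  ]{  \IMP  }{  \ToyC  }   \ottsym{)}$ is exactly the antecedent $ \target{A}^\target{3}   \in    \textsf{T}  \textsf{Exploit}   ^{\LTwo}_{\LThree}   \ottsym{(}   \compileFromTo[   \source{V}^\source{1}   ]{\LOne}{\LTwo}   \ottsym{)}$, and the consequent $ \target{A}^\target{3}   \in    \textsf{T}  \textsf{Exploit}   ^{\LOne}_{\LThree}   \ottsym{(}   \source{V}^\source{1}   \ottsym{)}$ is the desired membership in the trace-exploit set of the whole stack $ \compileFromTo{  \IMP  }{  \ToyA  } $. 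Hence the entire argument reduces to discharging the two side conditions of \cref{prop:trace-modular-down}: that $ \compileFromTo{  \IMP  }{  \ToyC  } $ preserves traces of $\intermediate{U}^\intermediate{I}$, and that the relevant trace relation is invertible.

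First I would establish the trace-preservation side condition, which is where essentially all the work lies. The preceding lemma supplies only that $ \compileFromTo{  \IMP  }{  \ToyC  } $ preserves \emph{behaviors} of whole programs, so two reductions are needed. The first is from behaviors to traces: by the decomposition of the behavior relation into a trace relation recalled in \cref{sec:mainideas}, the forward conjunct of $ \brelates{ \source{b}^\source{S} }{ \target{b}^\target{T} } $ is $\forall  \source{t}^\source{S}  \in  \source{b}^\source{S}  .\xspace  \exists\xspace  \target{t}^\target{T}  \in  \target{b}^\target{T}  .\xspace   \brelates{ \source{t}^\source{S} }{ \target{t}^\target{T} } $, and instantiating this at every whole program turns behavior preservation into trace preservation for whole programs. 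The second is from whole programs to the component: specializing to whole programs of the form $ \source{C}^\source{S} [  \source{U}^\source{S}  ] $ with $\intermediate{U}^\intermediate{I}$ in the role of $\source{U}^\source{S}$, and rewriting the compiled side with the compiler's modularity equation $ \tbehav{  \compile[   \source{C}^\source{S} [  \source{U}^\source{S}  ]   ]  } = \tbehav{   \compile[  \source{C}^\source{S}  ]  [   \compile[  \source{U}^\source{S}  ]   ]  } $, yields precisely the statement that $ \compileFromTo{  \IMP  }{  \ToyC  } $ preserves traces of the component $\intermediate{U}^\intermediate{I}$.

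Second, I would confirm that the trace relation among $ \IMP $, $ \ToyC $, and $ \ToyA $ is invertible in the sense the proposition requires; this is the standing assumption noted in the text accompanying the compositionality results, and it can be verified directly against the concrete trace relations defined in \cref{app:examples}. With both side conditions in hand, \cref{prop:trace-modular-down} applies verbatim and delivers the conclusion.

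The hard part is not the appeal to the proposition, which is purely mechanical, but the bridging in the second paragraph: the lemma is phrased in terms of \emph{behaviors of whole programs}, whereas \cref{prop:trace-modular-down} consumes \emph{trace preservation of a component}. The two moves that close this gap---restricting the behavior relation to its forward trace direction, and promoting a whole-program guarantee to a component guarantee through modularity---carry what little mathematical content the corollary has; everything else is substitution into the proposition.
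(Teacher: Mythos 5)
Your proposal is correct and is essentially the paper's own argument: the corollary is \cref{prop:trace-modular-down} instantiated at $\mathsource{L_1} = \IMP$, $\mathintermediate{L_2} = \ToyC$, $\mathtarget{L_3} = \ToyA$ with $\source{V}^\source{1} = \intermediate{U}^\intermediate{I}$ and $\target{A}^\target{3} = \target{A}$, the side conditions being discharged exactly as you describe (whole-program trace preservation from the preceding lemma, promoted to preservation for the component via modularity, plus the standing invertibility assumption), and your reading of the conclusion as membership in $\textsf{T}\textsf{Exploit}^{\IMP}_{\ToyA}(\intermediate{U}^\intermediate{I})$---correcting the apparent typo $\ToyC$ in the subscript, which would not even typecheck since $\target{A}$ is a $\ToyA$ context---is the intended one. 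One remark that does not affect validity: your first reduction construes ``preserves behaviors of whole programs'' as the full bidirectional relation $\brelates{\source{b}^\source{S}}{\target{b}^\target{T}}$, but the paper itself refutes that reading (the appendix proves this compiler is \emph{not} correct for whole programs); the lemma can only mean the forward trace-preservation statement that the appendix actually proves, so your extraction of the forward conjunct is a restatement rather than a weakening, and the rest of your argument is unchanged.
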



\begin{corollary}
  Let $\target{A}$ be a return-oriented programming attack in $ \ToyA $ and let $\source{V}$ be a $ \ToyC $ component. If $\target{A}  \in    \textsf{Exploit}  ^{  \ToyC  }_{  \ToyA  }   \ottsym{(}  \source{V}  \ottsym{)}$ then $ \compileFromTo[  \target{A}  ]{  \ToyA  }{  \ToyH  }   \in    \textsf{Exploit}  ^{  \ToyC  }_{  \ToyH  }   \ottsym{(}  \source{V}  \ottsym{)}$.
\end{corollary}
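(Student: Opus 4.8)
The plan is to obtain this corollary as a direct instance of the first (unlabeled) compositionality proposition for hyperproperty exploits---the ``up'' direction, which states that if $\compileFromTo{\LTwo}{\LThree}$ is correct with respect to $\compileFromTo[\source{V}^\source{1}]{\LOne}{\LThree}$ and behaviors are invertible, then $\intermediate{A}^\intermediate{2} \in \textsf{Exploit}^{\LOne}_{\LTwo}(\source{V}^\source{1})$ implies $\compileFromTo[\intermediate{A}^\intermediate{2}]{\LTwo}{\LThree} \in \textsf{Exploit}^{\LOne}_{\LThree}(\source{V}^\source{1})$. First I would instantiate the three-language stack with $\LOne = \ToyC$, $\LTwo = \ToyA$, and $\LThree = \ToyH$, so that the source component $\source{V}^\source{1}$ becomes the $\ToyC$ component $\source{V}$ and the intermediate attacker $\intermediate{A}^\intermediate{2}$ becomes the $\ToyA$ return-oriented programming attack $\target{A}$. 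Under this reading the hypothesis $\target{A} \in \textsf{Exploit}^{\ToyC}_{\ToyA}(\source{V})$ and the conclusion $\compileFromTo[\target{A}]{\ToyA}{\ToyH} \in \textsf{Exploit}^{\ToyC}_{\ToyH}(\source{V})$ match the proposition exactly, modulo the change of language indices.

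It then remains to discharge the two hypotheses of that proposition. For correctness, item~3 of the preceding lemma states that the compiler $\compileFromTo{\ToyA}{\ToyH}$ is correct for whole programs. Because every compiler in the paper is assumed modular, whole-program correctness specializes to correctness with respect to any component: for an arbitrary $\ToyA$ context and the $\ToyA$ component $\compileFromTo[\source{V}]{\ToyC}{\ToyA}$, the linked program is a whole $\ToyA$ program, and modularity equates the $\ToyH$-behavior of compiling-then-linking with that of linking-then-compiling, which is exactly the component-wise correctness the proposition's proof consumes. For invertibility, I would appeal to the observation already recorded in the text just before the corollaries, namely that the behavior relations among $\ToyC$, $\ToyA$, and $\ToyH$ are invertible. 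With both hypotheses in hand, the proposition applies and yields the claim.

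The only point requiring genuine care---and where I expect the real content to sit---is the choice of \emph{which} compositionality result to use. Unlike the previous corollary, which concerns trace exploits and therefore needs only the weaker hypothesis that $\compileFromTo{\ToyA}{\ToyH}$ preserves traces, here we propagate a full hyperproperty exploit, so we must invoke the hyperproperty up-direction proposition (not the trace variant \cref{prop:trace-modular-up}, nor the down-direction \cref{prop:modular-up}) and correspondingly rely on the \emph{stronger} correctness-for-whole-programs fact about $\ToyA \to \ToyH$. Confirming that this stronger fact is indeed the one supplied by the lemma (it is, by item~3), and that the ROP attack is being treated as a member of $\textsf{Exploit}$ rather than $\textsf{T}\textsf{Exploit}$, is what makes the instantiation legitimate; everything else is bookkeeping about matching the language indices.
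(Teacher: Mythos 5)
Your proposal is correct and follows essentially the same route as the paper: the corollary is exactly the unlabeled hyperproperty compositionality proposition instantiated with source $\ToyC$, intermediate $\ToyA$, and target $\ToyH$, with its two hypotheses discharged by item~3 of the preceding lemma (whole-program correctness of $ \compileFromTo{  \ToyA  }{  \ToyH  } $, promoted to component-level correctness via modularity, as you rightly spell out) and by invertibility of the behavior relations. The only quibble is terminological: your ``up''/``down'' labels are inverted relative to the paper's (\cref{prop:modular-up} is what the paper calls the up direction), but since you identify each proposition by its statement rather than its name, this has no effect on the argument.
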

Notice that this is the only result that holds for hyperproperties, not just trace properties, since $ \compileFromTo{  \ToyA  }{  \ToyH  } $ is the only compiler correct for whole programs. 


\begin{corollary} \label{cor:ToyA-ToyH-compositionality}
  Let $\hardware{A}^\hardware{H}$ be a side-channel attack in $ \ToyH $ and let $\source{V}$ be a $ \ToyC $ component. If $\target{A}  \in    \textsf{T}  \textsf{Exploit}   ^{  \ToyA  }_{  \ToyH  }   \ottsym{(}   \compileFromTo[  \source{V}  ]{  \ToyC  }{  \ToyA  }   \ottsym{)}$ then $\target{A}  \in    \textsf{T}  \textsf{Exploit}   ^{  \ToyC  }_{  \ToyH  }   \ottsym{(}  \source{V}  \ottsym{)}$.
\end{corollary}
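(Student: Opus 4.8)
The plan is to recognize this corollary as a direct instance of Proposition~\ref{prop:trace-modular-down}, specialized to the three languages $\LOne = \ToyC$, $\LTwo = \ToyA$, and $\LThree = \ToyH$, with vulnerable component $\source{V}^\source{1} = \source{V}$ and with the attacker $\target{A}^\target{3}$ taken to be the side-channel attack $\hardware{A}^\hardware{H}$. Since $\hardware{A}^\hardware{H}$ is a $\ToyH$ context, it plays exactly the role of the level-$3$ attacker in that proposition; relative to the triple $(\ToyC,\ToyA,\ToyH)$ the language $\ToyH$ is the target, which is why the statement also refers to this same object as $\target{A}$. Under this identification the hypothesis $\target{A} \in \textsf{T}\textsf{Exploit}^{\ToyA}_{\ToyH}(\compileFromTo[\source{V}]{\ToyC}{\ToyA})$ and the desired conclusion $\target{A} \in \textsf{T}\textsf{Exploit}^{\ToyC}_{\ToyH}(\source{V})$ are literally the premise and conclusion of Proposition~\ref{prop:trace-modular-down}, so the only work is to discharge that proposition's two side conditions.

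First I would establish that $\compileFromTo{\ToyC}{\ToyA}$ preserves traces of the component $\source{V}$. The preceding lemma asserts that the compiler from $\ToyC$ to $\ToyA$ preserves behaviors, and hence traces, of \emph{whole} programs, whereas Proposition~\ref{prop:trace-modular-down} asks for trace preservation \emph{with respect to} $\source{V}$, quantified over all source contexts $\source{C}^\source{S}$. The bridge is modularity: since every compiler in the paper is assumed modular, $\tbehav{\compile[\source{C}^\source{S}[\source{V}]]} = \tbehav{\compile[\source{C}^\source{S}][\compile[\source{V}]]}$, so applying whole-program trace preservation to each whole program $\source{C}^\source{S}[\source{V}]$ yields, for every $\source{t}^\source{S} \in \sbehav{\source{C}^\source{S}[\source{V}]}$, a related $\target{t}^\target{T} \in \tbehav{\compile[\source{C}^\source{S}][\compile[\source{V}]]}$ with $\brelates{\source{t}^\source{S}}{\target{t}^\target{T}}$, which is precisely trace preservation with respect to $\source{V}$. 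Second, invertibility of the trace relation across $\ToyC$, $\ToyA$, and $\ToyH$ is a standing assumption for this example (the same one invoked for the neighboring corollaries), so it may be cited directly.

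With both side conditions in hand, Proposition~\ref{prop:trace-modular-down} applies verbatim and delivers the conclusion; internally it uses invertibility to transport $\nbrelates[^2_3]{\intermediate{t}^\intermediate{2}}{\target{t}^\target{3}}$ back to $\nbrelates[^1_3]{\source{t}^\source{1}}{\target{t}^\target{3}}$, but that reasoning has already been carried out and need not be repeated. I expect essentially no mathematical difficulty, since all the real content lives in Proposition~\ref{prop:trace-modular-down} itself. The one point demanding care — and the only place the argument could go wrong — is the whole-program-versus-component gap noted above: one must explicitly confirm that the lemma's whole-program preservation genuinely upgrades to component preservation of $\source{V}$ via modularity, rather than silently assuming the component-level statement. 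Once that step is verified, the corollary follows as a one-line instantiation of the proposition.
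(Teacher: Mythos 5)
Your proposal is correct and takes essentially the same approach as the paper, which obtains this corollary by instantiating Proposition~\ref{prop:trace-modular-down} at the triple $(\ToyC,\ToyA,\ToyH)$, citing trace preservation of $\compileFromTo{\ToyC}{\ToyA}$ for whole programs together with invertibility of the trace relations. Your explicit use of modularity to upgrade whole-program trace preservation to preservation with respect to the component $\source{V}$ is exactly the step the paper leaves tacit, and it is the right way to discharge that side condition.
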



\section{Weird behaviors and weird states}
\label{sec:dullien}
\label{sec:weird-behavior}

The definition of secure compilation helps us understand when it is safe to link a compiled high-level component $ \compile[  \source{V}  ] $ with some target code that was not compiled from the same source language. This is common practice in modular compilers, as shared libraries may be written in other languages like LLVM or assembly and compiled separately to a common target library. We argue that this linking operation is safe as long as the target context is not an exploit---as long as it has behavior compatible with some source language code. Therefore, even if the target context does not correspond \emph{operationally} to any source level code, it is safe as long as it corresponds \emph{behaviorally}.

This observation leads to some unexpected consequences.
Consider an attack against a C component $\source{V}$ of the form  $ \compile[  \source{A}  ] $ for some source-level C context $\source{A}$, where $\source{A}  \ottsym{[}  \source{V}  \ottsym{]}$ has some undefined behavior. The fact that $ \source{undef}   \in   \sbehav{  \source{A} [  \source{V}  ]  } $ does not necessarily mean that $ \compile[  \source{A}  ]   \in   \textsf{Exploit}   \ottsym{(}  \source{V}  \ottsym{)}$, since there might exist some other C context $\source{C}^\source{S}$ such that $ \brelates{  \sbehav{  \source{C}^\source{S} [  \source{V}  ]  }  }{  \tbehav{   \compile[  \source{A}  ]  [   \compile[  \source{V}  ]   ]  }  } $. This is the case with \cref{lst:undef} discussed in \cref{sec:relprop}.


The focus on the \emph{behavior} of an attack rather than the \emph{mechanism} of an attack distinguishes us from prior work that seeks to understand exploits, such as \citeauthor{Szekeres2013}'s survey of memory corruption exploit techniques~\citep{Szekeres2013} and Dullien's characterization of weird machines~\citep{Dullien2018}, where any behavior that goes beyond the footprint of the execution of a well-formed source program is considered an exploit. \citeauthor{Dullien2018} characterizes  weird machines as the behaviors arising from \emph{weird states} in a finite state machine (FSM). In that setting, programs are emulators of finite state machines; the \emph{intended finite state machine} (IFSM) is the idealized machine the programmer has in her head of what the program should do, and the implementation, called the CPU, is an FSM containing the low-level states that correspond to an implementation. A weird state is one that is present in the implementation but does not correspond to a state in the IFSM.

\subsection{Weird state machines}

Before discussing its relationship to our framework, we outline Dullien's formalism and map it to our core definitions. A finite state machine\footnote{Dullien's presentation also considers transducers, but here we restrict our attention to ordinary FSMs. We expect the results to carry over for transducers.}
(FSM) is a 5-tuple $\ottsym{(}  \ottnt{Q}  \ottsym{,}  \Sigma  \ottsym{,}  \delta  \ottsym{,}  \ottnt{q_{{\mathrm{0}}}}  \ottsym{,}  Q_A  \ottsym{)}$ where $\ottnt{Q}$ is a set of states, $\Sigma$ is an input alphabet, $\delta$ is a transition relation, $\ottnt{q_{{\mathrm{0}}}}  \in  \ottnt{Q}$ is the starting state, and $Q_A  \subseteq  \ottnt{Q}$ is the set of accepting states. We write $ \ottnt{q_{{\mathrm{1}}}}  \overstep{ s }  \ottnt{q_{{\mathrm{2}}}} $ for $(\ottnt{q_{{\mathrm{1}}}},s,\ottnt{q_{{\mathrm{2}}}}) ∈ \delta$.
The behavior of a state $\ottnt{q}  \in  \ottnt{Q}$ is a pair of sets $(X_{{\mathrm{1}}},X_{{\mathrm{2}}})$ where $X_{{\mathrm{1}}}$ is the set of strings that could have led the FSM to the state $\ottnt{q}$, and $X_{{\mathrm{2}}}$ is the set of strings that would be accepted by the FSM starting from $\ottnt{q}$.
\[
     \behav{ \ottnt{q}  \ottsym{[}  \ottnt{FSM}  \ottsym{]} }  = \left(  \left\{  s  \mid   \ottnt{q_{{\mathrm{0}}}}  \oversteps{ s }  \ottnt{q}   \right\} 
                                ,  \left\{  s  \mid  \exists\xspace  \ottnt{q'}  \in  Q_A  .\xspace   \ottnt{q}  \oversteps{ s }  \ottnt{q'}   \right\}  \right)
\]

In Dullien's presentation the IFSM has the form $ \IFSM  ≜ \ottsym{(}   \source{Q}_{\IFSM}   \ottsym{,}   \mathsource{\Sigma}   \ottsym{,}   \mathsource{\delta}   \ottsym{,}  \source{q}^\source{S}_{{\mathrm{0}}}  \ottsym{,}   \source{Q}_{A}   \ottsym{)}$, and the implementation has the form $ \CPU  ≜ \ottsym{(}   \target{Q}_{\CPU}   \ottsym{,}   \mathtarget{\Sigma}   \ottsym{,}   \mathtarget{\delta}   \ottsym{,}  \target{q}^\target{T}_{{\mathrm{0}}}  \ottsym{,}   \target{Q}_{A}   \ottsym{)}$. He also asserts that there is a \emph{representation map} $ \gamma  :  \source{Q}_{\IFSM}  →  \mathcal{P}(  \target{Q}_{\CPU}  ) $ relating states of the IFSM to sets of states of the implementation. Dullien imposes very little structure onto this representation map.


A CPU state is called \emph{sane} ($\target{q}^\target{T}  \in   \target{Q}_{ \textsf{Sane} } $) if $\target{q}^\target{T}$ is in the image of $ \gamma $, and it is called \emph{transitory} ($ \target{q}^\target{T} _t   \in   \target{Q}_{ \textsf{Trans} } $) if it is a benign step between sane states introduced by the implementation:
\begin{itemize}
    \item there exist states $\source{q}^\source{S}_{{\mathrm{1}}},\source{q}^\source{S}_{{\mathrm{2}}} ∈  \source{Q}_{\IFSM} $ and $s  \in  \Sigma$ such that $ \source{q}^\source{S}_{{\mathrm{1}}}  \overstep{ s }  \source{q}^\source{S}_{{\mathrm{2}}} $;
    \item there exist states $\target{q}^\target{T}_{{\mathrm{1}}}  \in   \gamma   \ottsym{(}  \source{q}^\source{S}_{{\mathrm{1}}}  \ottsym{)}$ and $\target{q}^\target{T}_{{\mathrm{2}}}  \in   \gamma   \ottsym{(}  \source{q}^\source{S}_{{\mathrm{2}}}  \ottsym{)}$ such that $\target{q}^\target{T}_{{\mathrm{1}}} →^∗  \target{q}^\target{T} _t  →^∗ \target{q}^\target{T}_{{\mathrm{2}}}$ such that
    \item no state in this sequence is in the image of $ \gamma $; and
    \item all sequences of transitions from $ \target{q}^\target{T} _t $ lead to $\target{q}^\target{T}_{{\mathrm{2}}}$. 
\end{itemize}
Finally, a state is called \emph{weird} ($ \target{q}^\target{T} _w   \in   \target{Q}_{ \textsf{Weird} } $) if it is neither sane nor transitory.


\subsection{Reframing Dullien's presentation}

To map Dullien's presentation onto our own, we take the source and target languages to both be collections of FSMs, and program components to be particular finite state machines. A context is a state in the FSM, so a whole program is an FSM along with an initial state.

Notice that we do not have a compiler between the source and target languages per se, but instead we have an \emph{implementation relation} corresponding to the representation map $ \gamma $ between the IFSM and the implementations. We write this relation $ \CPU  ∈ \compile[  \IFSM  ]$ as opposed to $ \CPU =\compile[  \IFSM  ]$, and we do not lose anything from the change of perspective.
 
We say that $γ$ \emph{respects behaviors} if, whenever
$\source{q}^\source{S}  \in   \source{Q}_{\IFSM} $ and $\target{q}^\target{T}  \in   \target{Q}_{\CPU} $ such that $ \brelates{  \sbehav{  \source{q}^\source{S} [   \IFSM   ]  }  }{  \tbehav{  \target{q}^\target{T} [   \CPU   ]  }  } $, it is the case that $\target{q}^\target{T}$ is either sane or transitory. This  property is not required in Dullien's presentation, but we argue that if a weird state has the same behavior as an intended state, the representation map should not consider it weird. This does not imply that the compiler is necessarily correct, since the representation map could also map $\source{q}^\source{S}$ to some inequivalent states.

\begin{theorem}
    If $ \gamma $ respects behaviors and $ \target{q}^\target{T} _w   \in   \target{Q}_{ \textsf{Weird} } $ is a weird state,
   then $ \target{q}^\target{T} _w   \in   \textsf{Exploit}   \ottsym{(}   \IFSM   \ottsym{)}$.
\end{theorem}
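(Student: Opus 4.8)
The plan is to derive the result as the contrapositive of the hypothesis that $\gamma$ respects behaviors, since that hypothesis is phrased precisely so as to rule out exactly the behavioral coincidences that a non-exploit would require. First I would specialize \cref{def:WM} to the reframed FSM setting: the vulnerable component is $\IFSM$, the fixed implementation is some $\CPU \in \compile[\IFSM]$ (recall that here $\compile$ is an implementation relation, not a function, so we reason about one such $\CPU$), a target attacker context is a CPU state, and a source context is an IFSM state $\source{q}^\source{S} \in \source{Q}_{\IFSM}$ with linking $\source{q}^\source{S}[\IFSM]$. Under this dictionary, the goal $ \target{q}^\target{T} _w \in \textsf{Exploit}(\IFSM)$ unfolds to the statement that there is no $\source{q}^\source{S} \in \source{Q}_{\IFSM}$ with $ \brelates{ \sbehav{ \source{q}^\source{S} [ \IFSM ] } }{ \tbehav{ \target{q}^\target{T} _w [ \CPU ] } } $.

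Next I would argue by contradiction. Suppose $ \target{q}^\target{T} _w \notin \textsf{Exploit}(\IFSM)$; then by the unfolding above there exists a source state $\source{q}^\source{S} \in \source{Q}_{\IFSM}$ with $ \brelates{ \sbehav{ \source{q}^\source{S} [ \IFSM ] } }{ \tbehav{ \target{q}^\target{T} _w [ \CPU ] } } $. Because $\gamma$ respects behaviors, this behavioral relation forces $ \target{q}^\target{T} _w $ to be sane or transitory, i.e.\ $ \target{q}^\target{T} _w \in \target{Q}_{ \textsf{Sane} } \cup \target{Q}_{ \textsf{Trans} } $. But $ \target{q}^\target{T} _w $ is a weird state, so by definition it is neither sane nor transitory, contradicting the previous line. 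Hence no such $\source{q}^\source{S}$ exists and $ \target{q}^\target{T} _w \in \textsf{Exploit}(\IFSM)$.

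I do not anticipate any genuinely difficult step; the argument is essentially definitional bookkeeping, and the single place demanding care --- the closest thing to an obstacle --- is making the correspondence between \cref{def:WM} and Dullien's reframed machines fully precise. Specifically, I must confirm that $\target{A}[\compile[\source{V}]]$ is correctly read as the whole program $ \target{q}^\target{T} _w [ \CPU ] $ for a fixed implementation, and that the existential over source contexts in \cref{def:WM} is exactly the existential over IFSM states hidden inside the ``respects behaviors'' property. Once that alignment is pinned down, the theorem is an immediate restatement of the contrapositive of that property, using only that weirdness is defined as the complement of sane-or-transitory.
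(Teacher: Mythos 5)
Your proof is correct and is exactly the intended argument: the paper states this theorem without proof, treating it as immediate, and your contrapositive reading of the ``respects behaviors'' condition (a behavioral match with some IFSM state forces sane-or-transitory, contradicting weirdness) combined with the unfolding of \cref{def:WM} under the reframing (contexts are states, $\CPU \in \compile[\IFSM]$ an implementation) is precisely that omitted bookkeeping. Your care about $\compile$ being an implementation relation rather than a function also matches the paper's explicit remark that nothing is lost by this change of perspective.
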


Importantly, the definition of exploits used in this paper includes weird machines that are not included in Dullien's version. Suppose there is a transition $ \target{q}^\target{T}_{{\mathrm{1}}}  \overstep{ s }  \target{q}^\target{T}_{{\mathrm{2}}} $ between two sane states in $ \CPU $ (meaning that there exist IFSM states $\source{q}^\source{S}_{\ottmv{i}}$ such that $\target{q}^\target{T}_{\ottmv{i}}  \in   \gamma   \ottsym{(}  \source{q}^\source{S}_{\ottmv{i}}  \ottsym{)}$) but there is no transition $ \source{q}^\source{S}_{{\mathrm{1}}}  \overstep{ s }  \source{q}^\source{S}_{{\mathrm{2}}} $ in the IFSM. This would not be considered an exploit in Dullien's definition, because there are no weird states involved, but the implementation clearly behaves in a way not intended by the IFSM.

\section{Towards understanding exploits}
\label{sec:discussion}
Ultimately, the reason why we are studying exploits is to make software systems more secure. In this section we sketch ways in which future work could make inroads towards understanding exploits and weird machines at a deeper level.

\subsection{What vulnerabilities exist?}

First, we can use this framework to ask: given a particular attacker class or set of language features, what kinds of vulnerabilities exist? As a concrete example, consider the first stage of the compiler from \cref{sec:compositionality}, which compiles a simple imperative language $ \IMP $ to a C-like language $ \ToyC $. The $ \IMP $ language has natural number and boolean values, arithmetic operations, if and while statements, and an $ \OUTPUT{ \ottmv{x} } $ statement that emits a value to an output trace. The $ \ToyC $ language extends these features with procedure calls, integer and pointer values, and pointer arithmetic.

The fact that natural numbers and booleans in $ \IMP $ are both represented as integers in $ \ToyC $ means that attackers can violate the data structure abstractions in $ \IMP $. Even a simple attacker class $ \mathsource{\Attack} $ in $ \ToyC $ with  no control flow or output capabilities can result in data-oriented programming vulnerabilities~\citep{Hu2015,Hu2016}.

In the compiler, components in $ \IMP $ are compiled to a procedure declaration $ \texttt{hole}   \ottsym{(}   \vect{ \ottmv{x_{\ottmv{i}}} }   \ottsym{;}  .\xspace  \ottsym{)}  \ottsym{\{}  \source{c}  \ottsym{\}}$ in $ \ToyC $. An attacker context $\source{A}  \in   \mathsource{\Attack} $ is a procedure called $ \texttt{main} $ that can invoke the component as follows:
\begin{align*}
  \source{A} &::=  \texttt{main}   \ottsym{(}   \vect{ \ottmv{x_{\ottmv{i}}} }   \ottsym{;}   \vect{ \ottmv{y_{\ottmv{j}}} }   \ottsym{)}  \ottsym{\{}  \source{a}  \ottsym{\}}  \tag{$ \ToyC $ DOP attacks} \\
  \source{a} &::= \ottmv{x}  \coloneqq  \source{e}  \ottsym{;}  \source{a} ∣  \texttt{hole}   \ottsym{(}  \ottmv{x_{{\mathrm{1}}}}  \ottsym{,} \, .. \, \ottsym{,}  \ottmv{x_{\ottmv{n}}}  \ottsym{)}
\end{align*}
That is, the attacker context can only initialize arguments provided to a compiled $ \IMP $ program.

Even though the set of attacker contexts $ \mathsource{\Attack} $ does not give access to the output trace, an attacker can use a weird machine to output negative numbers using the vulnerable source-level component $ \OUTPUT{ \ottmv{x} } $. In particular, for any $ \mathsource{ i }  \, \ottsym{<} \,  \mathsource{ \ottsym{0} } $, the context $ \ottmv{x}  ≔   \mathsource{ i }    \ottsym{;}   \texttt{hole}   \ottsym{(}  \ottmv{x}  \ottsym{)}$ is an exploit of $ \OUTPUT{ \ottmv{x} } $, and it is possible to classify the entire set of weird machines of $ \OUTPUT{ \ottmv{x} } $:
\[   \textsf{WM}  ^{  \mathsource{\Attack}  }   \ottsym{(}   \OUTPUT{ \ottmv{x} }   \ottsym{)} = \{  \mathsource{ i }  ∣  \mathsource{ i }  \, \ottsym{<} \,  \mathsource{ \ottsym{0} }  \} \]

This property can be extended to all weird machines between $\IMP$ and $ \ToyC $.

\begin{theorem} \label{thm:WMNZ}
    If an attacker context $\source{A}  \in   \mathsource{\Attack} $ causes $ \compile[  \intermediate{U}^\intermediate{I}  ] $ to output negative numbers---if
    there is some trace $\intermediate{t}^\intermediate{T}  \in   \sbehav{  \source{A} [   \compile[  \intermediate{U}^\intermediate{I}  ]   ]  } $ containing a negative number---then
    $\source{A}  \in    \textsf{T}  \textsf{Exploit}   ^{  \mathsource{\Attack}  }   \ottsym{(}  \intermediate{U}^\intermediate{I}  \ottsym{)}$.
\end{theorem}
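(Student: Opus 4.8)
The plan is to prove the claim directly from the definition of $\textsf{T}\textsf{Exploit}$, instantiating the generic source/target roles of that definition with $\IMP$ as the source language and $\ToyC$ as the target language. We are handed a $\ToyC$ trace (the one written $\intermediate{t}^\intermediate{T}$ in the statement; I rename it $\source{t}$ to avoid a clash below) with $\source{t} \in \sbehav{\source{A}[\compile[\intermediate{U}^\intermediate{I}]]}$ that contains a negative integer, and this trace is exactly the existential witness demanded by $\textsf{T}\textsf{Exploit}^{\mathsource{\Attack}}(\intermediate{U}^\intermediate{I})$. All that then remains is to show that no $\IMP$ context can produce a trace related to $\source{t}$: for every $\IMP$ context $\intermediate{C}^\intermediate{I}$ and every $\IMP$ trace $\intermediate{t}^\intermediate{T} \in \ibehav{\intermediate{C}^\intermediate{I}[\intermediate{U}^\intermediate{I}]}$, I must establish $\nbrelates{\intermediate{t}^\intermediate{T}}{\source{t}}$.

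The key input is the structure of the trace relation $\mapsto$ between $\IMP$ and $\ToyC$ traces, which I would read off from the concrete compiler in \cref{app:examples}. Two facts suffice. First, the relation is \emph{componentwise}: $\brelates{\intermediate{t}^\intermediate{T}}{\source{t}}$ can hold only when the two traces agree in length and each emitted $\IMP$ value is related to the $\ToyC$ value at the corresponding position. Second, at the level of values the relation is \emph{not surjective onto the negatives}: since $\IMP$ outputs are drawn from $\textsf{Nat} \cup \textsf{Bool}$ and these compile to non-negative $\ToyC$ integers (with $\brelates{\mathsource{\true}}{\mathtarget{1}}$, $\brelates{\mathsource{n}}{\mathtarget{n}}$, and so on), there is no $\IMP$ value related to any negative $\ToyC$ integer. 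This is precisely the failure of surjectivity of $\mapsto$ already flagged in \cref{sec:mainideas}.

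Given these two facts the argument is immediate. Fix an arbitrary $\IMP$ context $\intermediate{C}^\intermediate{I}$ and an arbitrary $\intermediate{t}^\intermediate{T} \in \ibehav{\intermediate{C}^\intermediate{I}[\intermediate{U}^\intermediate{I}]}$. By hypothesis $\source{t}$ carries a negative integer at some output position $k$. If $\brelates{\intermediate{t}^\intermediate{T}}{\source{t}}$ held, componentwise-ness would force the $\IMP$ value at position $k$ of $\intermediate{t}^\intermediate{T}$ to be related to that negative integer, contradicting value-level non-surjectivity; hence $\nbrelates{\intermediate{t}^\intermediate{T}}{\source{t}}$. As $\intermediate{C}^\intermediate{I}$ and $\intermediate{t}^\intermediate{T}$ were arbitrary, $\source{t}$ witnesses $\source{A} \in \textsf{T}\textsf{Exploit}^{\mathsource{\Attack}}(\intermediate{U}^\intermediate{I})$, which is the conclusion.

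The main obstacle is not the logical step — a one-line contradiction once $\mapsto$ is understood — but making the two structural facts about $\mapsto$ precise and proving them from the $\IMP$-to-$\ToyC$ compiler and its induced trace relation. In particular I would need to confirm that the relation really is positionwise on observable output events, rather than permitting reordering, stuttering, or the silent insertion of $\ToyC$-only events (such as $\CALL$/$\RETURN$) between outputs, since any such slack could in principle let a negative $\ToyC$ output be matched without a corresponding $\IMP$ event. Pinning down that output events are related one-to-one, together with the fact that the $\IMP$ value grammar genuinely omits negative constants, is the crux; everything downstream then follows from the non-surjectivity observed in \cref{sec:mainideas}.
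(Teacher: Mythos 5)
Your proposal is correct and takes essentially the same approach as the paper: both arguments reduce to the single fact that no $\IMP$ trace can be related to a $\ToyC$ trace containing a negative number, because the trace relation $\mapsto$ is defined positionwise and sends every $\IMP$ value (a boolean or natural number) to a non-negative integer. The only difference is cosmetic---the paper routes through \cref{thm:TraceExploit} while you instantiate the definition of $\textsf{T}\textsf{Exploit}$ directly, and your two structural facts about $\mapsto$ simply make explicit what the paper compresses into ``this follows from the definition of $\mapsto$.''
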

\begin{proof}
    By \cref{thm:TraceExploit}, it suffices to show that, for all $ \IMP $ whole programs $\intermediate{P}^\intermediate{I}$, if $\intermediate{t}^\intermediate{T}  \in   \ibehav{ \intermediate{P}^\intermediate{I} } $ and $\source{t}^\source{S}  \in   \mathsource{ \mathcal{B} }^\source{S} $ contains negative numbers, then $ \nbrelates{ \intermediate{t}^\intermediate{T} }{ \source{t}^\source{S} } $; this follows from the definition of $ \mapsto $.
\end{proof}

\subsection{How powerful is a vulnerability?}

Understanding the class of weird machines $  \textsf{WM}  ^{  \mathtarget{\Attack}  }   \ottsym{(}  \source{V}  \ottsym{)}$ associated with a vulnerability helps us understand that vulnerability's strength.
For example, a vulnerability with hyperproperty exploits may be less powerful but harder to detect than a vulnerability with trace exploits. 

As another example, \citet{Hu2016} show that data-oriented programming is Turing complete given a large enough set of entry points into the source component. Restated in our framework, given an attacker class $ \mathtarget{\Attack} $ and a vulnerable source program with enough entry points, the class $  \textsf{WM}  ^{  \mathtarget{\Attack}  }   \ottsym{(}  \source{V}  \ottsym{)}$ is Turing complete. 

However, Turing completeness is not the ultimate goal of attackers; their goal is to \emph{easily and efficiently} program their attacks. \citet{Bratus2017} have proposed that we should instead ask if a set of attacks is \emph{compositional}: can the class $  \textsf{WM}  ^{  \mathtarget{\Attack}  }   \ottsym{(}  \source{V}  \ottsym{)}$ be composed out of smaller, easier-to-understand classes $  \textsf{WM}  ^{ \mathtarget{\Attack}_{{\mathrm{1}}} }   \ottsym{(}  \source{V}  \ottsym{)},…,  \textsf{WM}  ^{ \mathtarget{\Attack}_{\ottmv{n}} }   \ottsym{(}  \source{V}  \ottsym{)}$? If a vulnerability is compositional in this way, it is especially susceptible to attacks, since attacks can then be programmed rather than hand crafted.

\subsection{What mitigations are effective?}

Finally, by characterizing exploits we make it possible reason about the effectiveness of different mitigation strategies. Returning to the exploits of $ \OUTPUT{ \ottmv{x} } $, a programmer may wish to insert dynamic checks to eliminate the possibility of exploits. For example, the programmer or compiler could replace occurrences of $ \OUTPUT{ \ottmv{x} } $ in an $\IMP$ program with $ \ifThenElse{ \ottmv{x} \, \ge \,  \mathintermediate{ \ottsym{0} }  }{  \OUTPUT{ \ottmv{x} }  }{  \SKIP  } $. These two programs are equivalent in $\IMP$, but their compiled versions are inequivalent in $ \ToyC $. However, this program still has a non-empty weird machine, since the compiled machine admits the behavior with an empty output trace but the $\IMP$ program always emits an output. 
\small
\begin{align*}
    &  \textsf{WM}  ^{  \mathtarget{\Attack}  }   \ottsym{(}   \ifThenElse{ \ottmv{x} \, \ge \,  \mathintermediate{ \ottsym{0} }  }{  \OUTPUT{ \ottmv{x} }  }{  \SKIP  }   \ottsym{)}
    = \{ε\}.
\end{align*}
\normalsize
If the $\ELSE$ branch of the dynamic check is
replaced with the statement $ \OUTPUT{ \ottsym{0} } $, then the resulting program will have no exploits, since the behavior of all target contexts will be simulatable by some $ \IMP $ context.
\small
\begin{align*}
    &  \textsf{WM}  ^{  \mathtarget{\Attack}  }   \ottsym{(}   \ifThenElse{ \ottmv{x} \, \ge \,  \mathintermediate{ \ottsym{0} }  }{  \OUTPUT{ \ottmv{x} }  }{  \OUTPUT{  \mathintermediate{ \ottsym{0} }  }  }   \ottsym{)} \\
    & = ∅.
\end{align*}
\normalsize

Although we are illustrating these mitigations with an extremely simple example, prior work has shown that secure compilation techniques can be used to reason about mitigations that protect against more realistic exploits, such as address space layout randomization~\citep{Abadi2012}.

\section{Related Work}
\label{sec:related}

\subsection{Weird machines}
The idea of a ``weird machine'' was introduced by Sergey Bratus to connect the act of exploitation with programming~\citep{Bratus2009} and further developed by the LangSec research community as a way to understand the causes of computer insecurity~\citep{Bratus2011}. In addition to using these ideas to describe existing exploitation techniques such as return-oriented programming~\citep{Shacham2007} and heap feng shui~\citep{Sotirov2007}, researchers have identified weird machines in a variety of systems including DWARF exception handling data~\citep{Oakley2011}, the ELF executable format~\citep{Shapiro2013}, and embedded system interrupt handlers~\citep{Tan2014}. A common approach to demonstrating the generality of weird machines and exploit techniques such as return-oriented programming~\citep{Shacham2007}, data-oriented programming~\citep{Hu2016}, and counterfeit object-oriented programming~\citep{Schuster2015} is to show that the computations they enable are Turing complete.

\citet{Bratus2017} argue the necessity of a theory of weird machines and suggest a model relating two abstraction levels of a program. Using the theory discussed in Section~\ref{sec:weird-behavior}, \citet{Dullien2018} uses weird machines to reason about the exploitability of different program implementations. His work builds on a transducer model of weird machines introduced by Vanegue~\citep{Vanegue2014}.

\subsection{Secure compilation and exploits}

\citet{Abadi1999} introduced full abstraction as a security criterion for compilation. Researchers have studied the development of secure compilers for a variety of source and target languages, and an excellent survey is given by \citet{Patrignani2018}.

\citet{Kennedy2006} uses full abstraction to examine security problems in the {.NET} programming model and propose fixes, and \citet{Abadi2012} use full abstraction to prove the effectiveness of address space layout randomization as a security mitigation. \citet{Erlingsson2007} suggests many vulnerabilities stem from the lack of secure compilation and discusses how mitigation strategies seek to maintain certain properties of high-level languages.

\citet{Patrignani2017} describe several shortcomings of full abstraction as a definition of secure compilation and introduce property preservation as an alternative definition. \citet{Abate2019} build on this work to explore a hierarchy of robust property preservation definitions, which \citet{Abate2019a} later extend to support languages whose source and target behaviors differ. We describe the relationship between our framework and this hierarchy in Section~\ref{sec:relprop}.

\section{Conclusion}
\label{sec:conclusion}



This paper presents a framework for reasoning about weird machines as insecure compilation. We define exploits  as attacks that behave in ways inaccessible in the source language, and show how examples of common exploits fit into our framework. We draw on prior work about secure compilation to prove that exploits always violate properties of source language behaviors, and distinguish between different classes of properties, such as hyperproperties and trace properties. We show how exploits compose through a compiler stack, and compare our approach to prior work by \citet{Dullien2018} that characterizes exploits by the states they invoke rather than the behavior they produce.

We draw heavily on programming language techniques rather than state machine formalisms, making it possible to reason directly about the behavior of programs and compilation and eliminating the need to manually construct a model of the relationship between an implementation and its intended behavior.

Though we use of the theory of secure compilation, our focus is quite different from that in the literature.  Most existing work attempts to construct secure compilers; we take for granted that industrial compilers fall short of that and study the repercussions.  This may appear cynical, but ultimately we hope that studying exploits leads to a future with fewer and fewer of them.

\bibliography{bibliography}
\bibliographystyle{IEEEtranN} 

\appendix[Examples]


\label{app:examples}

In this section we describe a compiler stack for a high-level imperative language $ \IMP $ with natural numbers and boolean values. This language is first compiled to a C-like language $ \ToyC $ with procedure calls and pointer arithmetic, then to a toy assembly language $ \ToyA $ with an explicitly managed stack. $ \ToyA $ is implemented in a model of hardware, $ \ToyH $, in which users can observe the timing information of $ \ToyA $ programs. Each stage in this compilation chain exposes different weird machines. 

The languages and exploits described here are also implemented in PLT Redex, made available as ancillary files on arXiv. 


\subsection{$ \IMP $}
\label{sec:IMP}

The first language on the compiler stack is $ \IMP $, a simple imperative language with natural numbers and boolean expressions. An $ \IMP $ program is a command---a sequence of assignments, control flow, and outputs.
\begin{align*}
   \intermediate{e} &::= \ottmv{x} ∣  \mathintermediate{ b }   \in   \textsf{Bool}  ∣  \mathintermediate{ n }   \in   \textsf{Nat}  
        ∣ \intermediate{e} \, \circledast \, \intermediate{e}
        \tag{$ \IMP $ expressions} \\
   \circledast &::= + ∣ × ∣ < ∣ == ∣ ∧ ∣ ∨ \tag{binary operations} \\
   \intermediate{c} &::= \ottmv{x}  \coloneqq  \intermediate{e} ∣  \OUTPUT{ \intermediate{e} }  ∣  \SKIP  ∣ \intermediate{c}  \ottsym{;}  \intermediate{c} \\
            &∣  \ifThenElse{ \intermediate{e} }{ \intermediate{c} }{ \intermediate{c} }  ∣  \while{ \intermediate{e} }{ \intermediate{c} } 
            \tag{$ \IMP $ commands}
\end{align*}
The operational semantics of $ \IMP $ is given as a judgment $ \intermediate{c}  \ottsym{/}  \mathintermediate{\sigma}  \overstep{ t }  \intermediate{c}'  \ottsym{/}  \mathintermediate{\sigma}' $, where: 
\begin{align*}
    \mathintermediate{\sigma} &::= \ottmv{x}  \mapsto  v ∣ \mathintermediate{\sigma}_{{\mathrm{1}}}  \ottsym{,} \, .. \, \ottsym{,}  \mathintermediate{\sigma}_{\ottmv{n}} \tag{$ \IMP $ store} \\
    t &::=  \cdot  ∣ v  \ottsym{,}  t \tag{$ \IMP $ trace} \\
    v &::=  \mathintermediate{ b }   \in   \textsf{Bool}  ∣  \mathintermediate{ n }   \in   \textsf{Nat}  \tag{$ \IMP $ values}
\end{align*}


We write $ \intermediate{c}  \ottsym{/}  \mathintermediate{\sigma}  \not\rightarrow $ when $\intermediate{c}/\mathintermediate{\sigma}$ is a stuck configuration that cannot take a step, and we write $→^∗$ for the transitive closure of $→$. 
Finally, we write $\intermediate{c}  \ottsym{/}  \mathintermediate{\sigma}  \Downarrow  t$ when there exists some stuck state
$\intermediate{c}'  \ottsym{/}  \mathintermediate{\sigma}'$ such that $ \intermediate{c}  \ottsym{/}  \mathintermediate{\sigma}  \oversteps{ t }  \intermediate{c}'  \ottsym{/}  \mathintermediate{\sigma}' $.

\paragraph*{Weird machine meta-variables}

A whole program in $ \IMP $ is a command $\intermediate{c}$, and
the behavior of such a program is the set containing the output traces obtained by instantiating the free variables of $\intermediate{c}$. A component in $ \IMP $ is a command annotated by (a superset of) the free variables that occur in it. A context is a command with a hole in it; linking a context with a component fills the hole with the component, and is only defined when the annotation of the component matches the annotation of the hole. The annotations will be important for compilation.
\begin{align*}
    \intermediate{P}^\intermediate{I} &::= \intermediate{c} \tag{$ \IMP $ whole program} \\
    \intermediate{U}^\intermediate{I} &::= \ottsym{(}  \intermediate{c}  \ottsym{,}   \vect{ \ottmv{x_{\ottmv{i}}} }   \ottsym{)} \tag{$ \IMP $ component} \\
    \intermediate{C}^\intermediate{I} &::=  \square^{  \vect{ \ottmv{x_{\ottmv{i}}} }  }  ∣ \intermediate{C}^\intermediate{I}  \ottsym{;}  \intermediate{c} ∣ \intermediate{c}  \ottsym{;}  \intermediate{C}^\intermediate{I} \\
        &∣  \ifThenElse{ \ottnt{e} }{ \intermediate{C}^\intermediate{I} }{ \intermediate{c} }  ∣  \ifThenElse{ \ottnt{e} }{ \intermediate{c} }{ \intermediate{C}^\intermediate{I} }   \\
        &∣  \while{ \ottnt{e} }{ \intermediate{C}^\intermediate{I} } 
        \tag{$ \IMP $ context}
\end{align*}


\subsection{$ \ToyC $}
\label{sec:ToyC}

We implement $ \IMP $ by compiling it to a C-like language called $ \ToyC $, where natural numbers and booleans are both implemented as integers. 
Expressions in $ \ToyC $ consist of  integers, unary and binary operations, and pointers. Pointers are represented as l-values, $ℓ^{v}$, which are either variables or pointer dereferences.
\small
\begin{align*}
    \source{e} &::= \ottmv{x} ∣  \mathsource{ i }  ∣ \source{e}_{{\mathrm{1}}} \, \circledast \, \source{e}_{{\mathrm{2}}} ∣ \ottkw{NULL} ∣ \ottsym{*}  \source{e} ∣ \ottsym{\&}  ℓ^{v}  \tag{$ \ToyC $ expressions} \\
    ℓ^{v} &::= \ottmv{x} ∣ \ottsym{*}  ℓ^{v} \tag{$ \ToyC $ l-values}
\end{align*}
\normalsize
Expressions can be assigned types $\mathsource{\tau}$--either simple types (integers and pointers) or arrays of simple types. 
\begin{align*}
  \mathsource{\tau}  &::= \mathsource{\sigma} ∣ \mathsource{\sigma}  \ottsym{[}  n  \ottsym{]} \tag{$ \ToyC $ types}\\
  \mathsource{\sigma} &::=  \textsf{Int}  ∣ \ottsym{*}  \mathsource{\tau} \tag{$ \ToyC $ simple types}
\end{align*}


Commands in $ \ToyC $ are similar to those in $ \IMP $ with the addition of procedure calls $p  \ottsym{(}  \source{e}_{{\mathrm{1}}}  \ottsym{,} \, .. \, \ottsym{,}  \source{e}_{\ottmv{n}}  \ottsym{)}$. Assignment statements are extended from variables to l-values.
\small
\begin{align*}
    &\source{c} ::= p  \ottsym{(}  \source{e}_{{\mathrm{1}}}  \ottsym{,} \, .. \, \ottsym{,}  \source{e}_{\ottmv{n}}  \ottsym{)} ∣  ℓ^{v}  ≔  \source{e}  ∣  \OUTPUT{ \source{e} } 
                ∣  \SKIP  ∣ \source{c}  \ottsym{;}  \source{c} \\
                ∣ & \ifThenElse{ \source{e} }{ \source{c} }{ \source{c} }  ∣  \while{ \source{e} }{ \source{c} } 
    \tag{$ \ToyC $ commands}
\end{align*}
\normalsize
A global store $\mathsource{G}$ is a list of procedure declarations, which map procedure names $p$ to commands $\source{c}$ while specifying the types of their arguments as well as any local variables used in the procedure. Note that arguments to procedures must all have simple types.
\small
\begin{align*}
    \source{pd} &::= p  \ottsym{(}   \vect{ \ottmv{x_{\ottmv{i}}}  \ottsym{:}  \mathsource{\sigma}_{\ottmv{i}} }   \ottsym{;}   \vect{ \ottmv{y_{\ottmv{j}}}  \ottsym{:}  \mathsource{\tau}_{\ottmv{j}} }   \ottsym{)}  \ottsym{\{}  \source{c}  \ottsym{\}} 
        \tag{$ \ToyC $ procedure declaration} \\
    \mathsource{G} &::= \ottsym{\{}  \source{pd}_{{\mathrm{1}}}  \ottsym{,} \, .. \, \ottsym{,}  \source{pd}_{\ottmv{n}}  \ottsym{\}}
        \tag{$ \ToyC $ global store}
\end{align*}
\normalsize
%
%
The language is ``C-like'' in that it respects the call structure of C and has pointer arithmetic, but of course it differs from C in many ways. For one, $ \ToyC $ will throw an error when it reaches undefined behavior, unlike C, which may execute any behavior (see \cref{sec:undefined}). 

\paragraph*{Weird machine meta-variables}

A whole program in $\ToyC$ is a global store $\mathsource{G}$ such that (1) there is a function called $ \texttt{main} $, (2) no two functions have the same name, and (3) every procedure name $p  \ottsym{(}  \ottnt{e_{{\mathrm{1}}}}  \ottsym{,} \, .. \, \ottsym{,}  \ottnt{e_{\ottmv{n}}}  \ottsym{)}$ occurring in $\mathsource{G}$ has a corresponding procedure declaration. Components $\source{U}^\source{S}$ and contexts $\source{C}^\source{S}$ correspond to stores $\mathsource{G}$ satisfying condition (2), but not necessarily (1) or (3). Linking a component with a context simply concatenates the two stores. 
\begin{align*}
  \source{P}^\source{S} &::= \mathsource{G} ~\text{such that}~ \texttt{main}  ∈ \dom*{\mathsource{G}} 
        \tag{$ \ToyC $ whole program} \\
  \source{U}^\source{S} &::= \mathsource{G} \tag{$ \ToyC $ component} \\
  \source{C}^\source{S} &::= \mathsource{G} \tag{$ \ToyC $ context} 
\end{align*}

The behavior of a whole program is the set of traces that invocations of $ \texttt{main} $ can give rise to. 

\subsection{Compiler from $ \IMP $ to $ \ToyC $.}

The compiler from $ \IMP $ to $ \ToyC $ maps $ \IMP $ commands $\intermediate{c}$ to $ \ToyC $ commands $ \ccompile{ \intermediate{c} } $, and then to procedure declarations $ \texttt{main}   \ottsym{(}   \vect{ \ottmv{x_{\ottmv{i}}} }   \ottsym{;}  .\xspace  \ottsym{)}  \ottsym{\{}   \ccompile{ \intermediate{c} }   \ottsym{\}}$ where $ \vect{ \ottmv{x_{\ottmv{i}}} } $ are the free variables used in $\intermediate{c}$. The full compiler is implemented in the PLT Redex model.

The behavior relation relates boolean/natural number traces in $ \IMP $ to their corresponding integer traces in $ \ToyC $; we define $ \compile[  t  ] $ so that $\brelates{t}{ \compile[  t  ] }$:
\[ \begin{aligned}
     \compile[  .\xspace  ]  &≜  \cdot  \\
     \compile[   \mathintermediate{ \true }   \ottsym{,}  t  ]  &≜  \mathsource{ \ottsym{1} }   \ottsym{,}   \compile[  t  ] 
\end{aligned} \qquad \begin{aligned}
     \compile[   \mathintermediate{ \false }   \ottsym{,}  t  ]  &≜  \mathsource{ \ottsym{0} }   \ottsym{,}   \compile[  t  ]  \\
     \compile[   \mathintermediate{ n }   \ottsym{,}  t  ]  &≜  \mathsource{ n }   \ottsym{,}   \compile[  t  ] 
\end{aligned} \]

\begin{lemma} 
  The compiler from $ \IMP $ to $ \ToyC $ is not correct for whole programs.
\end{lemma}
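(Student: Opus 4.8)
The plan is to exhibit a single whole $\IMP$ program whose behavior fails to match that of its compilation, directly refuting the universally-quantified correctness statement $\forall \intermediate{P}^\intermediate{I}.~\brelates{\ibehav{\intermediate{P}^\intermediate{I}}}{\sbehav{\compile[\intermediate{P}^\intermediate{I}]}}$. The natural candidate is the program $\OUTPUT{x}$ already studied in \cref{sec:discussion}, and the mechanism is exactly the ``negative number'' phenomenon of \cref{thm:WMNZ}, now applied at the whole-program level rather than to a component linked with an attacker. First I would recall that, since both languages present behaviors as sets of traces, the relation $\brelates{\ibehav{\intermediate{P}^\intermediate{I}}}{\sbehav{\compile[\intermediate{P}^\intermediate{I}]}}$ decomposes into two conjuncts: every $\IMP$ trace must have a related $\ToyC$ trace, and---crucially---every $\ToyC$ trace must have a related $\IMP$ trace. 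I will break correctness by violating this second conjunct.

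Next I would compute the two behaviors. The whole program $\OUTPUT{x}$ has a single free variable $x$, which in $\IMP$ ranges over $\textsf{Bool} \cup \textsf{Nat}$; hence $\ibehav{\OUTPUT{x}}$ is the set of singleton traces $\{\mathintermediate{\true}, \mathintermediate{\false}\} \cup \{\mathintermediate{n} \mid \mathintermediate{n} \in \textsf{Nat}\}$. Its compilation is the procedure $\texttt{main}(x;\cdot)\{\ccompile{\OUTPUT{x}}\}$, and in $\ToyC$ the parameter $x$ is an integer, so $\sbehav{\compile[\OUTPUT{x}]}$ contains the singleton trace $\mathsource{-1}$, obtained by invoking $\texttt{main}$ with the argument $-1$. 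I would then observe that the trace relation $\mapsto$ defined in the appendix relates only $\mathintermediate{\true} \mapsto \mathsource{1}$, $\mathintermediate{\false}\mapsto\mathsource{0}$, and $\mathintermediate{n}\mapsto\mathsource{n}$ for natural $n$, and so no $\IMP$ trace is related to $\mathsource{-1}$. Therefore the target trace $\mathsource{-1}$ lying in $\sbehav{\compile[\OUTPUT{x}]}$ has no related source trace in $\ibehav{\OUTPUT{x}}$, the second conjunct fails, and $\nbrelates{\ibehav{\OUTPUT{x}}}{\sbehav{\compile[\OUTPUT{x}]}}$, which is the desired counterexample.

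The one step that requires genuine care is confirming that $\mathsource{-1}$ really does lie in $\sbehav{\compile[\OUTPUT{x}]}$. This rests on two facts about $\ToyC$ and its semantics that I would check explicitly against \cref{sec:ToyC}: that the compiler emits no guard or coercion restricting $x$ to the non-negative integers, so that invoking $\texttt{main}$ with $-1$ is a legal whole-program execution; and that the $\ToyC$ evaluation and $\OUTPUT{}$ rules indeed step on a negative integer value and emit it to the trace rather than becoming stuck. Both follow from the operational semantics and the definition of the compiler, but spelling them out is what turns the informal ``integers include negatives'' intuition into a proof. Everything else is routine bookkeeping about the shape of the two trace sets.
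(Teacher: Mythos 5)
Your proof is correct, but it takes a genuinely different route from the paper's. The paper refutes whole-program correctness with the \emph{closed} program $x \coloneqq 1 ; \ifThenElse{x}{\OUTPUT{\mathintermediate{\true}}}{\OUTPUT{\mathintermediate{\false}}}$: in $\IMP$ the guard evaluates to a natural number where a boolean is required, so execution gets stuck with an empty output trace, while the compiled program (which conflates booleans and naturals with integers) runs to completion and outputs the integer $1$. You instead use the \emph{open} program $\OUTPUT{x}$ and exploit the enlargement of the input domain: every $\IMP$ instantiation of $x$ is a boolean or a natural, but the compiled $\texttt{main}$ may be invoked with any integer, so $\mathsource{-1} \in \sbehav{\compile[\OUTPUT{x}]}$ while no source trace is related to it. Both counterexamples flow from the same abstraction loss, but they break the behavior relation differently: in the paper's example both conjuncts fail (the stuck source trace $\cdot$ is related to no trace of the compiled program, and the target trace $1$ is related to no trace of the source program), whereas in yours only the target-to-source conjunct fails, since every source trace of $\OUTPUT{x}$ still has a related target trace. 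What each buys: the paper's example is more minimal in that it involves no inputs at all, isolating the failure as dynamic type errors being erased by compilation; yours reuses machinery the paper already establishes, being exactly the whole-program analogue of \cref{thm:WMNZ} and of the weird machine of $\OUTPUT{x}$ computed in \cref{sec:discussion}, so the key fact---that no $\IMP$ trace is related to a trace containing a negative number---is already available. The verification obligations you flag (that the compiled $\texttt{main}$ can legally be invoked with $-1$, and that the $\ToyC$ output rule steps on negative values rather than getting stuck) are the right ones, and they hold for the same reasons the data-oriented programming example in the paper does.
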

\begin{proof}
Consider the whole program $\ottmv{x}  \coloneqq  \ottsym{1}  \ottsym{;}   \ifThenElse{ \ottmv{x} }{  \OUTPUT{  \mathintermediate{ \true }  }  }{  \OUTPUT{  \mathintermediate{ \false }  }  } $. This is ill-typed in $ \IMP $, so its output trace is empty. But when compiled to $ \ToyA $ there is no such mismatch between booleans and numbers, so it will execute successfully and output $ \compile[   \mathintermediate{ \true }   ] = \mathtarget{ \ottsym{1} } $.
\end{proof}

However, the compiler does preserve the traces of whole programs (\cref{sec:compositional-trace}):
\begin{lemma}
  If $t  \in   \ibehav{ \intermediate{P}^\intermediate{I} } $ then $ \compile[  t  ]   \in   \tbehav{  \compile[  \intermediate{P}^\intermediate{I}  ]  } $.
\end{lemma}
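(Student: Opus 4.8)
The plan is to prove this by \emph{forward simulation}: I show that every small step of the $\IMP$ operational semantics is matched by a (possibly multi-step) execution in $\ToyC$ that emits the encoded trace and preserves an invariant relating the two configurations. Because the claim is only a one-directional preservation result—and indeed the preceding lemma shows the compiler is \emph{not} correct for whole programs—I never need a backward simulation; it suffices that each $\IMP$ trace be reproduced by the compiled program, so I may freely tolerate the compiled program having extra or divergent executions.

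First I would record the basic fact that the trace-encoding map $\compile[\cdot]$ is a homomorphism for trace concatenation, i.e.\ $\compile[t_1, t_2]$ equals $\compile[t_1], \compile[t_2]$; this follows by a short induction on $t_1$ from the three defining clauses of $\compile[\cdot]$. I would also fix the value encoding induced by the behavior relation, sending $\mathintermediate{\true}$ to $\mathsource{1}$, $\mathintermediate{\false}$ to $\mathsource{0}$, and each $\mathintermediate{n}$ to $\mathsource{n}$, and lift it to an encoding of $\IMP$ stores $\mathintermediate{\sigma}$ into $\ToyC$ configurations $\mathsource{\theta} = \langle \mathsource{F}; \mathsource{S}; \mathsource{M}\rangle$ in which each variable maps to a fresh location holding the encoded value.

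Next I would prove two simulation lemmas, in order of increasing difficulty. (i) \emph{Expression simulation}: if $\mathintermediate{\sigma} \vdash \intermediate{e} \Downarrow v$ in $\IMP$, then in any $\ToyC$ configuration encoding $\mathintermediate{\sigma}$ we have $\ccompile{\intermediate{e}}$ evaluating to the encoding of $v$; this goes by induction on $\intermediate{e}$, the binary-operation case using that the $\ToyC$ integer operations agree with the $\IMP$ boolean/arithmetic operations under the encoding. (ii) \emph{Command simulation}: if $\intermediate{c}/\mathintermediate{\sigma} \overstep{t} \intermediate{c}'/\mathintermediate{\sigma}'$ and $\mathsource{\theta}$ encodes $\mathintermediate{\sigma}$, then there is a $\mathsource{\theta}'$ encoding $\mathintermediate{\sigma}'$ such that $\ccompile{\intermediate{c}}$ executes from $\mathsource{\theta}$ to the point corresponding to $\ccompile{\intermediate{c}'}$ at $\mathsource{\theta}'$ while emitting exactly $\compile[t]$. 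Here I rely on the compositional definition of $\ccompile{\cdot}$ so that the residual command $\intermediate{c}'$ always corresponds to a suffix of the compiled code: output statements emit the single encoded value via (i), assignments update the encoded store, and the control-flow constructs are discharged by the corresponding $\ToyC$ reductions, appealing to (i) to see that a guard evaluates to a nonzero/zero integer exactly when the $\IMP$ guard is $\true$/$\false$.

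Finally I would lift command simulation along the full reduction $\intermediate{c}/\mathintermediate{\sigma} \oversteps{t} \intermediate{c}'/\mathintermediate{\sigma}'$ that witnesses $t \in \ibehav{\intermediate{P}^\intermediate{I}}$ (for the free-variable instantiation $\mathintermediate{\sigma}$), by induction on the derivation built from \textsc{IMP\_refl} and \textsc{IMP\_step}, using the concatenation homomorphism to assemble $\compile[t]$ from the per-step contributions. Invoking $\texttt{main}$ with arguments that realize the encoding of $\mathintermediate{\sigma}$ then yields an execution of $\compile[\intermediate{P}^\intermediate{I}]$ producing $\compile[t]$, so $\compile[t] \in \tbehav{\compile[\intermediate{P}^\intermediate{I}]}$. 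The main obstacle is the command-simulation lemma: reconciling the differing step granularities (one $\IMP$ step may correspond to several $\ToyC$ steps, and the procedure-frame setup for $\texttt{main}$ must be threaded through), and showing that the structural correspondence between $\intermediate{c}'$ and the compiled code survives the control-flow unfoldings of \textsf{while}. Since we only preserve traces, I need only exhibit the matching prefix rather than a lockstep bisimulation, which keeps this obstacle tractable.
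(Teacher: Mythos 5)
Your proof is correct and takes essentially the same route as the paper's: the paper's own (two-sentence) proof likewise takes the store $\mathintermediate{\sigma}$ witnessing $t$, converts its values into arguments for the compiled $\texttt{main}$ procedure, and asserts that the compiled program then produces the corresponding trace. Your value/store encoding, expression and command simulation lemmas, and the concatenation property of $\compile[\cdot]$ are exactly the details the paper leaves implicit in that assertion.
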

\begin{proof}
  If $t  \in   \ibehav{ \intermediate{P}^\intermediate{I} } $ then there is some store $\mathintermediate{\sigma}$ such that $ \intermediate{P}^\intermediate{I}   \ottsym{/}  \mathintermediate{\sigma}  \Downarrow  t$. But then the values of that store can be converted to inputs to the compiled $ \texttt{main} $ procedure of $ \compile[  \intermediate{P}^\intermediate{I}  ] $, which will produce the same trace in $ \ToyC $.
\end{proof}

\subsection{Data-oriented programming with $ \IMP $ and $ \ToyC $}
\label{sec:dop}

It is impossible to write an $ \IMP $ program that performs natural number division, which requires subtraction. However, it is possible to use data-oriented programming on a benign-looking $ \IMP $ component to implement this behavior. Consider the following $ \IMP $ component:
\[\begin{aligned}
    \intermediate{J} ≜~ 
    &\OUTPUT{x} \\
    &\OUTPUT{y} \\
    &r ≔ y; \\
    &d ≔ 0; \\
    &\while{x≤r}{\{}&r &≔ r+a \\
    &    &x &≔ x+b \\
    &    &d &≔ d+1 \}; \\
    &\OUTPUT{d}; \\
    &\OUTPUT{r};
\end{aligned}
\]
In $ \IMP $, $\intermediate{J}$ will terminate either if $x$ is initially greater than $r$, or if $b$, the amount by which $x$ is increasing, is more than $a$, the amount by which $r$ is increasing. The counter $d$ tracks the number of iterations the loop takes. The behavior of $\intermediate{J}$ in $ \IMP $ will always output the trace $x,y,d,r$ such that $r ≥ y$; this is the trace property by which we will measure weirdness.

An attacker in $ \ToyC $ can manipulate the arguments to $\intermediate{J}$ to implement division (which will output $x,y,d,r$ such that $y=dx+r$ and $r<x$) as follows:
\begin{align*}
    \source{A} ≜  \texttt{main} (x,y;a,b,d,r)
            \{ &a ≔ -x; b ≔ 0; d ≔ 0; r ≔ 0; \\
               & \texttt{hole} (x,y,d,r,a,b) \}
\end{align*}

To check that $\source{A}  \in   \textsf{WM}   \ottsym{(}  \intermediate{J}  \ottsym{)}$, it suffices to observe some trace in $ \sbehav{  \source{A} [   \compile[  \intermediate{J}  ]   ]  } $ that outputs $x,y,d,r$ such that $r < y$; this is obtained with $ \texttt{main}   \ottsym{(}   \mathsource{ \ottsym{3} }   \ottsym{,}   \mathsource{ \ottsym{4} }   \ottsym{)}$, which outputs $3,4,1,2$.

This example is implemented in PLT Redex (the program \texttt{dop-vulnerable} in \texttt{toyc.rkt}).

\subsection{Toy Assembly}

In the toy assembly language $\ToyA$, statements as well as objects are stored in memory. Control flow is limited to procedure calls and conditional jumps, and procedures must explicitly invoke $ \RETURN $ to return to their calling function. The command $ \HALT $ halts the computation.
\small
\begin{align*}
    \target{c} &::= p  \ottsym{(}  \target{e}_{{\mathrm{1}}}  \ottsym{,} \, .. \, \ottsym{,}  \target{e}_{\ottmv{n}}  \ottsym{)} ∣  \RETURN  ∣  \HALT  ∣  ℓ^{v}  ≔  \target{e}  ∣  \OUTPUT{ \target{e} }  \\
    &∣  \SKIP  ∣  \jmpz{ \target{e} }{ i } 
    \tag{$ \ToyA $ commands}
\end{align*}
\normalsize
Expressions in $ \ToyA $ are identical to those in $ \ToyC $.

Memory in $\ToyA$ is a map from integers to objects, where objects are either values or statements of the form $p  \rhd  \target{c}$ where $p$ is a procedure name.
\begin{align*}
    \mathtarget{M} &::= \ottsym{∅} ∣ \mathtarget{M}  \ottsym{,}  i  \mapsto  \target{o}^\target{T} 
        \tag{$ \ToyA $ memory} \\
    \target{o}^\target{T} &::= \mathtarget{v} ∣ p  \rhd  \target{c}
        \tag{$ \ToyA $ objects} \\
    \mathtarget{v} &::= i
        \tag{$ \ToyA $ values} \\
\end{align*}
Control flow in $\ToyA$ is managed by the program counter $\target{PC}$, a regular pointer into memory, and the stack is managed by the stack pointer $\target{SP}$.

Each procedure is associated with a frame, which maps variables to offsets at which the variables occur with respect to the stack pointer. Frames distinguish variables from arrays, which are also associated with their length. Procedure declarations map procedure names to their entry point as a program counter, as well as their associated stack frame. A global store is just a list of procedure declarations.
\small
\begin{align*}
    \mathtarget{F} &::= \,  \ottsym{[}  \,  \ottsym{]} ∣  \texttt{var}~ \ottmv{x} @ n  ∣  \texttt{array}~ \ottmv{x} @ n_{{\mathrm{1}}} ( n_{{\mathrm{2}}} ) 
        \tag{$ \ToyA $ frame} \\
    \mathtarget{pd} &::= p  \ottsym{(}  \target{PC}  \ottsym{)\{}  \mathtarget{F}  \ottsym{\}}
        \tag{$ \ToyA $ procedure declaration} \\
    \mathtarget{G} &::= \ottsym{\{}  \mathtarget{pd}_{{\mathrm{1}}}  \ottsym{,} \, .. \, \ottsym{,}  \mathtarget{pd}_{\ottmv{n}}  \ottsym{\}}
        \tag{$ \ToyA $ global store}
\end{align*}
\normalsize

\paragraph*{Weird machine meta-variables}
A whole program in $\ToyA$ is a configuration $\ottsym{<}  \target{PC}  \ottsym{;}  \target{SP}  \ottsym{;}  \mathtarget{M}  \ottsym{>}$ along with a global store $\mathtarget{G}$.
A component in $ \ToyA $ is a global store along with a memory fragment storing instructions. A context consists of a program counter, stack pointer, and another memory fragment that can contain both initial memory and further instructions. The two disjoint memory fragments are merged during linking.
\small
\begin{align*}
  \target{P}^\target{T} &::= \mathtarget{G}  \vdash  \ottsym{<}  \target{PC}  \ottsym{;}  \target{SP}  \ottsym{;}  \mathtarget{M}  \ottsym{>} \tag{$ \ToyA $ whole program}\\
  \target{U}^\target{T} &::= \mathtarget{G}  \vdash  \mathtarget{M} \tag{$ \ToyA $ component} \\
  \target{C}^\target{T} &::= \ottsym{<}  \target{PC}  \ottsym{;}  \target{SP}  \ottsym{;}  \mathtarget{M}  \ottsym{>} \tag{$ \ToyA $ context} \\
  &\ottsym{<}  \target{PC}  \ottsym{;}  \target{SP}  \ottsym{;}  \mathtarget{M}_{{\mathrm{0}}}  \ottsym{>}  \ottsym{[}  \mathtarget{G}  \vdash  \mathtarget{M}_{{\mathrm{1}}}  \ottsym{]} ≜ \mathtarget{G}  \vdash  \ottsym{<}  \target{PC}  \ottsym{;}  \target{SP}  \ottsym{;}  \mathtarget{M}_{{\mathrm{0}}}  \ottsym{,}  \mathtarget{M}_{{\mathrm{1}}}  \ottsym{>} \tag{$ \ToyA $ linking}
\end{align*}
\normalsize
The behavior of a whole program is the set of output traces it can produce, ending in a $ \HALT $ instruction. 

\subsection{Compiler from $ \ToyC $ to $ \ToyA $}
\label{sec:compiler-ToyC-ToyA}

The compiler from $ \ToyC $ to $ \ToyA $ maps a $ \ToyC $ command $\source{c}$ to a memory containing instruction pointers, which is merged with the $ \ToyC $ memory, which is layed out according to the $ \ToyC $ stack.

Like the $ \IMP $ compiler, the $ \ToyC $ compiler preserves the traces of whole programs:
\begin{theorem}
    Let $\source{P}^\source{S}$ be a whole program in $ \ToyC $. If $\source{t}^\source{S}  \in   \sbehav{ \source{P}^\source{S} } $ then there exists some $\target{t}^\target{T}$ such that $ \brelates{ \source{t}^\source{S} }{ \target{t}^\target{T} } $ and $\target{t}^\target{T}  \in   \tbehav{  \compile[  \source{P}^\source{S}  ]  } $.
\end{theorem}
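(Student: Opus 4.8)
The plan is to prove trace preservation by establishing a forward simulation between $\ToyC$ executions and $\ToyA$ executions and then lifting it from single steps to whole-program evaluations. First I would define a simulation relation $\mathsource{\theta} \sim \mathtarget{\theta}$ between a $\ToyC$ configuration $\langle \mathsource{F}; \mathsource{S}; \mathsource{M}\rangle$ (together with the remaining command list and the global store $\mathsource{G}$) and a $\ToyA$ configuration $\langle \target{PC}; \target{SP}; \mathtarget{M}\rangle$. This relation is built directly from the compiler's configuration-compilation judgment $\Csym{(}\mathsource{\theta}\,\Csym{\mbox{$\mid$}}\,i^{\ell}\Csym{)} \leadsto \Csym{(}\mathtarget{M}\,\Csym{\mbox{$\mid$}}\,i^{\ell}{}'\Csym{)}$ from \textsc{ToyCompiler\_config}: it requires that the flat $\ToyA$ memory agree with the compiled image of the structured $\ToyC$ stack, that $\target{PC}$ point at the $\ToyA$ instruction that is the compilation of the current $\ToyC$ command, and that every source location $\mathsource{\ell}$ appearing in $\mathsource{M}$ denote the $\target{SP}$-relative integer address to which compilation maps it.

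The core of the argument is a step-simulation lemma proved by case analysis on the $\ToyC$ reduction rules: if $\mathsource{\theta} \sim \mathtarget{\theta}$ and the $\ToyC$ program makes a step emitting a source-trace fragment $\source{t}^\source{S}$, then there is a (possibly multi-step) $\ToyA$ reduction $\mathtarget{G} \mid \mathtarget{\theta} \oversteps{\target{t}^\target{T}} \mathtarget{\theta}'$ with $\mathsource{\theta}' \sim \mathtarget{\theta}'$ and $\brelates{\source{t}^\source{S}}{\target{t}^\target{T}}$. A supporting lemma handles expressions: under related configurations, $\mathsource{\theta} \vdash \source{e} \Downarrow \mathsource{v}$ in $\ToyC$ implies $\mathtarget{G} \mid \mathtarget{\theta} \vdash \target{e} \Downarrow \mathtarget{v}$ with $\brelates{\mathsource{v}}{\mathtarget{v}}$; since expressions compile essentially to themselves and $\ToyC$ outputs are integer-typed, the trace relation is just equality on emitted integers, so the output-emitting step ($\OUTPUT{\source{e}}$) preserves the value exactly. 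The interesting cases are the control-flow constructs — which compile to conditional jumps ($\jmpz{\target{e}}{i}$) and fall-through, matching the $\ToyC$ branch rules — and procedure call and $\RETURN$, where $\sim$ must be re-established after the stack grows or shrinks.

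I would then lift the step lemma to the reflexive-transitive closure by induction on the $\ToyC$ evaluation derivation, concatenating the emitted $\ToyA$ fragments and appealing to compositionality of $\mapsto$ on traces. For the base case, the rule \textsc{ToyCompiler\_Compile} gives that the initial compiled configuration is $\sim$-related to the $\ToyC$ configuration obtained by invoking $\texttt{main}$ on the program inputs; for termination, rule \textsc{ToyCompiler\_consMain} places a $\HALT$ instruction at the end of the compiled $\texttt{main}$, so when the $\ToyC$ run reaches a terminal state the compiled run reaches $\HALT$ and the accumulated $\target{t}^\target{T}$ lies in $\tbehav{\compile[\source{P}^\source{S}]}$. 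Combining these, any $\source{t}^\source{S} \in \sbehav{\source{P}^\source{S}}$ arises from a terminating $\ToyC$ run, which the simulation maps to a matching terminating $\ToyA$ run producing $\target{t}^\target{T}$ with $\brelates{\source{t}^\source{S}}{\target{t}^\target{T}}$ and $\target{t}^\target{T} \in \tbehav{\compile[\source{P}^\source{S}]}$, as required.

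The main obstacle will be the memory-layout invariant inside $\sim$, specifically keeping the translation of $\ToyC$ locations into $\target{SP}$-relative integer addresses coherent across procedure calls and returns. In $\ToyC$ the stack is a structured list of frames with fresh symbolic locations, whereas in $\ToyA$ it is a flat region addressed by arithmetic on $\target{SP}$, and the rules \textsc{ToyA\_Proc} and \textsc{ToyA\_RETURN} push and pop saved $\target{PC}$/$\target{SP}$ slots and reallocate frames via $\Csym{allocate}$. Re-establishing that pointer values already stored in memory still denote the correct addresses after a frame boundary is where the real bookkeeping lies, so I would isolate it as a separate invariant-preservation lemma for the $\Csym{allocate}$ and frame-layout machinery, leaving the main induction only to invoke it.
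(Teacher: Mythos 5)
The paper never actually proves this theorem: it is stated bare in the appendix, with the PLT Redex implementation standing in for a proof, so there is no paper proof to compare yours against. Judged on its own terms, your proposal uses the standard and appropriate technique---a forward simulation whose invariant is generated by the compiler's configuration-compilation judgment, with an expression-evaluation lemma and a separate frame-layout/\texttt{allocate} lemma for \textsc{ToyA\_Proc} and \textsc{ToyA\_RETURN}---and your decomposition, including isolating the $\target{SP}$-relative address coherence across call/return as the hard invariant, is the right one.

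The genuine gap is the case the paper itself flags in the sentence immediately after the theorem: $\ToyC$ programs with undefined behavior (e.g.\ buffer overflows) abort in $\ToyC$ but execute successfully when compiled, which is exactly why the compiler preserves traces yet is not correct. Your argument assumes every $\source{t}^\source{S} \in \sbehav{ \source{P}^\source{S} }$ ``arises from a terminating $\ToyC$ run'' whose endpoint is matched by the compiled $\HALT$. That assumption does real work and is unjustified as stated: $\ToyC$ ``throws an error'' at undefined behavior, and if traces of such aborted runs belong to $\sbehav{ \source{P}^\source{S} }$ (as they do for $ \IMP $, where $\Downarrow$ is defined by reaching \emph{any} stuck configuration), then your lockstep simulation collapses at precisely that point---the source configuration cannot step, the target configuration can, and the only traces available in $\tbehav{ \compile[ \source{P}^\source{S} ] }$ (which must end in $\HALT$) are proper extensions of the fragment your induction has built, with nothing said about whether $ \brelates{ \source{t}^\source{S} }{ \target{t}^\target{T} } $ tolerates such extensions or how $ \source{undef} $ is related. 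To close the proof you must do one of two things: argue from the definition of $\ToyC$ behaviors that error-terminated runs contribute no traces, which would legitimize your terminating-run assumption; or add an explicit case showing that after a source abort the trace relation absorbs arbitrary target continuations, in the style of CompCert's treatment of undefined behavior. Without one of these, the induction fails on exactly the programs that distinguish this theorem from whole-program correctness.
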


The compiler is not correct for all whole programs because there are $ \ToyC $ programs that are undefined in $ \ToyC $ (such as those containing buffer overflows) but whose compiled versions do successfully execute. 

\begin{listing}
\begin{ccode}
void vulnerable (int* args, int len) {
  int[2] p;
  while (len > 0) {
    *p = *args;
    p += 1;
    args += 1;
    len -= 1;
  }
}
void store (int* p) {
    p = 42;
}
void print (int* p) {
    output *p;
}
\end{ccode}
\caption{A program vulnerable to return-oriented programming}
\label{lst:rop}
\end{listing}

\begin{listing}
\begin{ccode}
vulnerable(
  { sp0-15 /* desired stack pointer
            * on return from 'store' */
  , pc0 + 17 /* address of 'output *p' 
              * within 'print' */
  , sp0-6; /* target value of 'p' on iteration
            * that writes to 'p' */
  , 3; /* target value of 'len' on iteration 
        * that writes to 'len' */
  , sp0-15; /* desired stack pointer on return
             * from 'vulnerable' */
  , pc0+15; /* address of '*p = 42'
             * within 'store' */
  }, 6 ) /* value of len */
\end{ccode}
\caption{Attack that causes \texttt{vulnerable} to loop, printing 42 on each iteration. Here, \C{sp0} refers to the initial address of the stack pointer, and \C{pc0} refers to the initial address of the program counter}
\label{lst:ROP-attack}
\end{listing}

\subsection{Return-oriented programming}

In this section, we consider return-oriented programming attacks where, by carefully crafting the input to a program, we can manipulate the program counter and redirect it to malicious code. The attacker can provide arbitrary inputs to the $ \ToyC $ component, but cannot execute its own code.

Consider the $ \ToyC $ component in \cref{lst:rop} consisting of the procedures \texttt{vulnerable}, \texttt{store}, and \texttt{print}.
In $ \ToyC $, \texttt{vulnerable} is only defined when \texttt{len} is 0 or 1, since larger lengths would cause \texttt{p} to overflow.


The attack in \cref{lst:ROP-attack} passes an array of length 6 to \texttt{vulnerable} that
it to loop indefinitely, outputting $42$ each time. Since there is no context that results in an infinite loop according to the source semantics, this is an exploit. The example is implemented in the PLT Redex model (the program \texttt{rop-modular} in \texttt{toya.rkt}), and
we briefly sketch its execution here.

The attack causes 6 iterations of the loop in \texttt{vulnerable} to be executed. By the end of these six iterations, the function \texttt{vulnerable} will return to the program counter recorded at index $\target{SP}_{{\mathrm{0}}} - 9$, which will have been overwritten by the loop to point to the instruction $*p ≔ 42$ in \texttt{store}. Next, the computation will return from the function \texttt{store} to the program counter recorded at index $\target{SP}_{{\mathrm{0}}}-13$, which will have been overwritten to point to the instruction $\OUTPUT*{*p}$ in \texttt{print}; this will output the constant $42$. Finally, as the computation returns from the function \texttt{print}, the program counter will continue to be directed to the instruction $\OUTPUT*{*p}$, so the program will continue to loop forever, printing out $42$ at each step.

\subsection{$ \ToyH $}
\label{sec:toyh}

As the final layer in our compiler stack, we consider a hypothetical physical machine ($ \ToyH $ for toy hardware) on which toy assembly code is implemented. Programs in this machine are sequences of invocations of assembly programs. The language is similar to $ \IMP $ with booleans and integers, with the addition of a command $ ( \ottmv{x_{{\mathrm{1}}}} , \ottmv{x_{{\mathrm{2}}}} ) ≔ \texttt{get\_info}( \target{P}^\target{T} ) $ that executes the $ \ToyA $ program $\target{P}^\target{T}$ and stores both the output trace of the program and the number of steps it took to execute in the value $x$. That is, a programmer of $ \ToyA $ can execute $ \ToyA $ programs and use the timing information to make decisions.
\small
\begin{align*}
    \hardware{e} &::= \ottmv{x} ∣  \hardware{ b }   \in   \textsf{Bool}  ∣  \hardware{ i }   \in   \textsf{Int} 
                 ∣ \hardware{e}_{{\mathrm{1}}} \, \circledast \, \hardware{e}_{{\mathrm{2}}}
                 \tag{$ \ToyH $ expressions} \\
    \hardware{c} &::= \ottmv{x}  \coloneqq  \hardware{e}  ∣  ( \ottmv{x_{{\mathrm{1}}}} , \ottmv{x_{{\mathrm{2}}}} ) ≔ \texttt{get\_info}( \target{P}^\target{T} ) 
                ∣  \OUTPUT{ \hardware{e} }   \\
             &  ∣  \SKIP  ∣ \hardware{c}_{{\mathrm{1}}}  \ottsym{;}  \hardware{c}_{{\mathrm{2}}}
                ∣  \ifThenElse{ \hardware{e} }{ \hardware{c}_{{\mathrm{1}}} }{ \hardware{c}_{{\mathrm{0}}} }  \\
             &  ∣  \while{ \hardware{e} }{ \hardware{c} } 
                \tag{$ \ToyH $ commands}
\end{align*}
\normalsize
A component in $ \ToyH $ is just a component in $ \ToyA $, and an attacking context is a $ \ToyH $ program with a hole for such $ \ToyA $ components.
\small
\begin{align*}
    \hardware{P}^\hardware{H} &::= \hardware{c} \tag{$ \ToyH $ whole program} \\
    \hardware{U}^\hardware{H} &::= \target{U}^\target{T} \tag{$ \ToyH $ components} \\
    \hardware{A}^\hardware{H} &::=  ( \ottmv{x_{{\mathrm{1}}}} , \ottmv{x_{{\mathrm{2}}}} ) ≔ \texttt{get\_info}( \target{C}^\target{T} [\square]) 
            ∣ \hardware{A}^\hardware{H}  \ottsym{;}  \hardware{c} ∣ \hardware{c}  \ottsym{;}  \hardware{A}^\hardware{H} \\
           &∣  \ifThenElse{ \hardware{e} }{ \hardware{A}^\hardware{H} }{ \hardware{c} } 
            ∣  \ifThenElse{ \hardware{e} }{ \hardware{c} }{ \hardware{A}^\hardware{H} }  \\
           &∣  \while{ \hardware{e} }{ \hardware{A}^\hardware{H} } 
            \tag{$ \ToyH $ attacks}
\end{align*}
\normalsize



\subsection{Compiler from $ \ToyA $ to $ \ToyH $}

The compiler from $ \ToyA $ to $ \ToyH $ simply calls \texttt{get\_info} on the $ \ToyA $ program:
\small
\begin{align*}
     \compile[  \target{P}^\target{T}  ]^{x,y}  &≜  ( \ottmv{x} , \ottmv{y} ) ≔ \texttt{get\_info}( \target{P}^\target{T} )  \\
     \compile[  \target{U}^\target{T}  ]  &≜ \target{U}^\target{T} \\
     \compile[  \target{C}^\target{T}  ]^{x,y}  &≜  ( \ottmv{x} , \ottmv{y} ) ≔ \texttt{get\_info}( \target{C}^\target{T} [\square]) 
\end{align*}
\normalsize

Since the compiled code simply executes the $ \ToyA $ source code, it is correct for whole programs.

\subsection{Side-channel timing attack with $ \ToyH $}

We implement the side-channel attack in \cref{example:side-channel} in the PLT Redex model (the program \C{findpass} in \C{toyh.rkt}), which illustrates how to discover a password in a linear number of calls to a password checker.

\end{document}